\newif\ifdraft\draftfalse % draft = comments
\newif\ifanon\anonfalse   % anon = light double-blind reviewing (POPL)
\newif\iffull\fullfalse   % full = includes things that were cut from
\newif\iflongrefs\longrefsfalse % Long references (e.g. for journal)
\newif\ifbackref\backreffalse % backref option for hyperref
\newif\ifsooner\soonerfalse
\newif\iflater\laterfalse
\newif\ifcamera\cameratrue % Camera-ready version
\newif\ifcheckpagebudget\checkpagebudgetfalse
\newcommand{\xxx}{}
\makeatletter \@input{texdirectives.tex} \makeatother
\else\copyrightyear{2017}\fi %% If different from \acmYear 
\def\@copyrightpermission{\ifcamera\\\\\\\fi This work is licensed under a \href{https://creativecommons.org/licenses/by/4.0/}{Creative Commons Attribution 4.0 International License}}
\def\@authorsaddresses{}
  \renewcommand{\headrulewidth}{\z@}%
  \renewcommand{\footrulewidth}{\z@}%
  \renewcommand{\headrulewidth}{\z@}%
  \renewcommand{\footrulewidth}{\z@}%
\def\@mkbibcitation{}
\newcommand{\tightlabel}[2]{\makebox[1pt][l]{\raisebox{+1em}{\hspace{0.5em}(#1)}}}
\newcommand{\tightlabelup}[2]{\makebox[1pt][l]{\raisebox{+2.4em}{\hspace{0.5em}(#1)}}}
\newcommand\citepos[1]{\citeauthor{#1}'s\ \citeyear{#1}}
\definecolor{darkblue}{rgb}{0.0,0.0,0.3}
\let\ls\lstinline
\def\Snospace~{\S{}}
\newcommand\fstar{F$^\star$\xspace}
\newcommand\MSTcalculus{$\lambda_{\text{MST}}$\xspace}
\newcommand\rMSTcalculus{$\lambda_{\text{MST}_{b}}$\xspace}
\definecolor{dkblue}{rgb}{0,0.1,0.5}
\definecolor{dkgreen}{rgb}{0,0.4,0}
\definecolor{dkred}{rgb}{0.6,0,0}
\definecolor{dkpurple}{rgb}{0.7,0,1.0}
\definecolor{purple}{rgb}{0.9,0,1.0}
\definecolor{olive}{rgb}{0.4, 0.4, 0.0}
\definecolor{teal}{rgb}{0.0,0.4,0.4}
\definecolor{azure}{rgb}{0.0, 0.5, 1.0}
\definecolor{gray}{rgb}{0.5, 0.5, 0.5}
\definecolor{dkgray}{rgb}{0.3, 0.3, 0.3}
\newcommand{\comm}[3]{\ifcheckpagebudget\else\ifdraft{{\color{#1}[#2: #3]}}\fi\fi}
\newcommand{\da}[1]{\comm{dkblue}{Danel}{#1}}
\newcommand{\nik}[1]{\comm{dkpurple}{Nik}{#1}}
\newcommand{\ch}[1]{\comm{teal}{CH}{#1}}
\newcommand{\cf}[1]{\comm{dkgreen}{Cedric}{#1}}
\newcommand\statet{\ensuremath{\mathsf{state}}}
\newcommand\unit{\ensuremath{\mathsf{unit}}}
\newcommand\boolt{\ensuremath{\mathsf{bool}}}
\newcommand\true{\ensuremath{\mathsf{true}}}
\newcommand\false{\ensuremath{\mathsf{false}}}
\newcommand\rel{\ensuremath{\mathsf{rel}}}
\newcommand\mst[3]{\ensuremath{{\color{dkblue}\mathsf{MST}}~#1~(#2)~(#3)}}
\newcommand\bmst[4]{\ensuremath{{\color{dkblue}\mathsf{MST}}_{#1}~#2~(#3)~(#4)}}
\newcommand\pure[3]{\ensuremath{{\color{dkblue}\mathsf{Pure}}~#1~(#2)~(#3)}}
\newcommand\inl[1]{\ensuremath{\mathsf{inl}~#1}}
\newcommand\inr[1]{\ensuremath{\mathsf{inr}~#1}}
\newcommand\return[1]{\ensuremath{\mathsf{return}~#1}}
\newcommand\breturn[2]{\ensuremath{\mathsf{return}_{#1}~#2}}
\newcommand\bind[3]{\ensuremath{\mathsf{bind} ~ #1 \leftarrow #2 ~ \mathsf{in} ~ #3}}
\newcommand\stput[1]{\ensuremath{\mathsf{put}~#1}}
\newcommand\stget{\ensuremath{\mathsf{get}}}
\newcommand\witness[1]{\ensuremath{\mathsf{witness}~#1}}
\newcommand\recall[1]{\ensuremath{\mathsf{recall}~#1}}
\newcommand\witnessed[1]{\ensuremath{\mathsf{witnessed}~#1}}
\newcommand\stable[1]{\ensuremath{\mathsf{stable}~#1}}
\newcommand\bstput[2]{\ensuremath{\mathsf{put}_{#1}~#2}}
\newcommand\bstget[1]{\ensuremath{\mathsf{get}_{#1}}}
\newcommand\bwitness[2]{\ensuremath{\mathsf{witness}_{#1}~#2}}
\newcommand\brecall[2]{\ensuremath{\mathsf{recall}_{#1}~#2}}
\newcommand\reify[1]{\ensuremath{\mathsf{reify}~#1}}
\newcommand\reflect[1]{\ensuremath{\mathsf{reflect}~#1}}
\newcommand\coerce[1]{\ensuremath{\mathsf{coerce}~#1}}
\newcommand\pre[0]{\ensuremath{\varphi_{\text{pre}}}}
\newcommand\post[0]{\ensuremath{\varphi_{\text{post}}}}
\newcommand\pmatch[5]{\ensuremath{\mathsf{pmatch}~#1~\mathsf{as}_{#2}~(#3,#4)~\mathsf{in}~e}}
\newcommand\caseof[6]{\ensuremath{\mathsf{case}~#1~\mathsf{of}_{#2}~(\mathsf{inl}(#3)~\mathsf{in}~#4 ; \mathsf{inl}(#5)~\mathsf{in}~#6)}}
\newcommand\fun[3]{\ensuremath{\lambda #1{:}#2.~#3}}
\newcommand\wf{\ensuremath{\mathsf{wf}}}
\newcommand*{\EG}{e.g.,\xspace}
\newcommand*{\IE}{i.e.,\xspace}
\newcommand*{\ETAL}{et al.\xspace}
\newcommand\surl[1]{{\small\url{#1}}}
\newcommand\ssurl[1]{{\scriptsize\url{#1}}}
\begin{document}

\title{Recalling a Witness}
% Total Recall
%\title{You may recall your witness \textnormal{(talk proposal)}}
% NS: ``Quintessence'' is too boastful
%\subtitle{The Quintessence of Monotonic State}
\subtitle{Foundations and Applications of Monotonic State}
% \subtitle{The Theory and Practice of Monotonic State} % trying out
% \subtitle{The Principles and Practice of Monotonic State} % trying out
% \subtitle{The Foundations and Practice of Monotonic State} % trying out
%% The Monotonic Monad's Myriad Manifestations

\xxx{}

\ifanon
\author{}
\else
\author{Danel Ahman}
\affiliation{
  \ifcamera\institution{Inria}\city{Paris}\country{France}
  \else\institution{Inria Paris}\fi}
\author{C\'{e}dric Fournet}
\affiliation{\institution{Microsoft Research}
  % \ifcamera\city{Cambridge}\country{UK}\fi}
  \ifcamera\country{UK}\fi}
\author{C\u{a}t\u{a}lin Hri\c{t}cu}
\affiliation{
  \ifcamera\institution{Inria}\city{Paris}\country{France}
  \else\institution{Inria Paris}\fi}
\author{Kenji Maillard}
\affiliation{
  \ifcamera\institution{Inria and ENS}\city{Paris}\country{France}
  \else\institution{Inria Paris and ENS Paris}\fi}
\author{Aseem Rastogi}
\affiliation{\institution{Microsoft Research}
  % \ifcamera\city{Bangalore}\country{India}\fi}
  \ifcamera\country{India}\fi}
\author{Nikhil Swamy}
\affiliation{\institution{Microsoft Research}
  % \ifcamera\city{Redmond}\country{USA}\fi}
  \ifcamera\country{USA}\fi}
\makeatletter
\renewcommand{\@shortauthors}{Ahman~\ETAL}
\makeatother
\fi

\begin{abstract}
  We provide a way to ease the verification of programs
  whose state evolves monotonically.
  The main idea is that
  a property \emph{witnessed} in a prior state can be soundly \emph{recalled}
  in the current state, provided (1)~state evolves according to a
  given preorder, and (2)~the property is preserved by this preorder.
  In many scenarios, such monotonic reasoning yields concise
  modular proofs, saving the need for explicit program invariants.
  We distill our approach into the \emph{monotonic-state monad}, a
  general yet compact interface for Hoare-style reasoning
  about monotonic state in a dependently typed language.
  We prove the soundness of the monotonic-state monad and use it as a
  unified foundation for reasoning about monotonic state in the
  \fstar{} verification system.
  Based on this foundation, we build libraries for various mutable data
  structures like monotonic references and apply these libraries at
  scale to the verification of several distributed applications.
\end{abstract}

\ifcamera
%% 2012 ACM Computing Classification System (CSS) concepts
%% Generate at 'http://dl.acm.org/ccs/ccs.cfm'.
\begin{CCSXML}
<ccs2012>
<concept>
<concept_id>10003752.10003790.10011741</concept_id>
<concept_desc>Theory of computation~Hoare logic</concept_desc>
<concept_significance>500</concept_significance>
</concept>
<concept>
<concept_id>10003752.10010124.10010138.10010142</concept_id>
<concept_desc>Theory of computation~Program verification</concept_desc>
<concept_significance>500</concept_significance>
</concept>
<concept>
<concept_id>10011007.10011074.10011099.10011692</concept_id>
<concept_desc>Software and its engineering~Formal software verification</concept_desc>
<concept_significance>500</concept_significance>
</concept>
<concept>
<concept_id>10002978.10003006.10003013</concept_id>
<concept_desc>Security and privacy~Distributed systems security</concept_desc>
<concept_significance>300</concept_significance>
</concept>
</ccs2012>
\end{CCSXML}

\ccsdesc[500]{Theory of computation~Hoare logic}
\ccsdesc[500]{Theory of computation~Program verification}
\ccsdesc[500]{Software and its engineering~Formal software verification}
\ccsdesc[300]{Security and privacy~Distributed systems security}
%% End of generated code
\fi

%% Keywords
%% comma separated list
%% \keywords are mandatory in final camera-ready submission
\keywords{
  Program Verification,
  Hoare Logic,
  Modular Reasoning,
  Monotonic References,
  Monotonic-State Monad,
  Formal Foundations,
  % Distributed Applications, -- subsumed by the 2 below
  Secure File Transfer,
  State Continuity
%  Dependent Types, -- quite accidental to what we are doing
%  Stateful Invariants -- would need to somehow negate this
}

%% Note: \maketitle command must come after title commands, author
%% commands, abstract environment, Computing Classification System
%% environment and commands, and keywords command.
\maketitle

\section{Introduction}
\label{sect:introduction}

Functional programs are easier to reason about than stateful programs,
inasmuch as properties proven on pure terms are preserved by
evaluation.
In contrast, properties of imperative programs generally depend on
their evolving state, % properties that were once true need not remain so,
e.g., a counter initialized to zero may later contain a strictly
positive number, requiring properties that depend on this counter to
be revised.

To reign in the complexity of reasoning about ever-changing state,
verification can be structured using \emph{stateful invariants},
i.e., predicates capturing properties that hold in every program
state. Defining invariants and proving their preservation are the
bread and butter of verification; and many techniques and tools have
been devised to aid these tasks.
A prominent such example is separation
logic~\citep{Reynolds2002SL,ishtiaq01sep}, which offers a way to
compose invariants according to the shape of mutable data.

\da{The above is only true of Reynolds style vanilla separation logic.}
\nik{added footnote; ok?}

% \cf{was concerned ``temporal invariant'' may read as an oxymoron.}
% \ch{how about spatial invariants vs (controlling) temporal evolution?
%   or temporal constraints?}
Whereas separation logic is concerned primarily with \emph{spatial}
properties,\footnote{Recent variants of separation logic consider
resources more abstract than just heap locations. In such settings, in
addition to describing spatial properties of resources, one can encode
certain kinds of temporal properties. We compare our work to these
modern variants of separation logic in \S\ref{sec:relatedwork},
focusing here on more familiar program logics for reasoning about
heaps.}  program verification also makes use of \emph{temporal}
properties that control how the state may evolve.
For instance, consider a program with a counter that can only be increased.
Reading $n$ as its current value should allow one to conclude that its
value will remain at least $n$, irrespective of state updates. In
turn, this may be used to reason about the generation of unique
identifiers from fresh counter values.
One may hope to recover some of the ease of reasoning about pure
programs in this setting, at least for properties that are preserved
by counter increments.
Capturing this intuition formally and using it to simplify the
verification of stateful programs are the main goals of this paper.

\subsection{Stateful Invariants vs. Monotonic State and Stable Predicates}

Consider a program written against a library that operates
on a mutable set. The library provides abstract operations
to read the current value of the set and test whether an element is in
the set ($\in$).
Importantly, its only operation to mutate the state is `\ls$insert$',
which adds an element into the set.
From this signature alone, one should be able to conclude
that an element observed in the set will remain in it. However,
proving this fact is not always easy in existing program logics.
To illustrate our point, consider trying to prove the assertion
at the end of the following program:

\begin{center}
\begin{tabular}{c}
\begin{lstlisting}
insert v; complex_procedure(); assert (v $\in$ get())
\end{lstlisting}
\end{tabular}
\end{center}

\vspace{0.05cm}

\da{We need to revise this example, or at least its explanation, because 
      as pointed out by our reviewers, \ls$inv$ can be framed across 
      complex procedure even if it does inserts.}
      
\nik{Really? How so? Do you mean in the PCM-based setting?
If so, then I think the footnote I added should take care of it, I
hope.}

In a Floyd-Hoare logic, one may prove
that our code maintains a stateful invariant
on the set,
i.e., prove the Hoare triple \ls@{inv} complex_procedure() {inv}@,
where \ls@inv s@ is defined as \ls@ v $\in$ s@.
% \ch{do we really need
%   the inv notation; it saves no space and makes the text much harder
%   to read out of order.\nik{it wasn't for space that I
%   used \ls$inv$. I found it a clearer way to express in the triple and
%   in the text that \ls$inv$ is really a stateful invariant. We also use \ls$inv$ again below in the ``our solution'' part.}}
%
One may rely on separation logic to conduct this proof, for instance
by framing \ls$inv$ across \ls$complex_procedure$ provided it does not
operate on the set.
However, when \ls$complex_procedure$ also inserts elements, one must
carry \ls$inv$ through its proof and reason about the effect of these
insertions to confirm that they preserve \ls$inv$.
This quickly becomes tedious, e.g., showing that the set also
retains some other element \ls$w$ requires another adjustment to the
proof of \ls$complex_procedure$.
Besides, although this Hoare triple suffices to prove our assertion, it
does not by itself ensure that \ls$v$ remains in the mutable set
throughout execution. For example, \ls$v$ could have been temporarily
removed and then reinserted.
While detailed stateful invariants are often unavoidable, they are
needlessly expensive here: knowing that elements can only be inserted,
we would like to conclude that \ls@v $\in$ s@ is stable.
% , \emph{for free},
% if the state contains some element at some point, it will always contain that element in the future.
% \cf{was too slow?}

\paragraph{Our Solution: Monotonic State and Stable Predicates.}
Let $\leq$ be a preorder (that is, a reflexive-transitive relation)
on program states, or a fragment thereof.
We say that a program is \emph{monotonic} when its state evolves according
to the preorder (that is, we have $s_0 \leq s_1$ for any two successive states),
and that a predicate on its state is \emph{stable} when it is preserved by the preorder
(that is, when for all states $s_0$ and $s_1$, we have \ls@p $s_0$ /\ $s_0 \leq s_1$ ==>p $s_1$@).
We outline an extension of Hoare-style program logics in this setting:
we restrict any state-changing operations so that they
conform with~$\leq$, and we extend the logic and the underlying language with the following
new constructs:
\begin{enumerate}[leftmargin=0.7cm]
\item A logical capability, \ls$witnessed$, that turns a
  predicate \ls$p$ on states into a \emph{state-independent} proposition \ls$witnessed p$, expressing both that i) 
  \ls$p$ was observed in some past state and ii) it will hold in every future state, including the program's current state.

\item A weakening principle, \ls$(forall s. p s ==> q s) ==> (witnessed p ==> witnessed q)$.

\item Two actions: \ls$witness p$, to establish \ls$witnessed p$,
  given that \ls$p$ is stable and holds in the current state; and
  \ls$recall p$, to (re)establish \ls$p$ in the current state, given
  that \ls$witnessed p$ holds.
\end{enumerate}

Continuing with our example, one may pick set inclusion as the
preorder and check that the only mutating operation, \ls$insert$,
respects it.
Then, preserving (i.e., framing) stable properties across state
updates is provided by the logic whenever explicitly
requested using \ls$witness$ and \ls$recall$.
For instance, we can revise our example program as shown below
to prove the assertion:
%, in order to determine
%%from \ls@witnessed inv@
%that the assertion never fails, and that
%\ls$v$ remains in the set in every intermediate state.
\da{v remaining in the set (commented out) is independent of adding witness and recall}

\begin{center}
\begin{tabular}{c}
\begin{lstlisting}
insert v; $\color{dkred}\mathsf{witness~inv};$ complex_procedure(); $\color{dkred}\mathsf{recall~inv};$ assert (v $\in$ get())
\end{lstlisting}
\end{tabular}
\end{center}

\vspace{0.05cm}

\noindent Crucially, \ls$witness inv$ yields the postcondition
\ls$witnessed inv$, which, being a pure, \emph{state-independent} proposition,
is trivially maintained across \ls$complex_procedure$ without the need to
analyze its definition.
With \ls$recall inv$, we recover \ls$inv$ of the current state without
having to prove that it is related to the state in which \ls$inv$ was
witnessed, since this follows from the stability
of \ls$inv$ with respect to~$\subseteq$.
We could also prepend \ls$insert w$ to this code, and still would not
need to revisit the proof of \ls$complex_procedure$ to establish that
the final state contains \ls$w$---we only need to insert the
corresponding \ls@witness (fun s -> w $\in$ s)@ and
\ls@recall (fun s -> w $\in$ s)@ operations into our program.
% \ch{or simply change inv}
% \da{in a way, inserting corresponding witness and recall demonstrates modularity}

\subsection{Technical Overview and Contributions}

A point of departure for our work is \citepos{fstar-pldi13} proposal
to reason about monotonic state using
a variant of the \ls$witness$ and \ls$recall$ primitives.
Following their work, the \fstar programming language~\citep{mumon}
has embraced monotonicity in the design of many of its verification
libraries. Although these libraries have been
founded on ad hoc axioms and informal meta-arguments,
%% \cf{Even in R\fstar?}\ch{R\fstar was never \fstar, right?}\cf{All the programming examples were in \fstar, as in this paper}\nik{even in R\fstar. The metatheory in the paper doesn't cover it. Apocryphal PY proofs may have, I'm skeptical. The paper provides an example of something like an mref, but it's only for up-to-bad reasoning, not for witness/recall. So, my take is that it's a noop here.}
%
they are used extensively in several large-scale projects, including
verified efficient cryptographic libraries \cite{lowstar, haclstarccs}
% \ch{does HACL really use monotonicity? extensively?
% \nik{yes; in the aead proof we use mrefs, ref t, recalling liveness
% of regions etc.}} CH: thanks for explaining, wasn't sure whether
% AEAD was part of HACL* proper, but he just confirmed this to me :)
%
and a partially-verified implementation of TLS \cite{everest, record}.
The theory of monotonic state developed here provides a new, unifying
foundation accounting for all prior, ad hoc uses of monotonicity
in \fstar, while also serving as a general basis to the
verification of monotonic properties of imperative programs
in other systems.
Our contributions include the following main points.

\vspace{-0.025cm}

\paragraph{The Monotonic-State Monad (\autoref{sec:mst}).}
We propose \ls$MST$, the {\em monotonic-state monad}, and its encoding
within a sequential, dependently typed language like \fstar.
\ch{also ``encoding'' seems off here,
  all the interesting parts of MST are axioms, not encoded}
Its interface is simpler and more general
than~\citepos{fstar-pldi13} with just four actions for
programming and reasoning about global, monotonic state
(\ls$get$, \ls$put$, \ls$witness$, and \ls$recall$), together with a
new logical connective (\ls$witnessed$) for turning predicates on states
into state-independent propositions.
% \da{state monad vs state effect?
% \nik{i prefer monad since we have both \ls$mst$ and \ls$MST$; and effect is a bit
% vague in this context}\ch{I think monad is much better since it gives the
% right intuition, effect means algebraic effect for many people these
% days, and even for \fstar{} it's a quite overloaded term (e.g. used as
% a synonym for computation type)}}

\vspace{-0.025cm}

\paragraph{Formal Foundations of the Monotonic-State Monad (\autoref{sect:metatheory}, \autoref{sect:reify}).}
We investigate the foundations of the monotonic-state monad
by designing a sequent calculus for a first-order logic with the \ls$witnessed$
connective, proving it consistent via cut admissibility. We use this to
prove the soundness of a Hoare-style logic for a core dependently typed lambda
calculus augmented with the \ls$MST$
interface. We present this in two stages. First, we focus on an abstract
variant of \ls$MST$ and prove our Hoare logic sound in the sense of
total correctness (\autoref{sect:metatheory}).
Next, we show how to soundly reveal the representation
of \ls$MST$ computations as pure state-passing functions, while carefully ensuring
that the preorder enforced by \ls$MST$ is not violated.
This pure representation can be used to conduct relational proofs
(e.g., showing noninterference) for programs with monotonic state, and
to enable clients to safely extend interfaces based on \ls$MST$
with new, preorder-respecting actions (\autoref{sect:reify}).

\ch{Should better explain the limitations of \autoref{sect:reify},
  in preparation for a proper solution to the problem.}

\vspace{-0.025cm}

\paragraph{Typed Heaps and Three Flavors of References  (\autoref{sec:references} and \autoref{sec:mref}).}
Using \ls$MST$, we encode more convenient forms of mutable state,
including %typed, total
heaps with dynamic allocation and
deallocation. Our model supports both untyped references \ls$uref$
with strong (non-type-preserving) updates, as well as typed references
(\ls$ref t$) with only weak (type-preserving) updates.
Going further, we use \ls$MST$ to program monotonic typed references
(\ls$mref t rel$) that allow us to select a preorder for each reference separately,
%attach different preorders
%
%(\ls$rel : preorder t$) -- getting too gory about names so early
%to each reference in the heap,
with \ls$ref t$ just a special case of monotonic references with a trivial
preorder.
% \da{wouldn't total order be more precise terminology?}\ch{no
%   clue what you mean, so probably no}\cf{unnecessary here.}
As such, programmers can opt-in to monotonicity whenever they allocate
a reference, and retain the generality of non-monotonic state and
stateful invariants wherever needed.

\ifcamera
\vspace{-0.1cm}
\else
\vspace{-0.025cm}
\fi

\paragraph{Secure File-Transfer, a First Complete Example Application (\autoref{sec:action}).}
Based on monotonic references, we verify a secure
file-transfer application, illustrating several practical uses
of monotonicity: ensuring safe memory initialization; modeling
ghost distributed state as an append-only log of messages;
and the interplay between stateful invariants,
refinement types, and stable predicates. This %secure file-transfer
application provides a standalone illustration of the essential use of
monotonicity in a larger scale \fstar verification project targetting
the main streaming authenticated encryption construction used in TLS 1.3~\cite{record}.

\ifcamera
\vspace{-0.1cm}
\else
\vspace{-0.025cm}
\fi

\paragraph{Ariadne, Another Application to State Continuity (\autoref{sec:case-studies.ariadne}).}
We also develop a new case study based on a recent protocol
\cite{StrackxP16} that ensures the state continuity of
hardware-protected sub-systems (such as TPMs and SGX
enclaves) in the presence of multiple crashes and restarts.
Our proof consists of programming a ``ghost state machine'' to keep
track of the progress of the protocol. It illustrates the combination
of ghost state and monotonicity as an effective style for verifying
distributed algorithms, following the intuition that in a
well-designed protocol, the logical properties carried by every
message must be stable.  While we do not address concurrency in
this paper, this case study also illustrates how monotonic state may be
usefully composed with other computational effects, such as failures and exceptions.

\ifcamera
\vspace{0.5cm}
\else
\vspace{0.2cm}
\fi

\setlength\intextsep{-0.5pt}
\begin{wrapfigure}{r}{0\textwidth}
\begin{tabular}{|l|r|}
\hline
Example & LOC\\
\hline
Memory models & 2152\\
Array library & 544\\
File transfer & 630\\
Ariadne & 232\\
\hline
\end{tabular}
\end{wrapfigure}
In \autoref{sec:relatedwork} we discuss related work and
in \autoref{sec:conclusion} we outline future work and conclude.
Additional materials associated with this paper are available from
\url{https://fstar-lang.org/papers/monotonicity}. These include full definitions and
proofs for the formal results in \autoref{sect:metatheory} and
\autoref{sect:reify}, as well as the 
code for all examples in the paper.
The table alongside shows the number of lines of \fstar code for these
examples.
Compared to their code, the listings in the paper are edited for
clarity and sometimes omit uninteresting details.

\section{The Monotonic-State Monad, By Example}
\label{sect:monotonic-state}

Verifying stateful programs within a
dependently typed programming language and proof assistant
is attractive for the expressive power provided by dependent
types, and for the foundational manner in which a program's semantics
can be modeled.
Concretely, we develop our approach in \fstar but our ideas
should transfer to other settings, e.g., Hoare Type
Theory~\citep{nmb08htt}, which \fstar resembles, and also to
other Hoare-style program logics.
Since dependent type theory provides little by way of stateful
programming primitives, we will have to build support for stateful
programming from scratch, starting from a language of pure functions.
We focus primarily on \emph{modeling} stateful programming faithfully,
rather than efficiently \emph{implementing} imperative state. For
instance, we will represent mutable heaps as functions from natural
numbers to values at some type, although an efficient implementation
should use the mutable memory available primitively in
hardware.\iffull\footnote{Extracting efficient implementations of programs
developed in proof assistants is a separate, orthogonal topic, one
that is receiving much recent attention~\citep{let2008,dm4free,lowstar,certicoq17}.}\fi

We start by defining a simple, canonical state monad, the
basic method by which stateful programming is introduced
in \fstar (\autoref{sec:basic-state}).
We then present the monotonic-state monad, \ls$MST$, in
its simplest, yet most general form: a global state monad
parameterized by a preorder which constrains state
evolution (\autoref{sec:mst}).
We instantiate this generic version of \ls$MST$, first, to
model mutable heaps with allocation, deallocation, untyped and typed
references (\autoref{sec:references}),
and then generalize our heaps further to include per-reference preorders 
(\autoref{sec:mref}).
Throughout, we use small examples to illustrate the pervasive
applicability of monotonic state in many verification scenarios.

\subsection{\fstar: A Brief Review and a Basic State Monad}
\label{sec:basic-state}

We start with a short primer on \fstar and its syntax, showing how to
extend it with a state effect based on a simple state monad.
In particular, \fstar is a programming language with
full dependent types, refinement types, and a user-defined effect
system.
Its effect system includes an inference algorithm based on an
adaptation of Dijkstra's weakest preconditions to higher-order
programs.
Programmers specify precise pre- and post-conditions for their
programs using a notation similar to Hoare Type
Theory and \fstar checks that the inferred
specification is subsumed by the programmer's annotations.
This subsumption check is reduced to a logical validity problem
that \fstar discharges through various means, including a combination
of SMT solving and user-provided lemmas.

\paragraph*{Basic Syntax.} \fstar syntax is
roughly modeled on OCaml
(\ls$val$, \ls$let$, \ls$match$ etc.) although there are many
differences to account for the additional typing features.
Binding occurrences \ls$b$ of variables take the form \ls$x:t$, declaring
a variable \ls$x$ at type \ls$t$; or \ls$#x:t$ indicating that the
binding is for an implicit argument.
The syntax
\ls@fun (b$_1$) ... (b$_n$) -> t@ introduces a lambda abstraction, whereas
\ls@b$_1$ -> ... -> b$_n$ -> c@ is the shape of a curried function type---we
emphasize the lack of enclosing parentheses on
the \ls@b$_i$@. Refinement types are written \ls$b{t}$,
e.g., \ls$x:int{x>=0}$ is the type of non-negative integers
(\ls$nat$).
As usual, a bound variable is in scope to the right of its binding; we
omit the type in a binding when it can be inferred; and for
non-dependent function types, we omit the variable name.
For example, the type of the
pure append function on vectors is written
\ls$#a:Type -> #m:nat -> #n:nat -> vec a m -> vec a n -> vec a (m + n)$, 
with the two explicit arguments and the return type depending 
on the three implicit arguments marked with \ls$#$.

\paragraph*{Basic State Monad and Computation Types.}
Programmers can extend the core \fstar language of pure, total functions
to effectful programs, by providing monadic
representations for the effects concerned. For example, the programmer
can extend \fstar for stateful programming by defining the standard 
state monad (with the type \ls$st a = state -> a * state$)
together with two actions, \ls$get$ and \ls$put$.
% as follows:\ch{Space
%   is tight. Do we really need to show this completely standard
%   state monad at POPL? Can't we instead start with ``Given a standard
%   representation of the state monad \fstar{} derives ...''?} \da{perhaps only keep \ls$st$?}
% \begin{lstlisting}
% let st (a:Type) = state -> a * state
% let return #a (x:a) = fun s -> x, s
% let bind #a #b (f:st a) (g:a -> st b): st b = fun s0 -> let y, s1 = f s0 in g y s1
% let get: st state = fun s -> s, s
% let put (s:state): st unit = fun _ -> (), s
% \end{lstlisting}
%
Given this monad, using a construction described by~\citet{dm4free},
% \cf{cite earlier \fstar? \nik{don't think so; the dm4free paper is
%   the one that \emph{derives} the computation type}}
\fstar derives the corresponding \emph{computation type}
\ls$ST t (requires pre) (ensures post)$, describing computations
that, when run in an initial state \ls@s$_0$@ satisfying
\ls@pre s$_0$@, produce a result \ls$r:t$ and a final state \ls@s$_1$@
satisfying \ls@post s$_0$ r s$_1$@.
%
%Computation types only appear immediately to the right of a function
%arrow, since in \fstar{} the effects of a function's arguments are
%performed before calling the function.
%
On top of that, \fstar also derives a \ls$get$ and a \ls$put$ action for \ls$ST$, with the following types:

\begin{lstlisting}
val get : unit -> ST state (requires (fun _ -> True)) (ensures (fun s$_0$ s s$_1$ -> s$_0$ == s /\ s == s$_1$))
val put : s:state -> ST unit (requires (fun _ -> True)) (ensures (fun _ _ s$_1$ -> s$_1$ == s))
\end{lstlisting}
%
%
% Reflecting \fstar's call-by-value evaluation strategy for effectful
% code, computation types only appear immediately to the right of a
% function arrow.\ch{I currently think of \fstar{} as call by Tot,
%   not call by value, so function arguments are replaced by total things
%   without side-effects.\nik{I wrote cbv for **effectful** code, but that is maybe too subtle. Please feel free to rephrase to improve this.}}
% CH: tried to rephrase above
%
%\ch{This duplicates things already said in the 1st paragraph; consolidate
%  by bringing that here?}%
%The derivation of these computation types also produces a weakest
%precondition calculus, a so-called \emph{Dijkstra
%monad}~\citep{fstar-pldi13}, which \fstar uses to infer specifications
%and check user-provided annotations. For instance, assuming
%\ls$state = nat$, the function \ls$double$ below has a trivial
%precondition (we will often omit them in the rest of the paper) and a
%postcondition stating that the final state is twice the initial state.
%\cf{Illustrate what \fstar actually infers during typechecking?}
%
and, when taking
\ls$state = nat$, the \ls$double$ function below has a trivial
precondition (which we often omit) and a
postcondition stating that the final state is twice the initial state.

\begin{lstlisting}
val double : unit -> ST unit (requires (fun _ -> True)) (ensures (fun s$_0$ _ s$_1$ -> s$_1$ == 2 * s$_0$))
let double () = let x = get () in put (x + x)
\end{lstlisting}

\iffull
\paragraph{Relating Computation Types and Their Monadic Representations.}
It is useful to think of the computation type \ls$ST$ as the abstract
counterpart of \ls$st$. Whereas a term of type
\ls$st a = state -> a * state$ has direct access to the state,
computations of type \ls$ST a pre post$ operate on the state
using only the actions provided, i.e., \ls$get$ and \ls$put$.
As illustrated below, the abstraction provided by \ls$ST$ is useful
to enforce state invariants by typing.
With some care, it is also possible to convert between \ls$ST$
computations and \ls$st$ terms, e.g., to reason about stateful
computations as pure state-passing functions (\S\ref{sec:security});
and, conversely, to turn state-passing functions into stateful
computations, for example to verify code that break the invariants of
\ls$ST$ in an unobservable way (\S\ref{sect:reify}).
%
%To move back and forth between \ls$ST$ and \ls$st$,
To this end, \fstar provides two coercions: \ls$reify$ and \ls$reflect$~\citep{Filinski94}.

\begin{lstlisting}
ST.reify: (ST a pre post) -> s$_0$:state{pre s} -> r:(a * state){post s$_0$ (fst r) (snd r)}
ST.reflect: (s$_0$:state{pre s} -> r:(a * state){post s$_0$ (fst r) (snd r)}) -> ST a pre post
\end{lstlisting}

\cf{syntax details: write val, omit ST. and some parentheses? Explain
  these are primitives, with compile-time support in \fstar?}

\da{In the metatheory, reify and reflect work with pre- and postcondition indexed Pure effect.}

\nik{yep. But I didn't want to get into the Pure effect here.
FWIW, this is the style of reify we used in the relational paper. I hope
we can explain the discrepancy easily in the metatheory section}
% \ch{Alternatively we could think if the metatheory would get much
%   simpler if it used refinements. In the dm4f work Gordon initially
%   did go quite far with just refinements.}
%% -- CH: having second thoughts about this
\fi

\subsection{\ls$MST$: The Monotonic-State Monad}
\label{sec:mst}

\begin{figure}
\begin{lstlisting}
effect MST (a:Type) (requires (pre:(state -> Type))) (ensures (post:(state -> a -> state -> Type)))
(* [get ()]: A standard action to retrieve the current state *)
val get : unit -> MST state (requires (fun _ -> True)) (ensures (fun s$_0$ x s$_1$ -> s$_0$ == x /\ s$_0$ == s$_1$)) $\label{line:MST.get}$
(* [put s]: An action to evolve the state to `s`, when `s` is related to the current state by `rel` *)
val put : s:state -> MST unit (requires (fun s$_0$ -> s$_0$ `rel` s)) (ensures (fun s$_0$ _ s$_1$ -> s$_1$ == s)) $\label{line:MST.put}$
(* [stable rel p]: `p` is stable if it is invariant w.r.t `rel` *)
let stable (#a:Type) (rel:preorder a) (p: (a -> Type)) = forall x y. p x /\ x `rel` y ==> p y $\label{line:MST.stable}$
(* [witnessed p]: `p` was true in some prior program state and will remain true *)
val witnessed : (state -> Type) -> Type $\label{line:MST.witnessed}$
(* [witness p]: A logical action; if `p` is true now, and is stable w.r.t `rel`, then it remains true *)
val witness : p:(state -> Type) -> MST unit (requires (fun s$_0$ -> p s$_0$ /\ stable rel p))$\label{line:MST.witness}$
                                       (ensures (fun s$_0$ _ s$_1$ -> s$_0$ == s$_1$ /\ witnessed p))
(* [recall p]: A logical action; if `p` was witnessed in a prior state, recall that it holds now *)
val recall  : p:(state -> Type) -> MST unit (requires (fun _ -> witnessed p))$\label{line:MST.recall}$
                                     (ensures (fun s$_0$ _ s$_1$ -> s$_0$ == s$_1$ /\ p s$_1$))
(* [witnessed_weaken p q]: `witnessed` is $\mbox{\color{dkgray}{\textit{functorial}}}$ *)
val witnessed_weaken : p:_ -> q:_ -> Lemma ((forall s. p s ==> q s) ==> witnessed p ==> witnessed q)
\end{lstlisting}
  \caption{\ls$MST$: The monotonic-state monad (for \ls$rel:preorder state$)
    % \cf{Avoid red comments, moving them to some explanation text?}
    % \ch{We kind of like the comments, since they keep this self contained,
    %     and it's referenced from distant parts of the paper (section 5).
    %     Changed the color though to make them less intrusive.}
  }
\label{fig:mst}
\end{figure}

%{stable rel p}

In a nutshell, we use an abstract variant of \ls$st$ parameterized by a
preorder that restricts how the state may be updated---abstraction is
key here, since it allows us to enforce this update condition.
A preorder is simply a reflexive and transitive
relation:

\begin{lstlisting}
let preorder a = rel: (a -> a -> Type) {(forall x . x `rel` x) /\ (forall x y z. x `rel` y /\ y `rel` z ==> x `rel` z)}
\end{lstlisting}
where \ls$ x `rel` y$ is \fstar infix notation for \ls$rel x y$.
(Preorders are convenient, inasmuch as we usually do not wish to track
the actual sequence of state updates.)
Figure~\ref{fig:mst} gives the signature of the abstract
monotonic-state monad \ls$MST$, parameterized by an implicit relation \ls$rel$
of type \ls$preorder state$.
Analogously to \ls$ST$, the \ls$MST$ computation type is also indexed by a
result type \ls$a$, by a precondition on the initial state, and by a
postcondition relating the initial state, the result, and the final
state.
The \ls$bind$ and \ls$return$ for \ls$MST$ are the same as for \ls$ST$.
%and \ls$MST$ computation types are also indexed by a
%result type \ls$a$, a precondition on the initial state, and a
%postcondition relating the initial state, the result, and the final
%state.
%
The actions of \ls$MST$ are its main points of interest, although
the \ls$get$ action (line~\ref{line:MST.get}) is still unsurprising---it
simply returns the current state.

\da{There's the slight difference that the subtyping rule for MST 
   ought to provide an assumption that state evolves according to 
   rel, as in the metatheory sections, and as happens when 
   we would apply DM4F to the update monad like representation.}

\paragraph{Enforcing Monotonicity by Restricting \ls$put$.}
The \ls$put$ action requires that the new state be related to the old
state by our preorder (line~\ref{line:MST.put}). This is the main
condition to enable monotonic reasoning.
% \cf{cutting a late discussion of paper contributions}

\paragraph*{Making Use of Monotonicity with \ls$witness$ and \ls$recall$.}
The two remaining actions in \ls$MST$ have no operational
significance---they are erased after verification.
Logically, they capture the intuition that any reachable state evolves
according to \ls$rel$.
The first action, \ls$witness p$ (line~\ref{line:MST.witness}), 
turns a \emph{stable} predicate \ls$p$ valid in the current state
(\IE \ls@p s$_0$@ holds) into a state-independent, logical
capability \ls$witnessed p$.
% \da{Make precise the connection with capabilities in Fracois's paper.}%
% \ch{Not sure what connection you mean, but can't that wait for related
%   work section?}
% \da{yes it can}
%
Conversely, the second action, \ls$recall p$, turns such a capability into a property that
holds in the current state \ls@p s$_1$@. In other words, a stable
property, once witnessed to be valid, can be freely assumed to remain
valid (via \ls$recall$) irrespective of any intermediate state
updates, since each of these updates (via the precondition of
\ls$put$) must respect the preorder.

\paragraph*{A Tiny Example: Increasing Counters.} Taking \ls$state = nat$
and \ls$rel = <=$, we can use \ls$MST$ to model an increasing
counter. The \fstar code below (adapted from \S\ref{sect:introduction})
shows how to  use monotonicity to
prove the final assertion, regardless of the stateful \ls$complex_procedure$.
% (which has access to \ls$put$ and \ls$get$):
%
\begin{lstlisting}
let x = get () in witness (fun s -> x <= s); complex_procedure (); recall (fun s -> x <= s); assert (x <= get())
\end{lstlisting}

The key point here is that the proposition \ls$witnessed (fun s -> x <= s)$ 
obtained by the \ls$witness$ action is state independent and can thus be
freely transported across large pieces of code, or even unknown ones
like \ls$complex_procedure$. In \fstar{} any state independent
proposition can be freely transported using
either the typing of bind or the rule for subtyping, which plays the
same role as the rule of consequence in classical Hoare Logic.
The following valid Hoare logic rule illustrates this intuition:
\[
\inferrule*[left=(Transporting State-Independent $R$)]
 {\{P\} ~c~ \{Q\}}
 {\{\lambda s.~ P~s \wedge R\} ~c~ \{\lambda s.~ Q~s \wedge R\}}
\]

As such, \ls$MST$ provides a
%\ch{Modular is good, but local
%  seems completely off to me and I would remove it. For once, the
%  preorder itself it very global, and the witness can happen very
%  far from the recall, so I see nothing local about witness and recall.}
modular reasoning principle, which is key to scaling verification in the
face of state updates---the success of approaches
like separation logic~\cite{Reynolds2002SL} and dynamic
framing~\cite{Kassios2006} speak to the importance of modular
reasoning. Monotonicity provides another useful, modular 
principle, one that it is quite orthogonal to physical separation---in the example
above, \ls$complex_procedure$ could very well mutate the state
of its context, yet knowing that it only does so in a monotonic manner
allows \ls$x <= get()$ to be preserved across it.
\da{Again, explanation needs improvements as in PCM-based logics 
   one can frame stable properties across function calls.}

\paragraph*{Discussion: Temporarily Escaping the Preorder.}
In some programs, a state update may temporarily break our intended
monotonicity discipline.
For example, consider a mutable 2D point whose coordinates can only
be updated one at a time.
If the given preorder were to require the point to follow some particular
trajectory (e.g., \ls$x=y$), it would prevent any update to \ls$x$
or \ls$y$.

One way to accommodate such examples is to define a more sophisticated preorder
to track states where the original preorder can be temporarily broken.
For instance, if we want the state \ls$s$ to respect a
base \ls@rel$_\textsf{s}$: preorder s@, while temporarily tolerating
violations of this preorder, we could instantiate \ls$MST$
with a richer state type, \ls$t = Ok: s -> t | Tmp: s -> s -> t$,
where `\ls$Tmp snapshot actual$' represents a program state
with `\ls$actual$` as  its current value and `\ls$snapshot$` as its last value
obtained by updates that followed
\ls@rel$_\textsf{s}$@.  We lift \ls@rel$_\textsf{s}$@ to
\ls@rel$_\textsf{t}$@ so that once in \ls$Tmp$, the actual state can
evolve regardless of the preorder \ls@rel$_\textsf{s}$@ (see all the
\ls$Tmp$ cases below):
\begin{lstlisting}
let rel$_\textsf{t}$ t$_0$ t$_1$ = match t$_0$, t$_1$ with
    | Ok s$_0$, Ok s$_1$ | Ok s$_0$, Tmp s$_1$ _ | Tmp s$_0$ _, Ok s$_1$ | Tmp s$_0$ _, Tmp s$_1$ _ -> s$_0$ `rel$_\textsf{s}$` s$_1$
\end{lstlisting}
To temporarily break and then restore the base preorder rel$_\textsf{s}$,
we define the  two actions below,
checking that the current state is related to the last snapshot by \ls@rel$_\textsf{s}$@
before restoring monotonicity.

\begin{lstlisting}
val break : unit -> MST unit (requires (fun t$_0$ -> Ok? t$_0$)) (ensures (fun t$_0$ _ t$_1$ -> let Ok s = t$_0$ in t$_1$ == Tmp s s))
val restore : unit -> MST unit (requires (fun t$_0$ -> match t$_0$ with Ok _ -> False $~$| Tmp s$_0$ s$_1$ -> s$_0$ `rel$_\textsf{s}$` s$_1$))
                           (ensures (fun t$_0$ _ t$_1$ -> let Tmp _ s$_1$ = t$_0$ in t$_1$ == Ok s$_1$))
\end{lstlisting}

\ch{Should explain the ? and ?. syntax, and this is the first place ? appears}\nik{gone ... but probably used later}

%% fun exists s. snd s$_0$=Some s /\ s_rel s (fst s$_0$)))
%% %%                                     $\hspace{0.05cm}$(ensures (fun s$_0$ _ s$_1$ -> fst s$_0$=fst s$_0$ /\ snd s$_1$=None))
%% %% fst s$_0$=fst s$_1$ /\ snd s$_1$=Some (fst s$_0$)))
%% %%                                    $\hspace{0.025cm}$(ensures (fun s$_0$ _ s$_1$ -> fst s$_0$=fst s$_1$ /\ snd s$_1$=Some (fst s$_0$)))
%% %% val release_snapshot : unit -> MST unit (requires (fun s$_0$ -> exists s. snd s$_0$=Some s /\ s_rel s (fst s$_0$)))
%% %%                                     $\hspace{0.05cm}$(ensures (fun s$_0$ _ s$_1$ -> fst s$_0$=fst s$_0$ /\ snd s$_1$=None))
%% \end{lstlisting}

\iffull

Finally, in order to make programming and verification with this instance of
\ls$MST$ more convenient, we can define variants of the logical
\ls$witnessed$ capabilities, and the \ls$get$, \ls$put$, \ls$witness$, and \ls$recall$
actions that only operate on \ls$s_state$, rather than pairs of states and
snapshots (i.e., \ls$state$), e.g.,
\begin{lstlisting}
let s_witnessed (p:(s_state -> Type)) = witnessed (fun s -> match snd s with None -> p (fst s) | Some s' -> p s')
val s_witness : p:(s_state -> Type) -> MST unit (requires (fun s$_0$ -> p (fst s$_0$) /\ stable rel p /\ snd s$_0$=None))
                                           (ensures (fun s$_0$ _ s$_1$ -> s$_0$==s$_1$ /\ s_witnessed p))
val s_recall  : p:(s_state -> Type) -> MST unit (requires (fun s$_0$ -> s_witnessed p /\ snd s$_0$=None))
                                         (ensures (fun s$_0$ _ s$_1$ -> s$_0$==s$_1$ /\ p (fst s$_1$)))
\end{lstlisting}
\fi

\da{[Below:] which style of programming? which corner cases?}

\paragraph*{On the Importance of Abstraction.}
%% While this kind of programming
%% enables us to accommodate corner cases,  one needs to be careful.
%% %
%% In particular, carelessly revealing the abstraction of \ls$MST$ quickly
%% leads to unsoundness. 
Suppose one were to treat an \ls$MST a (fun _ -> True) (fun _ _ _ -> True)$ computation as a pure
state-passing function of type
\ls@mst a = s$_0$:state -> (x:a * s$_1$:state{s$_0$ `rel` s$_1$})@.
It might seem natural to work with this monadic representation of
state, but it can quickly lead to unsoundness.
To see why, notice that this representation allows the context to pick
an initial state that is not necessarily the consequence of state
updates adhering to the given preorder.
For example, in the code snippet 
below, we first observe that the program state is strictly positive, 
and then define a closure \ls$f$  
that relies on monotonicity to recall that its state, when called, 
is also positive. The precondition of \ls$f$ then requires 
the state-independent proposition 
\ls$witnessed (fun s -> s > 0)$ to be valid, which is provable 
because of the call to \ls$witness (fun s -> s > 0)$ before the 
\ls$let$-binding. More importantly, however, this state-independent 
precondition of \ls$f$ does not put any restrictions on the actual 
state values with which one can call \ls$f ()$, a pure \ls$mst$
state-passing function.
As a result, we can call \ls$f ()$ with whichever state we please, 
e.g.,  \ls$0$, causing a division by zero error at runtime.

%if we were to allow \ls$f ()$ to be treated 
%as a pure state-passing function, as in the body of the code snippet 
%below, then we can freely apply it to whatever state we 
%please, e.g., \ls$0$, causing a division by zero error at runtime.

% \ls$witness (fun s -> s > 0)$

\da{I wonder whether it would make sense to informally 
  introduce `reify` here to distinguish `f ()` being defined 
  as an MST computation and it being used as an mst function?}

\ch{Is it intentional that f is defined locally and not at the top
  level? When defined at the top level f has a
  \ls$witnessed (fun s -> s > 0)$ precondition that we could try to
  use to infer the pre-condition of its reification. Does f still have
  the same precondition when defined locally? Nope, it can be weakened
  out by subtyping, which seems like a bug not a feature!}

\begin{lstlisting}
put(get() + 1); witness (fun s -> s > 0); let f () = recall (fun s -> s > 0); 1 / get() in f () 0 (* $\longleftarrow\mbox{\color{dkred}\textbf{BROKEN!}}$ *)
\end{lstlisting}

% \da{I think for the purposes of this paper it is a better style not to call get and put in
% assertions etc., unless we explain how subtyping works.
% \nik{Is it really it really a big deal? I don't see the connection to subtyping actually. I see it as just sugar for a monadic application. \ls$assert (x <= get())$ is just \ls$let y = get() in assert (x <= y)$. Anyway, rewrote it to a variant of your nice suggestion, choosing \ls$1 / get ()$, instead of an assertion.}}

For the next several sections, we treat \ls$MST$ abstractly. We return
to this issue in \autoref{sect:reify} and carefully introduce two
coercions, \ls$reify$ and \ls$reflect$, to turn \ls$MST$ computations
into \ls$mst$ functions and back, while not compromising
monotonicity. Using \ls$reify$, we show how
to prove relational properties of \ls$MST$ computations, e.g.,
noninterference; and using \ls$reflect$, we show how to add
new actions that respect the preorder
(while potentially temporarily violating it in an unobservable way).

\subsection{Heaps and References, Both Untyped and Typed}
\label{sec:references}

The global monotonic state provided by \ls$MST$ is a useful primitive,
but for practical stateful programming we need richer forms of
state. In this section we show how to instantiate the
% \ls$state$ and \ls$rel: preorder state$
state and preorder of \ls$MST$ to model references to
mutable, heap-allocated data. Our references come in two varieties.
Untyped references (type \ls$uref$) represent transient locations in
the heap whose type may change as the program evolves, until they are
explicitly deallocated.
Typed references (type \ls$ref t$) are always live (at least observably so, given
garbage collection) and contain a \ls$t$-typed value.

Prior works on abstractly encoding mutable heaps within dependent
types, e.g., HTT~\citep{nmb08htt} and CFML~\citep{Chargueraud2011CF},
only include untyped references; while primitive and general, they are
also harder to use safely, since one must maintain explicit liveness
invariants of the form $(r \mapsto_\tau \_)$ stating that an
untyped reference contains a value of type $\tau$ in the current
state. \citepos{swierstra09thesis} shape-indexed references are more
sophisticated, but still require a proof to accompany each use of the
reference to establish that the current heap contains it.
Using our monotonic-state monad \ls$MST$, we show how to directly
account for both typed and untyped references, where typed references
are just as easy to use as in more mainstream, ML-like languages,
without the need for liveness invariants or explicit proof
accompanying each use: a value
\ls$r:ref t$ is a proof of its own well-typed, membership in the current heap.

%
%% NB: Wouter Swierstra's thesis provides an encoding for heaps and
%%     typed references. But, he relies on indexing the heap by its
%%     shape and on implicit argument inference in Agda which he
%%     admits doesn't always work. He also has to fix a universe of
%%     codes for the types storeable in his heaps,
%%     e.g., u = Nat | u -> u.
%

We model \ls$heap$ (see code below) as a map from abstract {\em
identifiers}
% addresses\ch{But
%   these are {\bf not} addresses (or locations what's actually often
%   used below currently)!  I would call these things something like
%   [abstract] [cell] identifiers, since it wouldn't make people think
%   of real memory addresses or locations, which these things are
%   definitely not.}
%
to either \ls$Unused$ or \ls$Used$ cells, together with a
counter (\ls$ctr$) tracking the next \ls$Unused$ identifier.
As in the CompCert memory model \cite{LeroyB08}, identifiers are
internally represented by natural numbers (type \ls$nat$) but these
numbers are not connected to the real addresses used by an efficient,
low-level implementation.
New cell identifiers are freshly generated by bumping the counter and are
never reused.
%
% \ch{Then something that's deallocated should be Free? I think the name
%   Free is badly chosen too, because it's linked in people's minds with
%   deallocation, while here you really talk about Unused identifiers,
%   and your identifiers are never reclaimed.}
%
\begin{lstlisting}
type tag = Typed : tag | Untyped : bool -> tag
type cell = Unused | Used : a:Type -> v:a -> tag -> cell
type heap = H : h:(nat -> cell) -> ctr:nat{forall (n:nat{ctr <= n}). h n == Unused} -> heap
\end{lstlisting}
A `\ls$Used a v tag$' cell is a triple, where \ls$a$ is a type
and \ls$v$ is a value of type \ls$a$. As such, heaps are
heterogeneous maps (potentially) storing values of different types for
each identifier.
The \ls$tag$ is either \ls$Typed$, marking a cell referred to by a
typed reference; \ls$Untyped true$, for a live, allocated untyped cell;
or \ls$Untyped false$, for a cell that was once live
but has since been deallocated. (We distinguish
\ls$Untyped false$ from \ls$Unused$ to simplify our model of freshness---a
client of our library can treat a newly allocated reference as being
distinct from all previously allocated ones.)

% \ch{Need more upfront intuition. Why can something be Used but not
%   allocated? What does Used intuitively mean? \nik{Better?}}

Using monotonicity, we now show how to define our two kinds of references.
In particular, a \emph{typed reference} \ls$ref t$ is an identifier \ls$id$ for which the heap has
been witnessed to contain a \ls$Used$, \ls$Typed$ cell of type \ls$t$.
An \emph{untyped reference} \ls$uref$, on the other hand, has a much weaker invariant: it was
witnessed to contain a \ls$Used$, \ls$Untyped$ cell that could have
since been deallocated.

\begin{lstlisting}
let has_a_t (id:nat) (t:Type) (H h _) = match h id with Used a _ Typed -> a == t | _ -> False
abstract type ref t = id:nat{witnessed (has_a_t id t)}
let has (id:nat) (H h _) = match h id with Used _ _ (Untyped _) -> True $~$| _ -> False
abstract type uref  = id:nat{witnessed (has id)}
\end{lstlisting}

To enforce these invariants on state-manipulating operations,
we define a preorder \ls$rel$ on \ls$heap$ that
constrains the heap evolution. It states that
every \ls$Used$ identifier remains \ls$Used$;
every \ls$Typed$ reference has a stable type; and
that no \ls$Untyped$ reference may be reused after deallocation.

\begin{lstlisting}
let rel (H h$_0$ _) (H h$_1$ _) = forall id. match h$_0$ id, h$_1$ id with
                            | Used a _ Typed, Used b _ Typed -> a == b
                            | Used _ _ (Untyped live$_0$), Used _ _ (Untyped live$_1$) -> not live$_0$ ==> not live$_1$
                            | _ -> False
\end{lstlisting}

% \ch{Why can't a deallocated location be reused? What's the point of
%   deallocating memory if you're not going to reuse it?  Need to
%   explain upfront that those nats are not real addresses, but just
%   some abstract identifiers (like the block identifiers in CompCert's
%   memory model) whose values are independent from actual addresses, and
%   that are never reclaimed, so you just use an increasing counter for
%   generating fresh identifiers.}
% \nik{I think part of the disconnect here is that we're describing
%      a \emph{model} of stateful programming, not an efficient
%      implementation for low-level programs.
%  \ch{Right, I made this even more explicit by calling this an {\em
%    abstract} model from the start. Should still explicitly
%    explain upfront what I wrote above though.}
%      For instance, reusing
%      deallocated locations may be nice, but if the memory is a
%      function from \ls$nat -> cell$ it's kind of already pointless
%      from an efficiency perspective. Besides, keeping around a set of
%      free locations to reuse for allocations makes everything still
%      more complicated.}
%    \ch{Right, I understood all this eventually, but the text really
%      puts readers on the wrong track by calling these things addresses
%      and not explaining upfront what they really are and how they are
%      used: abstract identifiers a la CompCert that are freshly generated
%      and never reclaimed ...}

Instantiating \ls$MST$ with \ls$state=heap$ and the preorder
above, we can implement the expected operations for allocating,
reading, writing typed references.
The \ls$alloc$ action below allocates a
new, typed reference \ls$ref t$ by generating a fresh identifier \ls$id$;
extending the heap at \ls$id$ with a new typed cell; and witnessing the
new state, ensuring that \ls$id$ will contain
a \ls$t$-typed cell for ever. Reading and writing a reference are similar: they
both \ls$recall$ the reference exists in the heap at its expected
type.

\begin{lstlisting}
let alloc #t (v:t) : MST (ref t) (ensures (fun h$_0$ id h$_1$ -> fresh id h$_0$ h$_1$ /\ modifies {} h$_0$ h$_1$ /\ h$_1$.[id] == v)) =
    let H h id = get () in put (H (upd h id (Used t v Typed)) (id + 1)); witness (has_a_t id t); id
let (!) #t (r:ref t) : MST t (ensures (fun h$_0$ v h$_1$ -> h$_0$ == h$_1$ /\ has_ref r h$_1$ /\  h$_1$.[r] == v)) =
    recall (has_ref r); let h = get () in h.[r]
let (:=) #t (r:ref t) (v:t) : MST unit (ensures (fun h$_0$ _ h$_1$ -> modifies {r} h$_0$ h$_1$ /\ has_ref r h$_1$ /\ h$_1$.[r] == v)) =
    recall (has_ref r); let H h ctr = get () in put (H (upd h r (Used t v Typed), ctr))
\end{lstlisting}
\cf{Hard to parse; let's expand it if we have some room left.}

These functions make use of a few straightforward auxiliary
definitions for freshness of identifiers (\ls$fresh$) and for the
write-footprint of a computation (\ls$modifies$). The one subtlety is
in the definition of \ls$h.[r]$, a total function to select a
reference \ls$r$ from a heap \ls$h$. Unlike the stateful (\ls$!$)
operator, \ls$h.[r]$ has a precondition requiring that \ls$h$ actually
contain \ls$r$---even though the type of \ls$r$ indicates that it has been
witnessed in some prior heap of the program, this does not
suffice to recall that it is actually present in an arbitrary
heap \ls$h$. In other words, pure functions may not
use \ls$recall$. On the other hand, the stateful lookup (\ls$!$)
is free to \ls$recall$ the membership of \ls$r$ in the current
heap in order to meet the precondition of \ls$h.[r]$.%
\footnote{In our revision to the libraries of \fstar,
  we use a more sophisticated representation of \ls$ref t$ that enables
  a variant of \ls$h.[r]$ without the \ls$has_ref h r$ precondition. This variant is convenient to use in specifications,
since its well-typedness is easier to establish.
We omit it for simplicity, but the curious reader may
consult the \ls$FStar.Heap$ library for the full story.
%(in the auxiliary material for reviewers).%
\nik{revised this}\ch{Why would one care to remove the
precondition if above it wasn't a problem? I guess it's because it
makes the specification style heavier, in which case should we mention
it here?}  \da{Correct me if I am wrong, but we can only remove the
precondition in ghost code? In order to actually define \ls$!$ above
we need a tot version of \ls$h.[r]$.\nik{right}}}

\begin{lstlisting}
let has_ref #t (r:ref t) h = has_a_t r t h
let fresh #t (r:ref t) (H h$_0$ _) (H h$_1$ _) = h$_0$ r == Unused /\ has_ref r h$_1$
let modifies (ids:set nat) (H h$_0$ _) (H h$_1$ _) = forall id. id $\not\in$ ids /\ Used? (h$_0$ id) ==> h$_0$ id == h$_1$ id
let `_.[_]` #t h (r:ref t{has_ref r h}) : t = let H h _ = h in match h r with Used _ v _ -> v
\end{lstlisting}

The operations of untyped references are essentially simpler
counterparts of \ls$alloc$, \ls$(!)$ and \ls$(:=)$ with weaker types.
The \ls$free$ operation is easily defined by replacing
an \ls$Untyped$ cell with a version marking it as deallocated. The
precondition of \ls$free$ prevents double-frees, and is necessary to
show that the preorder is preserved as we mark the cell deallocated.

\begin{lstlisting}
let live (r:uref) (H h _) = match h r with Used _ _ (Untyped live) -> live | _ -> false
let free (r:uref) : MST unit (requires (live r)) (ensures (fun h$_0$ _ h$_1$ -> modifies {r} h$_0$ h$_1$)) =
   let H h ctr = get () in put (H (upd h r (Used unit () (Untyped false))) ctr)
\end{lstlisting}

\subsection{Monotonic References}
\label{sec:mref}

Typed references \ls$ref t$ use a fixed global preorder saying
that the type of each ref-cell is invariant.
However, we would like a more flexible form, allowing the programmer to associate
a preorder of their choosing with each typed reference.
In this section, we present a library for providing a type
`\ls$mref a ra$', a typed reference to a value of type `\ls$a$' whose
contents is constrained to evolve according to a preorder `\ls$ra$' on `\ls$a$'.
Using \ls$mref$s, one can for instance encode a form of typestate
programming~\citep{SY86Typestate} by attaching to an \ls$mref$
a preorder that corresponds to the reachability relation of a state machine.

As above, our global, monotonic-state monad \ls$MST$ can be
instantiated with a suitable \ls$heap$ type for the global state (defined below)
and a preorder on the global state that is intuitively the pointwise
composition of the preorders associated with
each \ls$mref$ that the state contains. In this setting, the
type \ls$ref t$ can be reconstructed as a derived form, i.e.,
\ls$ref t = mref t (fun _ _ -> True)$

% \ch{Would it be an option to use \ls$ra$ for the relation on \ls$a$ to
%   prevent the current confusion with the global \ls$rel$?}

%\renewcommand\mypara[1]{\smallskip\textit{#1}} %NS: to save a bit of space

\paragraph{An Interface for Monotonic References.}
When allocating a monotonic reference one picks both the initial value
and the preorder constraining its evolution. An \ls$mref a ra$ can be
dereferenced unconditionally, whereas assigning to an \ls$mref a ra$
requires maintaining the preorder, analogous to the precondition
on \ls$put$ for the global state.

\begin{lstlisting}
type mref : a:Type -> ra:preorder a -> Type
val (:=) : #a:Type -> #ra:preorder a -> r:mref a ra -> v:a -> MST unit
    (requires (fun h -> h.[r] `ra` v))
    (ensures  (fun h$_0$ _ h$_1$ -> modifies {r} h$_0$ h$_1$ /\ h$_1$.[r] == v))
\end{lstlisting}

\noindent The local state analog of \ls$witness$ on the global state
allows observing a predicate \ls$p$ on the global \ls$heap$ as long as
the predicate is \ls$stable$ with respect to arbitrary heap updates
that respect the preorder only on a given reference. Using \ls$recall$
to restore a previously witnessed property remains unchanged.
\begin{lstlisting}
let stable #a #ra (r:mref a ra) (p:(heap -> Type)) = forall h$_0$ h$_1$. p h$_0$ /\ h$_0$.[r] `ra` h$_1$.[r] ==> p h$_1$
val witness : #a:Type -> #ra:preorder a -> r:mref a ra -> p:(heap -> Type){stable r p} -> MST unit
    (requires (fun h -> p h))
    (ensures  (fun h$_0$ v h$_1$ -> h$_0$==h$_1$ /\ witnessed p))
\end{lstlisting}

\paragraph{Implementing Monotonic References.} To implement \ls$mref$ we
choose the following, revised representation of \ls$heap$ and its
global preorder. We enrich the tags from \S\ref{sec:references} to
additionally record a preorder with every typed cell. Correspondingly,
the global preorder on heaps is, as mentioned earlier, the pointwise
composition of preorders on typed cells
(the \ls$cell$ and \ls$heap$ types are unchanged).

\begin{lstlisting}
type tag a = Typed : preorder a -> tag a | Untyped : bool -> tag a
type cell = Unused | Used : a:Type -> a -> tag a -> cell
type heap = H : h:(nat -> cell) -> ctr:nat{forall (n:nat{ctr <= n}). h n == Unused} -> heap
let rel (H h$_0$ _) (H h$_1$ _) = forall id. match h$_0$ id, h$_1$ id with
   | Used a$_0$ v$_0$ (Typed ra$_0$), Used a$_1$ v$_1$ (Typed ra$_1$) -> a$_0$ == a$_1$ /\ ra$_0$ == ra$_1$ /\ v$_0$ `ra$_0$` v$_1$
   | Used _ _ _ (Untyped live$_0$), Used _ _ _ (Untyped live$_1$) -> not live$_0$ ==> not live$_1$
   | _ -> False
let mref a ra = id:nat{witnessed (fun (H h _) -> match h id with Used b _ (Typed rb) -> a == b /\ ra == rb)}
\end{lstlisting}

Monotonic references are our main building block for defining more
sophisticated abstractions.
While we have shown them here in the context of a flat heap, our
libraries provide monotonic references
within \emph{hyper-heaps}~\citep{mumon}
and \emph{hyper-stacks}~\citep{lowstar}, more sophisticated,
region-based memory models used in \fstar to keep track of
object-lifetimes and to encode a weak form of separation. \ch{TODO this is
  an overstatement given the current status of our artifact!}
%
%% The next section shows monotonic references in action, using it to
%% model and verify various properties in the context a larger example, a
%% correct and secure file-transfer application.

%% \ch{also mention that this is more complicated/interesting in practice,
%%   we had to make HyperHeap and HyperStack work with this}

\section{Monotonic state in action: A secure file-transfer application}
\label{sec:action}

Consider transferring a file $f$ from a sender application $S$ to a
receiver application $R$. To ensure that the file is transferred
securely, the applications rely on a protocol $P$, which is designed
to ensure that $R$ receives exactly the file that $S$ sent (or detects
a transmission error) while no one else learns anything about $f$. For
instance, $P$ could be based on TLS, and use cryptography and
networking to achieve its goals, but its details are unimportant. Our
example addresses the following concerns:

\begin{enumerate}[leftmargin=0.55cm]
\item \emph{Low-level buffer manipulation}: For efficiency
reasons, $S$ and $R$ prepare byte buffers shared with~$P$. For
instance, the receiver $R$ allocates an uninitialized buffer and
requests $P$ to fill the buffer with the bytes it receives.
Monotonicity ensures that once memory is initialized
it remains so.\ch{Should say something about freezing too}

\cf{TODO: explain fragmentation and authenticity first? motivate
  low-level explicit memory management, meant to prevent multiple
  buffering.  Bug we would detect: authenticated chunks of the file
  overwritten before they are processed by $R$.}

\item \emph{Modeling distributed state}: To state and prove
the correctness of a distributed system, even one as simple as our
2-party file-transfer scenario, it is common to describe the state of
the system in terms of some global \emph{ghost} (i.e., purely
specificational) state. We use monotonicity to structure this ghost
state as an append-only log of messages sent so far, ensuring that an
observation of the state of the protocol remains consistent even while
the state evolves.

\item \emph{Fragmentation and authenticity}: Protocol $P$
dictates a maximum size of messages that can be sent at once. As
such, the sender has to fragment the file $f$ into several
chunks and the receiver must reconstruct the file. We use
monotonicity to show that $R$ always reads a prefix of the
stream that $S$ has sent so far, i.e., $R$ receives authentic file
fragments from $S$ in the correct order.

\item \emph{Secrecy}: Finally, we consider possible implementations
of the protocol $P$ itself and prove that under certain,
standard cryptographic hypotheses, $P$ leaks no information about the
file $f$, other than some information about its length.
% due to fragmentation.
%% This involves carefully breaking the abstraction imposed by
%% monotonicity and proving an information-flow security property
%% (noninterference with declassification) relating two runs of the $P$
%% protocol.
\end{enumerate}

%% \subsection{Safely working with uninitialized memory}
%% \label{sec:uninitialized}

%% With our basic \ls$mref$ building block in hand, we return now to our
%% secure file transfer application.\ch{secure what? that's next section
%% now}

%% Consider the following, simple C-like pseudo-code that implements the
%% top-level loop for the receiving application $R$. It allocates an
%% uninitialized buffer; repeatedly passing fragments of the buffer to
%% the protocol $P$ which fills it with a received fragment replying with
%% how many bytes were filled, until no more bytes are received.

%% %% \nik{Should fix this code up to ensure that \verb$current += num_bytes_received$ never goes out of bounds}

%% \begin{verbatim}
%% char *buffer = malloc(MAX_FILE_SIZE);
%% char *current = buffer;
%% while (1) {
%%       int num_bytes_received = P.receive(current);
%%       if (num_bytes_received == -1) break;
%%       current += num_bytes_received;
%% }
%% \end{verbatim}

%% \ch{Instead of / addition to showing this in C-like pseudo-code, how
%%   about also showing the \fstar{} implementation of this? That would
%%   close the circle in this section. Otherwise, it seems overkill to
%%   bring in this code here if it doesn't use the interface we define in
%%   this section; and which is anyway intuitive on its own.}
%% \ch{Just realized that the code of receive is showed later.
%%   Isn't that enough? Do we need this too?}

%% Our goal is to implement and verify such code within \fstar---e.g.,
%% proving that $R$ correctly reconstructs $S$'s file from the received
%% fragments, using libraries to model stateful constructs like mutable
%% arrays etc.---while relying on compilers to
%% emit efficient C code from our verified programs~\citep{lowstar}.

\subsection{Safely Using Uninitialized and Frozen Memory}
\label{sec:uninitialized}

Reasoning about safety in the presence of uninitialized memory is a
well-known problem, with many bespoke program analyses targeting it,
e.g.,~\citepos{qi09masked} masked types.
The essence of the problem involves reasoning about monotonic
state---memory is unreadable until it is initialized, and remains
\emph{readable} thereafter.
A dual problem is deeming an object no longer \emph{writable}. For
instance, after validating its contents, one may want to freeze an
object in a high-integrity state.
\cf{An interesting sub-case, applicable to file-transfer, is that the
  content is frozen once initialized. Where to discuss it? Also
  discuss gradual initialization? range-based initialization?}

Using monotonic references, we designed and implemented a verified
library for modeling the safe use of uninitialized arrays that may
eventually be frozen, including support for a limited form of pointer
arithmetic to refer to prefixes or suffixes of the array.\ch{If you
  can take both prefixes and suffixes you can construct arbitrary
  subarrays, right? If so maybe phrase it that way?}
We sketch a fragment of this library here, showing only the parts
relevant to the treatment of uninitialized memory.

The main type provided by our library is an abstract type
`\ls$array a n$' for a possibly uninitialized array indexed by its
contents type \ls{a} and length \ls{n}. An \ls{array} is implemented
under the hood by using a monotonic reference containing a \ls$seq (option a)$ and constrained by the preorder \ls{remains_init}.
\begin{lstlisting}
abstract type array (a:Type) (n:nat) = mref (repr a n) remains_init
$\mbox{\textit{where}}$ repr a n = s:seq (option a){len s == n}
$\mbox{\textit{and\;\;\;}}$ remains_init #a #n (s$_0$:repr a n) (s$_1$:repr a n) = forall (i:nat{i < n}). Some? s$_0$.(i) ==> Some? s$_1$.(i)
\end{lstlisting}
\noindent Notice the interplay between refinements types
and monotonic references. The refinement type
\ls$s:seq (option a){len s == n}$ passed to \ls$mref$
constrains the stored sequence to be of the appropriate length in
every state.
Though concise and powerful, refinements of this form can only
enforce invariants of each reachable program state {\em taken in
isolation}.
In order to constrain how the array contents can {\em evolve},
the preorder \ls$remains_init$ states that the sequence
underlying an array can evolve from \ls{s$_0$} to \ls{s$_1$} only if every
initialized index in \ls{s$_0$} remains initialized in \ls{s$_1$}.

Given this representation of \ls$array$, the rest of the
code is mostly determined.
For instance, its \ls$create$ function takes a length \ls$n$ but no initial
value for the array contents.
\begin{lstlisting}
abstract let create (a:Type) (n:nat) : ST (array a n) (ensures (fun h$_0$ x h$_1$ -> fresh x h$_0$ h$_1$ /\ modifies { } h$_0$ h$_1$)) =
  alloc (Seq.create n None) remains_init
\end{lstlisting}

The pure function \ls$as_seq h x$ enables reasoning about an
array \ls$x$ in state \ls$h$ as a sequence of optional values.
Below we use it to define which parts of an array are initialized, i.e.,
those indices at which the array contains \ls$Some$ value.
\begin{lstlisting}
let index #a #n (x:array a n) = i:nat{i < n}
let initialized #a #n (x:array a n) (i:index x) (h:heap) = Some? (as_seq h x).(i)
\end{lstlisting}

\noindent The main use of monotonicity in our library is to
observe that \ls$initialized$ is stable with respect to \ls$remains_init$, the
preorder associated with an array. As such, we can
define a state-independent proposition \ls@(x `init_at`$ $ i)@ using
the logical \ls$witnessed$ capabilities.
\begin{lstlisting}
let init_at #a #n (x:array a n) (i:index x) = witnessed (initialized x i)
\end{lstlisting}

We can now prove that writing to an array at index \ls$i$ ensures that it
becomes initialized at \ls$i$, which is a necessary precondition to read
from \ls{i}. Notice the use of \ls$witness$ when writing, to
record the fact that the index is initialized and will remain so; and
the use of \ls$recall$ when reading, to recover that the array is
initialized at \ls$i$ and so the underlying sequence contains
\ls$Some v$ at \ls$i$.
%
% (ensures  (fun h$_0$ _ h$_1$ -> modifies {x} h$_0$ h$_1$ /\ as_seq h$_1$ x == upd (as_seq h$_0$ x) i (Some v) /\ x `init_at` i)) =
\begin{lstlisting}
abstract let write (#a:Type) (#n:nat) (x:array a n) (i:index x) (v:a) : ST unit
 (ensures  (fun h$_0$ _ h$_1$ -> modifies {x} h$_0$ h$_1$ /\ (as_seq h$_1$ x).(i) == Some v /\ x `init_at` i)) =
 x := Seq.upd !x i (Some v); witness (initialized x i)
 
abstract let read (#a:Type) (#n:nat) (x:array a n) (i:index x{x `init_at` i}) : ST a
 (ensures (fun h$_0$ r h$_1$ -> modifies { } h$_0$ h$_1$ /\ Some r == (as_seq h$_0$ x).(i))) =
 recall (initialized x i); match !r.(i) with Some v -> v
\end{lstlisting}

It is worth noting that in the absence of monotonicity, 
the \ls$init_at$ predicate defined above would need
to be state-dependent, and thus carried through the subsequent stateful
functions as a stateful invariant, causing unnecessary additional 
proof obligations and bloated specifications.

%In the absence of monotonicity, the \ls$readable$ predicate would need
%to be state-dependent, and thus carried through the stateful
%functions, causing unnecessary specifications bloat and proof
%obligations.

\paragraph*{Freezing and Sub-Arrays.}
\setlength\intextsep{-0.5pt}
\begin{wrapfigure}{r}{0\textwidth}
  %\vspace{-20pt}
  %\begin{center}
  \includegraphics[width=0.25\textwidth]{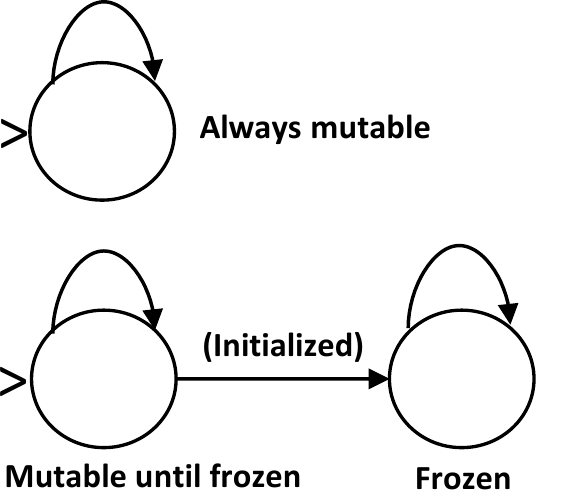}
  %\end{center}
  %\vspace{-20pt}
\end{wrapfigure}
Our full array library supports freezing an array once it is fully
initialized. We add a stateful predicate \ls{is_mutable x h}
indicating that the array \ls{x} is mutable in state \ls{h}.
By design, \ls{is_mutable} is not a stable
predicate--freezing an array explicitly revokes
mutability. While \ls{is_mutable} is a precondition of \ls{write},
freezing an initialized array provides a stable
predicate \ls{frozen_with x s}, where \ls{s:seq a} represents the
stable snapshot of the array contents--the clients can later recall
that \ls{x}'s contents still correspond to \ls{s}.
The library also supports \emph{always mutable} arrays in the same
framework, for which \ls{is_mutable} is indeed a stable predicate. Consequently,
the \ls$write$ operation on always mutable arrays has no stateful
precondition, since its argument's mutability can just be recalled.
Internally, these temporal properties of an array are managed by a
preorder capturing the state-transition system shown alongside.
%% We model freezing parts of an array by adding a stateful
%% predicate, \ls$writable_at x i h$, indicating that array \ls$x$ is
%% writable at index \ls$i$ in state \ls$h$.
%
%
%% By design, \ls$writable$ is not a stable predicate---freezing an array
%% explicitly revokes write permission.
%% %
%% As such, a variant of our library supporting partially frozen arrays
%% restricts the \ls$write$ operation to those indices that still retain
%% write permission.
%% %
%% On the other hand, when an initialized \ls$x:array n t$ is frozen,
%% although the write permission is lost, the library provides a way to
%% take a stable snapshot of the current state of the array as a
%% sequence \ls$s:seq t$ and recall later that \ls$x$'s contents still
%% correspond to \ls$s$.
%% %
Aside from these core operations, our library provides
functions to create aliases to a \ls$prefix$ or a \ls$suffix$ of an
array, while propagating information about the fragments of the array
that are initialized or frozen to the aliases.

\subsection{Modeling the Protocol's Distributed State}
\label{sec:protocol_iface}

Beyond simple safety, to prove our file-transfer application correct
and secure, we first need to model the protocol $P$.
Conceptually, we model each instance of the protocol using a \emph{ghost}
state shared between the protocol participants (in
this case, just $S$ and $R$).
The shared state contains a log of message fragments already sent by
$S$; the main invariant of the protocol is that
successful \ls$receive$ operations return fragments from a prefix of
the stream of fragments sent so far.
This style of modeling distributed systems has a long
tradition~\citep{ChandyLamport85}, and several recent program logics
and verification systems have incorporated special support for such
shared ghost state~\citep{f7,fstar-jfp13,SergeyWT18}---we just make
use of our monotonic references library for this.
\cf{Distributed algorithms are all about stable properties---one
  usually cannot rescind a message.  We may also mention this style
  for crypto modelling, in relation to prior work on \fstar e.g. at
  Oakland'17.}

\cf{State upfront what's the intended goal of the protocol: is it
  about communicating a stream of bytes, a stream of fragments, or a
  set of fragments? (Not sure which is best for file transfer,
  arguably AEAD fragments on a lossy network.) Same question in its
  implementation.}

\paragraph{Connection State.}
The state of a protocol instance
is the abstract type \ls$connection$. Its interface (shown below)
provides a function \ls$log c h$, representing the messages sent so
far on \ls$c$; and a sequence counter \ls$ctr c h$, representing the
current position in the log.\ch{Didn't understand why these things
  take heap arguments. What is it that one needs to pass there?}
Whereas the sender and receiver each maintain their own counters,
the \ls$log$ is shared, specification-only state between the two
participants.\footnote{The \ls$log$ is only needed for specification
and modeling purposes. So, in practice, we maintain the log in
computationally irrelevant state, which \fstar supports. However, we
elide this level of detail from our example here, since it is orthogonal.}

\begin{tabular}{lcr}
\begin{lstlisting}
type connection
type message = s:seq byte{len s <= fragment_size}

val log: connection -> heap -> seq message
val ctr: connection -> heap -> nat
val is_receiver: connection -> bool

let receiver = c:connection{is_receiver c}
let sender = c:connection{not (is_receiver c)}
\end{lstlisting}
&
$\qquad$
&
\raisebox{-50pt}{\includegraphics[width=0.35\textwidth]{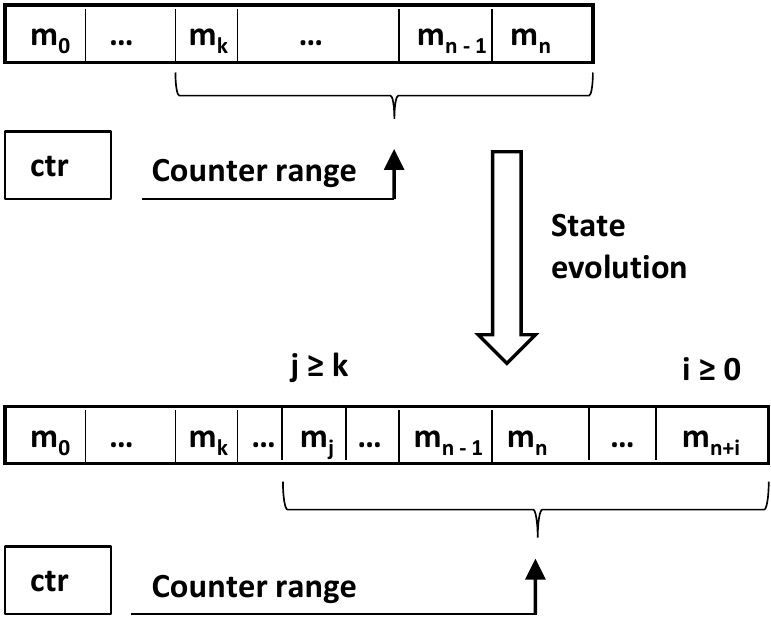}}
\end{tabular}
%% %
%% \begin{wrapfigure}{r}{-0.33\textwidth}
%%   %% \vspace{-5pt}
%%   %\begin{center}
%%   \includegraphics[width=0.32\textwidth]{protocol_sm}
%%   %\end{center}
%%   %\vspace{-20pt}
%% \end{wrapfigure}

\paragraph{Monotonic Properties of the Protocol.}
Monotonicity comes into play with the \ls$snap$ operation. A snapshot
of the distributed protocol state remains stable as the state
evolves. Obtaining such stable snapshots is a basic component of
designing and verifying distributed protocols, and our libraries make
this easy to express. In particular, the snapshotted log remains a
prefix of the log and the counter never decreases. The clients can
also recall that in any state, the counter value is bounded by the
length of the log, i.e., the valid transitions of the counter are
dependent on the log, as depicted in the figure above.
%% \cf{Intuitively, monotonicity comes into play earlier, in the design
%%   of monotonic ghost state. I'd argue for (partial) views rather than
%%   full-state snapshots. Any sample use for snap?}

\begin{lstlisting}
let snapshot c h$_0$ h = log c h$_0$ `is_a_prefix_of` log c h /\ ctr c h$_0$ <= ctr c h /\ ctr c h$_0$ <= len (log c h)
val snap: c:connection -> MST unit (ensures (fun h$_0$ _ h$_1$ -> h$_0$ == h$_1$ /\ witnessed (snapshot c h$_0$)))
val recall_counter: c:connection -> MST unit (ensures (fun h$_0$ _ h$_1$ -> h$_0$ == h$_1$ /\ ctr c h$_0$ <= len (log c h$_0$)))
\end{lstlisting}

%% Monotonicity comes into play in ensuring that the log grows
%% monotonically, and that \ls{ctr c h} is always a
\paragraph{Receiving a Message.} Receiving a message using
\ls$receive buf c$ requires \ls$buf$ to contain enough space
for a message, and that \ls{buf} and \ls{c} use disjoint state. It
ensures that only the connection and the buffer
are modified; that at most \ls{fragment_size} bytes are received
into \ls$buf$; and the bytes correspond to the message recorded in
the log at the current counter position, which is then incremented.

\begin{lstlisting}
val receive: #n:nat{fragment_size <= n} -> buf:array byte n -> c:receiver{disjoint c buf} -> MST (option nat) $\label{line:protocol.receive}$
 (ensures  (fun h$_0$ ropt h$_1$ -> match ropt with None -> h$_0$ == h$_1$
                        | Some r -> modifies {c, buf} h$_0$ h$_1$ /\ r <= fragment_size /\ modifies_array buf 0 r /\
                                  all_init (prefix buf r) /\ ctr c h$_1$ == ctr c h$_0$ + 1 /\ log c h$_1$ == log c h$_0$ /\
                                  sub_sequence (as_Seq buf h$_1$) 0 r == (log c h$_0$).(ctr c h$_0$)))
\end{lstlisting}

\paragraph{Sending a Message.} Sending a message using \ls$send buf c$,
requires \ls$buf$ to be an array of initialized bytes.\footnote{Of
course, \ls$send$ and \ls$receive$ actually send or receive messages
on the network, so they have more than just \ls$MST$ effect. However,
for simplicity here, we model IO in terms of state, as detailed
in \S\ref{sec:protocol_impl}.}
The postcondition ensures that only the connection state is modified,
at most \ls{fragment_size} bytes are sent, the \ls{ctr} is incremented
by one, and the \ls{log} is extended with the sent message.

\begin{lstlisting}
val send : #n:nat -> buf:array byte n{all_init buf} -> c:sender{disjoint c buf} -> MST nat
  (ensures (fun h$_0$ sent h$_1$ -> modifies {c} h$_0$ h$_1$ /\ sent <= min n fragment_size /\ ctr c h$_1$ == ctr c h$_0$ + 1 /\
                         log c h$_1$ == snoc (log c h$_0$) (sub_sequence (as_seq buf h$_0$) 0 sent)))
\end{lstlisting}

%% \cf{I am a bit nervous about sending from a mutable array, possibly
%%   overlapping with c. I would also prefer the caller to pass a
%%   sub-buffer below the max fragment size.}

%% \cf{No need for an option? \ls$Some 0$ subsumes \ls$None$.}
\paragraph*{Look Ma, No Stateful Invariants!} 
To reiterate the importance of monotonicity, note that a salient feature of
our protocol interface is its lack of stateful preconditions and
invariants. This means, for instance, that we may create and
interleave the use of several connections without needing to worry
about interference among the instances. In contrast, were our
interface to use stateful invariants (e.g., that the counter is always
bounded by the length of the log, among other properties), one would
need to prove that the invariant is maintained through, or framed
across, all state updates. Of course, one should not expect to always
eliminate stateful invariants; however, monotonic state, when
applicable, can greatly simplify the overhead of maintaining stateful
invariants.

\subsection{Implementing the Protocol Securely}
\label{sec:protocol_impl}

There are many possible ways to securely implement our protocol
interface. We chose perhaps the simplest option, assuming the sender
and the receiver share a private source of randomness, and use one-time pads
for perfectly secure encryption with a message authentication code
(MAC) authenticating the cipher and a sequence number.
However, more complex and broadly deployable alternatives have been
proven secure using monotonic state in \fstar, including the main
``authenticated encryption with additional data'' (AEAD)
constructions used in TLS 1.3~\citep{record}.

\ch{Should make it explicit that this provides an implementation for
  the abstract connection type from prev section}

\paragraph{Connection.} A \ls$connection$ comes in two varieties: a pair
\ls$(S rand entries)$ for the sender, or a triple \ls$(R rand entries ctr)$
for the receiver. Both sides share the same source of randomness (a
stream of random fragments) and a log of entries; the receiver
additionally has a monotonic counter to keep track of where in the
stream it is currently reading from.
Each entry in the log is itself a $4$-tuple of: an index \ls$i$ into the
randomness;
the plain text \ls$m$;
the cipher text \ls$c$, with an invariant claiming it to be computed
from \ls$m$ via the one-time pad;
and the message authentication code, \ls$mac$.
We supplement the log with an invariant that the $i$th entry in the
log has index $i$---this will be needed for proving the receiver correct.

\begin{lstlisting}
type network_message = s:seq byte{len s == fragment_size}
type randomness = nat -> network_message
type entry (rand:randomness) =
  | Entry : i:nat -> m:message -> c:network_message{pad m $\oplus$ rand i == c} -> mac:tag i c -> entry
type entries rand = s:seq (entry rand){forall (i:nat{i < len s}). let Entry j _ _ _ = s.(i) in j == i}
type connection =
  | S : rand:randomness -> entries:mref (entries rand) is_a_prefix_of -> connection
  | R : rand:randomness -> entries:mref (entries rand) is_a_prefix_of
      -> ctr:mref (n:nat{witnessed (fun h -> n <= len h.[entries])}) increasing -> connection
\end{lstlisting}

\paragraph{Logs, Counters, and Snapshots.}
The types above show several forms of interplay between monotonicity,
stable predicates, and refinement types. The
preorders \ls$`is_a_prefix_of`$ and \ls$`increasing`$ restrict how the log
and the counter evolve.
Within an \ls$Entry$, the refinement on the cipher enforces an
invariant on the data stored in the log.
Perhaps most interestingly, the type of the counter mixes a refinement
and a stable predicate: it states that the monotonic reference holding
the counter contains \ls$n:nat$ that is guaranteed to be bounded by
the number of entries in the current state. This combination encodes a
form of monotonic dependence among multiple mutable locations, allowing
the preorder of the counter to evolve as the log evolves, i.e., a log
update allows the counter to be advanced further along its increasing
preorder.

With our type definitions in place, we can implement the signature
from the previous subsection.
%% In the function \ls$snap$,
%% \ls$ctr$ provides a refienement that witnesses an upper
%% bound on the size of the counter; the \ls$increasing$ preorder allows
%% witnessing the current value of the counter as its lower bound in the
%% future; and the preorder \ls$is_a_prefix_of$ allows witnessing the
%% current state of the log as prefix of all future logs.\aseem{TODO:
%% update the text with revised implementation of snap.}

\begin{lstlisting}
let entries_of (S _ es | R _ es _) = es
let log c h = Seq.map (fun (Entry _ msg _ _) -> msg) h.[entries_of c] (* the plain texts sent so far *)
let ctr c h = match c with S _ es -> len h.[es] | R _ _ ctr -> h.[ctr] (* write at end; read from a prefix *)
let recall_counter c = match c with S _ _ -> () | R _ es ctr -> let n = !ctr in recall (fun h -> n <= len h.[es])
let snap c = let h$_0$ = ST.get () in recall_counter c; witness (snapshot c h$_0$)
\end{lstlisting}

\paragraph*{Encrypting and Sending a Message Fragment.} To implement
\ls$send buf c$\iffull (below)\fi, we take a message of size at
most `\ls{min n fragment_size}' from \ls$buf$\iffull
(line~\ref{line:send.frag})\fi, pad it if needed, encrypt it using
the current key from \ls$rand$\iffull
(line~\ref{line:send.encrypt})\fi, and compute the MAC using the
cipher and the current counter.
We then add a new entry to the log, preparing to send the
cipher and the MAC, and then call \ls$network_send c cipher mac$, which
expects the \ls$cipher$ and \ls{mac} to be the same as in the
last log entry.
% and actually sends them on the network. -- we don't do network

\iffull
\begin{lstlisting}[numbers=right]
let send #n buf c =
    let S rand entries = c in
    let k = min fragment_size n in
    let frag = Array.as_seq (prefix buf k) in $\label{line:send.frag}$
    let i = len !entries in
    let cipher = pad frag $\oplus$ rand i in $\label{line:send.encrypt}$
    entries := snoc !entries (Entry i frag cipher (alloc false));
    network_send c cipher; k
\end{lstlisting}
\fi

\paragraph*{Receiving, Authenticating and Decrypting a Message.}
The receiver's counter indicates the sequence number $i$ of the next
expected fragment. To receive it, we parse the bytes received from the
network into a pair of a cipher and a MAC. We then authenticate
that this is indeed the $i$th message sent using \ls$mac_verify$,
which, based on a cryptographic model of the MAC, guarantees
that the log contains an appropriate entry for the cipher, sequence
number, and MAC.
We then decipher the message using the one-time pad, and the invariant
on the entries (together with properties about \ls$pad$, \ls$unpad$,
and $\oplus$) guarantees that the received message \ls$msg$ is
indeed recorded in the log of entries. Before returning the number of
bytes received, we must increment the counter. Manipulating the
counter's combination of refinement and stable predicate requires a bit
of care: we have to first witness that the new value of the counter
will remain bounded by length of the log.
\begin{lstlisting}
let receive #n buf c = let R rand entries ctr = c in let i = !ctr in
  match parse (network_receive c) with
  | None -> None
  | Some (cipher, mac) ->
    if mac_verify cipher mac i entries then (* guarantees that (Entry i _ cipher mac _) is in entries *)
      let msg = unpad (cipher $\oplus$ rand i) in Array.fill buf msg;
      witness (fun h -> i + 1 <= len h.[entries]);  ctr := i + 1; Some (len msg)
    else None
\end{lstlisting}

\subsection{Correctness, Authenticity, and Secrecy of File Transfer}
\label{sec:toplevel}

Using this protocol, we program and verify the top-level
applications for sending and receiving entire files. Our goal is to
show that the sender application $S$
successfully uses the protocol to fragment and send a file and that
the receiver reconstructs exactly that file, i.e., file transfer is
correct and authentic. Additionally, we prove that file transfer is
confidential, i.e., under a suitable cryptographic model, a
network adversary gains no information about the transferred file.

We first specify what it means to send a file correctly in terms of
the abstractions provided by the underlying protocol. We
use `\ls$sent_bytes f c from to$' to indicate that the protocol's log
contains exactly the contents of the file \ls$f$ starting at
position \ls$from$ until \ls{to}. The stable property \ls{sent f c}
then expresses that the file \ls$f$ was sent on connection \ls$c$ at some point
in the past.

\begin{lstlisting}
let sent_bytes (file:seq byte) (c:connection) (from:nat) (to:nat{from <= to}) (h:heap) =
  let log = log c h in to <= len log /\ file == flatten (sub_sequence log from to)
let sent (file:seq byte) (c:connection) = $\exists$ from to. witnessed (sent_bytes file from to)
\end{lstlisting}

The sender's top-level function calls into the protocol repeatedly
sending message-sized chunks of the file, until no more bytes
remain. Its specification ensures the file was indeed sent.
\begin{lstlisting}
let send_file (#n:nat) (file:array byte n{all_init file}) (c:sender{disjoint c file}) : MST unit
  (ensures  (fun h$_0$ _ h$_1$ -> modifies {c} h$_0$ h$_1$ /\ sent (as_seq file h$_0$) c))
  = let rec aux (from:nat) (pos:nat{pos <= n}) : MST unit
     (requires (fun h$_0$ -> from <= ctr c h$_0$ /\ sent_bytes (sub_seq (as_seq file h$_0$) 0 pos) c from (ctr c h$_0$) h$_0$))
     (ensures  (fun h$_0$ _ h$_1$ -> modifies {c} h$_0$ h$_1$ /\ from <= ctr c h$_1$ /\
                           sent_bytes (as_seq file h$_0$) c from (ctr c h$_1$) h$_1$))
      = if pos <> n then let sub_file = suffix file pos in let sent = send sub_file c in aux from (pos + sent) in
    let h$_0$ = ST.get () in aux (ctr c h$_0$) 0;
    let h$_1$ = ST.get () in witness (sent_bytes (as_seq file h$_0$) c (ctr c h$_0$) (ctr c h$_1$))
\end{lstlisting}
    %% let rec aux (position:index file) : ST bool
    %%     (requires (sent_file (as_seq (prefix file position) h$_0$) c from))
    %%     (ensures  (fun h0 ok h1 -> modifies {c} h0 h1 /\ (ok ==> sent_file (as_seq file h$_0$) c from h1))) =
    %%    match Protocol.send (suffix file position) c with
    %%    | None -> false
    %%    | Some k -> if position + k >= n then (snap c; true) else aux (position + k)
    %% in aux 0

The receiver's top-level application is dual and similar to the
sender. In the receiver's function `\ls$receive_file file c$', \ls$file$ starts off
as a potentially uninitialized buffer, but the postcondition
of \ls$receive_file$ guarantees that on a successful run,
the \ls$file$ is partially filled with messages from a file that was
previously sent on the same connection, i.e., file transfer is
authentic.
\begin{lstlisting}
let received (#n:nat) (file:array byte n) (c:receiver) (h:heap) = file `initialized_in` h /\ sent (as_seq file h) c
val receive_file (#n:nat) (file:array byte n) (c:receiver{disjoint c file}) : MST (option nat)
   (ensures (fun h$_0$ ropt h$_1$ -> modifies {file, c} h$_0$ h$_1$ /\ (match ropt with None -> True
                                                  | Some r -> r <= n /\ received (prefix file r) c h$_1$)))
\end{lstlisting}

\paragraph*{Confidentiality.} To prove that the file transfer is confidential,
we relate the ciphers and tags sent on the network by two runs of the
sender application and prove that for arbitrary files that contain the
same number of messages, the network traffic is (probabilistically)
indistinguishable. As such, our file-transfer application is only
partially length hiding---the adversary learns lower and upper bounds
on the size of the file based on the number of messages it contains.
Our proof relies on a form of probabilistic
coupling~\cite{lindvall2002lectures}, relating the randomness used in
two runs of the sender by a bijection chosen to mask the differences
between the two files.
\iffull
We show the statement of this theorem below:
the proof technique used is independent of monotonicity and, as such,
is orthogonal to this paper.
\else
The proof technique used is independent of monotonicity and, as such,
is orthogonal to this paper.
\fi

\iffull
\begin{lstlisting}
val partially_length_hiding:
    (c$_0$ c$_1$:connection{sender c$_0$ /\ sender c$_1$}) #n (file$_0$:iarray n byte) #m (file$_1$:iarray m byte) (h:heap)
    : Lemma
    (let S rand$_0$ _, S rand$_1$ _ = c$_0$, c$_1$ in
     initialized file$_0$ h /\ initialized file$_1$ h /\ num_fragments n = num_fragments m /\
     let f0, f1 = as_seq file$_0$ h, as_seq file$_1$ h in
     (forall (i:nat{i<num_fragments n}) rand$_1$ i = rand$_0$ i $\oplus$ pad (get_frag i) f0 $\oplus$ pad (get_frag i f1)) /\
     sent_file file$_0$ c$_0$ 0 h /\ sent_file file$_1$ c$_1$ 0 h ==>
     ciphers c$_0$ 0 h (num_fragments n) == ciphers c$_1$ 0 h (num_fragments m))
\end{lstlisting}
\fi

\iffull
\subsection{Combining Stateful Invariants and Stable Predicates for Duplex Connections}
\label{sec:case-studies.file-transfer}

Whenever possible, we find relying on monotonicity and stable
predicates to be advantageous, however we do not expect to completely
eschew stateful invariants.
Just using stable predicates to reason about multiple pieces of state
each constrained to evolve by its own preorder may not always be
feasible, although it is sometimes (cf. the mixture of refinement
types and witnessed predicates).
In such cases, we fall back on stateful invariants, simplifying them,
where possible, using monotonicity.
In this section, we briefly discuss generalizing the file-transfer
application towards verifying connections providing two-way, secure
channels, such as those provided by TLS.

Each duplex connection $c$ contains many stateful elements, including
a pair of a sender $c.s$ and a receiver $c.r$, with separate logs of
entries recording traffic in each direction of the stream.
Connections have a notion of peering: $c$ and $\bar{c}$ are peers
where $\bar{c}.s$ shares the log of entries with $c.r$ and vice versa.
Additionally, connections contain many other pieces of state, e.g.,
various internal state machines that control how keys are negotiated.

\newcommand\maywrite[1]{\ensuremath{\mbox{\textit{mayWrite}}~#1}}
\newcommand\conninv[1]{\ensuremath{\mbox{\emph{connInv}}~#1}}

We define a notion of ownership (or write footprint) for a
connection, \maywrite{c}, which defines a fragment of the state that
is logically owned by $c$; and a stateful invariant
\conninv{c~h} which depends only on the
fragment of $h$ reachable from \maywrite{c}.
Every operation manipulating a connection $c$ expects and restores the
invariant \conninv{c}, while mutating only \maywrite{c}.

Importantly, state which operations on $c$ may only read is excluded
from \conninv{c}.
\begin{wrapfigure}{r}{0.43\textwidth}
\vspace{-3mm}
$\nqquad\nqquad\nqquad\nqquad$\includegraphics[scale=0.18]{peers}
\vspace{-5mm}
\end{wrapfigure}
Instead, we structure our proof so that the only
properties that $c$ relies on for state that it does not own are
stable with respect to all monotonic updates on that state.
For example, an invariant of the protocol requires the receiver's
counter in $c.r$ to be bounded by the length of the log of $c.r$'s
peer, i.e., $\bar{c}.s$. As seen in \S\ref{sec:protocol_impl}, this is
important to ensure that $c$ reads from a prefix of the stream sent so
far by $\bar{c}$. Rather than including this property in \conninv{c},
we show that if $\bar{c}$ only appends messages to its own state
$\bar{c}.s$, then $c$ can ensure that its counter always remains within
bounds of $\bar{c}.s$, even in the face of arbitrary state changes.

With this structure in place, knowing that two connections $c$ and
$c'$ have disjoint \maywrite{}\!-sets, an operation on $c$ does not
disturb \conninv{c'}, since $c'$ may, at most, rely monotonically on
$c$'s state. As such, we see monotonicity and stable predicates as yet
another tool in a program verification toolbox, which can, with some
creativity, be combined profitably with other, more familiar tools to
simplify the specification and proof of stateful program properties.
\fi

\section{Ariadne: State continuity vs. hardware crashes}
\label{sec:case-studies.ariadne}

We present a second case study of monotonic state at work, this time
verifying a protocol whose very purpose is to ensure a form of
monotonicity.
We may improve the resilience of systems confronted by unexpected
failures (e.g., power loss) by having them persist their state
periodically, and by restoring their state upon recovery.
In this context, \emph{state continuity} ensures that the state is
restored from a recent backup, consistent with the observable effects
of the system, rather than from a stale or fake backup that an
attacker could attempt to replay or to forge.

For concreteness, we consider state continuity for Intel SGX
enclaves. Such enclaves rely on a special CPU mode to protect some
well-identified piece of code from the rest of the platform, notably
from its host operating system. As long as the CPU is powered, the
hardware automatically encrypts and authenticates every memory access,
which ensures state confidentiality and integrity for the protected
computation. For secure databases and many other applications, these
guarantees must extend across platform crashes so as to ensure, for instance, 
that any commit that has been reported as complete (\EG by witnessing
its result) remains committed once the system resumes.

Several recent papers address this problem~\cite{ParnoLDMM11,matetic17rote,StrackxP16}.
In this section, we present a first mechanized proof of
\citeauthor{StrackxP16}'s \emph{Ariadne} protocol using \fstar,
relying on our libraries for monotonic state---the proof essentially
consists of supplementing a hardware counter with a ghost state
machine to track the recovery process (naturally expressed as a
preorder) and then typechecking the recovery code.
This example also
illustrates the fairly natural combination of
monotonic state with other effects like exceptions.

\paragraph{The Ariadne Protocol.}
The protocol relies on a single, hardware-based, non-volatile
monotonic counter. Incrementing the counter is a privileged but
unreliable operation. (It is also costly, hence the need to minimize
increments.)  If the operation returns, then the counter is incremented; if it
crashes, the counter may or may not have been incremented.
We model this behavior using \ls$MSTExn$, a combination of
the \ls$MST$ monad and exceptions.
The result of an \ls$MSTExn$ computation
is either a normal result \ls$V v$ or an exception \ls$E e$.
In practice, our
protocol code never throws or catches exceptions---these
operations are available to the context only to model malicious
hosts.

\begin{lstlisting}
val incr: c:counter -> MSTExn unit
 (ensures fun h$_0$ r h$_1$ -> if r=V() then ctr c h$_1$ == ctr c h$_0$ + 1 else (ctr c h$_1$ == ctr c h$_0$ \/ ctr c h$_1$ == ctr c h$_0$ + 1))
\end{lstlisting}

The protocol also relies on a persistent but untrusted backup,
modelled as a host function \ls$save$.
%modelled as a postulated (assumed) host function \ls$save$.
%
Before saving the backup, the current state is encrypted and authenticated
together with a sequence number corresponding to the anticipated next
value of the counter.
The implementation of this authenticated encryption construction
(using the \ls$auth_encrypt$ function below) is similar to what is
presented in the \S\ref{sec:protocol_impl}; it makes use of
a hardware-based key available only to the protected code.
Conversely, the host presents an encrypted backup for recovery.
Recovery from the last backup always succeeds, although it may be
delayed by further crashes. Recovery from any other backup may fail,
but must not break state-continuity; we focus on verifying this safety 
property of Ariadne.

Below we give the code for creating an Ariadne-protected reference
cell and the two main operations for using it: \ls`store` and \ls`recover`.
Their pre- and postconditions are given later, but at a high level, \ls`store` requires
a good state and \ls`recover` does not.
(There is no separate \ls$load$ from a good state, since
the enclave can keep the state in volatile memory once it completes recovery.)
For simplicity, we model hardware protection using a private datatype
constructor: `\ls$Protect c k$' has type \ls$protected$ (defined
shortly) and packages the enclave capabilities to use the monotonic
counter~\ls$c$ and backup key~\ls$k$.
The \ls$recover$ function also takes as argument a (purportedly)
\ls$last_saved$ backup, which is first authenticated and decrypted: if
the decryption yields \ls $(m:nat,w:state)$ and \ls$m$ matches the
current counter \ls$!c$, then recovery continues with
state \ls$w$; otherwise it returns \ls$None$.

 \cf{We could cut \ls$create$ and compress the code, at some cost in
   readability.}\ch{Saved a line, should still be readable. Let's keep create.}
\da{The \fstar code has a call to witness in create, and no use of the 
assumed save function in this text.}
\begin{lstlisting}
let create (w:state) = let c = ref 0 in let k = keygen c in save (auth_encrypt k 0 w); Protect c k
let store (Protect c k) (w:state) = save (auth_encrypt k (!c+1) w); (*1*) incr c
let recover (Protect c k) last_saved =
  match auth_decrypt k !c last_saved with | None -> None | Some (w:state) ->
  save (auth_encrypt k (!c+1) w); (*2*) incr c; (*3*) save (auth_encrypt k (!c+1) w); (*4*) incr c; Some w
\end{lstlisting}

In this code, any particular call to \ls$save$ or \ls$incr$ may fail.
To update the state, `\ls$store$' first encrypts and saves a backup of
its new state (associated with the anticipated next value of the counter)
and then increments the counter.
This ordering of \ls$save$ and \ls$incr$ is necessary to enable recovery if a crash occurs at
point \ls$(*1*)$ between these two operations, but may provide the host
with several valid backups for recovery.
For instance, a malicious host may obtain encryptions of both \ls$v$ and \ls$w$ at
both \ls$!c$ and \ls$!c+1$ by causing failure at \ls$(*1*)$; recovering from
the older backup; causing failure at \ls$(*4*)$; then recovering and
causing failure at \ls$(*2*)$.
This is fine as long as the recovery process eventually commits to
either \ls$v$ or \ls$w$ before returning it.
To this end, recovery actually performs \emph{two} successive counter
increments to clear up any ambiguity, and to ensure there is a unique
backup at the current counter and no backup at the next counter.
(On the other hand, completing recovery after its first increment
would break continuity: with backups \emph{at the current counter} for
both \ls$v$ and \ls$w$, the host has ``forked'' the enclave and can
indefinitely get updated backups for both computations.)
To capture this argument on the intermediate steps of the
protocol,
% \ch{what informal argument? didn't
%   explain why two increments clear any ambiguity; still didn't get
%   that; explanations from the call: prevents forking the computation,
%   multiple encryptions, makes sure that you clear the state, commit to
%   c+2 at the beginning of the recovery, it's complicated.}\cf{``Clear
%   up any ambiguity'' is my informal argument! I added a bit but I am
%   not sure how far to push it here.}
we supplement the real state of the counter (modeled here as an integer) with one
out of 4 ghost cases, listed below.
% The extended state of the counter now ranges over $\nat * case$; it is updated as the counter is incremented.
\begin{lstlisting}
type case =
  | Ok:        saved:state -> case (* clean state at the end of store or recover *)
  | Recover:    read:state -> other:state -> case (* at step (3) above *)
  | Writing: written:state -> old:state -> case (* at steps (1) and (4) above *)
  | Crash:      read:state -> other:state -> case (* worst case at step (2) of recovery, outlined above *)
\end{lstlisting}

% \cf{Only for the full paper?}
% \begin{lstlisting}
% let in_between g x y = match g with
%                      | Ok r -> r=x /\ r=y
%                      | Recover w v | Writing w v | Crash w v -> (w=x /\ v=y) \/ (w=y /\ v=x)
% let ghost h1 e = snd (h1.[e])
% \end{lstlisting}

\ch{Would a better name for \ls$in_between$ be \ls$one_of$?}\cf{I slightly prefer in between}

Next, we give the specification of the protocol code, with
`\ls$ghost$' selecting the current ghost \ls$case$ of a counter, and
\ls@g `in_between`$ $ (v,w)@ stating that the ghost case \ls$g$ holds at
most \ls$v$ and \ls$w$.
\begin{lstlisting}
val create: w:state -> MSTExn protected (ensures fun h$_0$ r h$_1$ -> V? r ==> ghost h$_1$ (V?.v r) == Ok v)

val store: p:protected -> w:state -> MSTExn unit (requires fun h$_0$ -> Ok? (ghost h$_0$ p))
  (ensures fun h$_0$ r h$_1$ -> match r with
                     | V () -> ghost h$_1$ p == Ok w
                     | E _ -> ghost h$_1$ p `in_between` (Ok?.saved (ghost h$_0$ p),w))
                     
val recover: p:protected -> backup (Protect?.c p) -> MSTExn (option state)
 (ensures fun h$_0$ r h$_1$ -> exists v w. ghost h$_0$ p `in_between` (v,w) /\ match r with
                                                     $\hspace{1.8ex}$  | V (Some u) -> ghost h$_1$ p == Ok w /\ u == w
                                                     $\hspace{1.8ex}$  | _  $\hspace{8.9ex}$  -> ghost h$_1$ p `in_between` (v,w))
\end{lstlisting}

To verify this specification, we introduce a `\ls$saved$' predicate that
controls, in any given counter state (\ls$n,g$), which backups
(\ls$seqn,u$) may have been encrypted and saved, and thus may be
presented by the host for recovery.
We use this predicate both to define our preorder on counter states, and to refine
the type \ls$nat*state$ of general authenticated-encrypted values so as to define the type 
\ls$backup c$ of 
authenticated, encrypted, and saved backups associated with a given counter state \ls$c$. 
%a relationship we wish to recall using monotonicity 
%just after authenticated decryption as
%we verify the code of \ls$recover$. 
%the type of authenticated-encrypted values (\ls$backup c$, attached to the enclave
%key \ls$k$) with the property we wish to recall just after authenticated decryption as
%we verify the code of \ls$recover$.
%\ch{What's the intuition between the \ls$forall m u$ quantification
%  and the implication in the definition of saved?}
\begin{lstlisting}
let saved (n:nat,g:case) (seqn:nat,u:state) = (* overapproximating what may have been encrypted and saved *)
  (seqn < n) \/ (* an old state; authentication of `seqn` will fail, so nothing to say about it *)
  (seqn == n$\hspace{2ex}$   /\ (match g with | Ok v -> u == v
                              $\hspace{-0.06cm}$| Recover w v | Writing w v | Crash w v -> u == w \/ u == v)) \/
  (seqn == n+1 /\ (match g with | Ok _ | Recover _ _ -> False
                              $\hspace{-0.015cm}$| Writing v _ -> u == v
                              $\hspace{-0.015cm}$| Crash v w -> u == v \/ u == w))

let preorder (n$_0$,g$_0$) (n$_1$,g$_1$) = forall s. saved (n$_0$,g$_0$) s ==> saved (n$_1$,g$_1$) s
let saved_backup c s h = h `contains` c /\ saved (h.[c]) s
type backup c = s:(nat*state){witnessed (saved_backup c s)}

(* append$\text{-}$only ghost logs used to model the history of saving backups; we attach one to every backup key `k` *)
type log c = mref (list (backup c)) (fun l$_0$ l$_1$ -> l$_0$ `prefix_of` l$_1$) 
type protected = Protect: c:mref (nat*case) preorder -> k:key c -> protected
\end{lstlisting}

Our method for verifying the protocol is to use monotonicity to capture 
the history of saving backups, by augmenting 
the \ls$save$ function with \ls$witness (saved_backup c s)$
for the given authenticated-encrypted backup \ls$s$, and 
the \ls$recover$ function with \ls$recall (saved_backup c last_saved)$ just after 
authenticated decryption of the (purportedly) \ls$last_saved$ backup the host 
provides for recovery. This allows us to 
recover at which stage of the protocol \ls$last_saved$ might
have been created. We also instrument key points in
the code with erasable modeling functions that operate on the ghost counter  
state. Typechecking enforces that all updates respect the
preorder that ties the ghost state to the actual counter value.
For instance, \ls$incr$ has two ghost transitions:
\ls@(n, Crash w v) $\leadsto$ @\linebreak[2]\ls@(n+1, Recover w v)@
for typing call  \ls$(*2*)$ above,
and
\ls@(n, Writing w v)$\leadsto$ (n+1, Ok w)@
for typing calls \ls$(*1*)$ and \ls$(*4*)$.
The other transitions update just the ghost case of the counter state.
For instance, the only transition that introduces a new \ls$w:state$ is
\ls@(n,Ok v) $\leadsto$ (n,Writing w v)@,
used in the \ls$store$ function before the call to encrypt the new backup.
This yields a concise \fstar proof of the safety %correctness and security
of Ariadne's state continuity, possibly simpler than the original paper
proof of \citeauthor{StrackxP16}.

\ch{Is this the thing we could draw? In general giving more intuition
  could help understanding this section.}

\cf{For simplicity, we do not consider the functional proof, showing
  that recovery from the last save either crashes or returns sone
  recovered state.}

\cf{Commented out discussion on how we may use the code above to
  conduct enclave-level proofs involving monotonicity.}

% \paragraph{Implementing continuity for \fstar's monotonic state (half-baked)}

% Can we use the Ariadne protocol to implement persistent, monotonic state?

% % We may model volatile state in two ways:
% %
% % - if memory is allocated as the protected code starts, and never
% % accessed again after a crash, one may safely use mrefs.
% %
% % - conversely, if the memory is meant to persist between crashes,
% % then the host should be given the capability to reset it to a known,
% % initial state. This renders monotonic reasoning useless.

% \begin{lstlisting}
% (* we should give f a richer interface with witness and recall *)
% let spec p (f: state -> input -> state * output) s0 =
%   let v = malloc s0 p in
%   (fun x ->
%      let s0 = !v in
%      let s1, x1 = f !v x0 in
%      v:= s1; x1)
% let impl p (f: state -> input -> state * output) s0 =
%   let c = create v0 in
%   (fun x last_saved ->
%      match recover c last_saved with
%      | None -> None
%      | Some s0 ->
%      let s1, x1 = f s0 x0 in
%      store c s1; x1)
% \end{lstlisting}
% % Conversely, one can argue that restarts break mrefs naively
% % implemented in RAM, by resuming the computation in a state where a
% % witnessed property does not yet

% Concrete example: the protected code should never encrypt (or sign)
% twice in the same state; this can be guaranteed by checkpointing the
% state update before encryption.

\section{Meta-theory of the abstract monotonic-state monad}
\label{sect:metatheory}

While the applications of monotonic state are many, diverse, and
sometimes quite complex, in the previous sections we showed how they can all
be reduced to our general monotonic-state monad from \autoref{sec:mst}
(\ls$MST$ in \autoref{fig:mst} on page \pageref{fig:mst}).
In this section and the next, we put reasoning with this
monotonic-state monad on solid foundations by presenting its
meta-theory in two stages.

We first present a calculus called \MSTcalculus to validate
the soundness of \ls$witness$ and \ls$recall$ in a setting in
which \ls$MST$ is treated abstractly.
%
% (see \ls$MST$ in \autoref{fig:mst})
%
In fact, \MSTcalculus is a fragment of another calculus,
\rMSTcalculus, which we study in the next section (\autoref{sect:reify}),
and that extends \MSTcalculus with support for safe, controlled monadic
reflection and reification for revealing the representation
of \ls$MST$.
Both calculi are designed to only capture the essence of reasoning
with the monotonic-state monad---thus they do not include
other advanced features of dependently typed languages, e.g., full
dependency, refinement types, inductives, and universes.
The complete definitions and proofs are available in the supplementary
materials at \url{https://fstar-lang.org/papers/monotonicity}

%As a result, these calculi can be used as the basis of extending other
%(non-dependently typed) languages and program logics with monotonic state.

%In this section, we study \MSTcalculus so as to first validate the soundness
%of \ls$witness$ and \ls$recall$ in a setting in which the monotonic state monad
%is treated abstractly (see \autoref{fig:mst}). Later, in \autoref{sect:reify}, we extend
%\MSTcalculus to \rMSTcalculus that additionally supports controlled
%monadic reflection and reification.

% \ch{It would help a bit if we had some reasonable names for these two calculi}
% \da{\MSTcalculus and \rMSTcalculus ?}\ch{Sounds good to me}

\ch{There is quite a bit of standard stuff coming up, so might want to
  explain upfront that there is something insightful coming up
  eventually to keep people motivated when reading through this}

% \nik{Don't we have just one full calculus and are just presenting it
%   in stages? See the contributions section.}\ch{that point of view is
%   maybe fine for the contributions sections, but here we're formal,
%   and formally presenting one calculus in stages for us means having two calculi.}

% \da{For clarity, and to focus on witness-recall, the meta-theory of this section
% does not include boolean indices and the corresponding complications to the
% typing rules, etc.}

\subsection{\MSTcalculus: A Dependently Typed Lambda Calculus with Support for \ls$witness$ and \ls$recall$}
\label{sec:syntaxofcalculus}

The syntax of \MSTcalculus is inspired by Levy's
fine-grain call-by-value language~\citep{LevyCBPV} in that it makes a
clear distinction between values and computations.
%and it makes the evaluation order explicit.
This set-up enables us to focus on stateful computations and
validating their correctness, in particular
for $\mathsf{witness}$ and $\mathsf{recall}$ actions.

%, it makes the
%evaluation order explicit, and it does not include complex values
%(i.e., pattern-matching, case analysis, and function application into
%values).\ch{complex values? what are you talking about? and no, the
%  parenthesis doesn't help; it makes this even more confusing, since
%  none of the things you list are normally values; I think what you mean
%  is that we have some ANF-like restriction on most expressions, to only
%  take values instead of other expressions}
%  \da{complex values in the terminology of CBPV, i.e., Tot terms
%  of the form \ls$pmatch v as (x1,x2) v'$}
%%
%This set-up enables us to focus on state-manipulating computations and
%validating their soundness, in particular,
%the $\mathsf{witness}$ and $\mathsf{recall}$ actions.

First, \emph{value types} $t$ and \emph{computation types} $C$ are given by
the following grammar:
%
%Syntactically, \MSTcalculus is inspired by Levy's
%fine-grain call-by-value language~\citep{LevyCBPV} in that it makes a
%clear distinction between values and computations, and in that it makes the
%evaluation order explicit.
%%
%For the calculus of this section, value types $t$ and computation
%types $C$ are defined by the following grammar, and are mostly standard:
%%
\[
\begin{array}{l c l}
t & ::= & \statet
~ \mid ~ \unit
~ \mid ~ t_1 \times t_2
~ \mid ~ t_1 +\, t_2
~ \mid ~ x{:}t -> C
\end{array}
\qquad\hspace{-0.15cm}
\begin{array}{l c l}
C & ::= & \mst t {s.~ \varphi_{\text{pre}}}
                 {s~x~s'.~\varphi_{\text{post}}}
\end{array}
\]
Here, $\statet$ is a fixed abstract type of states, and $x{:}t -> C$ is a value-dependent
function type, where the computation type $C$ depends on values of type $t$.
Similarly to Hoare Type Theory
and \fstar{} surface syntax, \MSTcalculus computation types
$\mst t {s.~ \varphi_{\text{pre}}} {s~x~s'.~\varphi_{\text{post}}}$
are indexed by pre- and postconditions:
%
%The most noteworthy parts are:
%(1)~the fixed abstract type of states,
%(2)~the value-dependent function type $x{:}t -> C$, where $t$ is a value type
%and $C$ is a computation type that depends on values of type $t$, and
%(3)~the
%computation types classifying effectful expressions in the monotonic state monad.
%%
%As in Hoare Type Theory \citep{nmb08htt} and the \fstar{} surface
%syntax \citep{dm4free}, the computation types
%$\mst t {s.~ \varphi_{\text{pre}}} {s~x~s'.~\varphi_{\text{post}}}$
%are indexed by pre- and postconditions.
%
the former are given over initial state
$s$, and the latter over
initial state $s$, return value $x$, and final state $s'$.
As a convention, we use variables $s$, $s'$, $s''$, \ldots{} to stand for states.
% \nik{can we use $s, s_0, s_1$ etc. instead? multiple primes get to
%      be quite hard to read}
%  \da{we only go to two primes, which is visually quite ok}

% \da{What do people think about the presentation of computation types? Should we
% keep with this Hoare triple style, or should we use \fstar style
% \ls$MST$ $t ~ (x_s . \varphi_{\text{pre}}) ~ (x_s.x_v.x'_s.\varphi_{\text{post}})$?}
% \ch{No strong feelings between the two. I hate the dots, and they seem
%   a bit less intrusive for the triples.}
% CH: got rid of some dots and tentatively switched to MST

%Values are unsurprising, including a constant $\sigma_c$ guaranteeing that
%the state type is inhabited.
%\ch{Can we easily explain why is this
%  needed?  And why do we go just for an inhabited set but as far as I can
%  tell a singleton set as opposed to an interesting set of states, more
%  than 2, maybe infinite, $\sigma_0, \sigma_1,..., \sigma_n, ...$?}
%  \da{My (informal) intention was that $\sigma_c$ ranged over some unspecified
%  state-valued constants, be it finite or infinite number of them.}
%%
%We use $\sigma$, $\sigma'$, \ETC, to stand for values of type $\statet$.
%%

Next, \emph{value terms} $v$ and \emph{computation terms} $e$ are given by the following grammar:
\[
\begin{array}{l c l}
v,\sigma & ::= & x
 ~ \mid ~ c \in S
 ~ \mid ~ ()
 ~ \mid ~ (v_1,v_2)
 ~ \mid ~ \inl v
 ~ \mid ~ \inr v
 ~ \mid ~ \fun x t e
 \\[1mm]
 e & ::= & \return v
 ~ \mid ~ \bind{x}{e_1}{e_2}
 ~ \mid ~ v_1 ~ v_2
\\
& \mid & \pmatch v x {x_1} {x_2} e
 ~ \hspace{0.135cm} \mid ~
 \caseof v x {x_1} {e_1} {x_2} {e_2}
\\
& \mid & \stget
 ~ \mid ~ \stput \sigma
 ~ \mid ~ \witness {s . \varphi}
 ~ \mid ~ \recall {s . \varphi}
\end{array}
\]

Value terms are mostly standard, with $S$ denoting a fixed non-empty set of
$\statet$-valued constant symbols.%
\iffull
\footnote{We require the set $S$ to be non-empty so that we can
uniformly define translations between a natural deduction and
sequent calculus style presentations of the logic of pre- and
postconditions (\autoref{sec:inversion}), so as to establish
proof-theoretic properties of the logical \ls$witnessed$-capabilities.
In fact, if $S$ were allowed to be empty, the correctness
results we establish for our two calculi would hold vacuously, without
needing to relate the natural deduction and sequent calculus style
presentations of the logic.}
\fi{}
As a convention, we use $\sigma$, $\sigma'$, \ldots{} to
stand for value terms of type $\statet$.
%
%As noted earlier, for simplicity value terms do not include complex values such as
%pattern-matching.\ch{Still makes no sense}

%\ch{A matter of taste, but I generally find the syntax for
%  computations awkward. Why not just use bind instead of
%  as for instance? Why can't pmatch and case use
%  \fstar syntax? Anyway, it's not easy to change in the txt files so
% \ch{Actually no, I've started improving notation compared to the text file,
%  so the two notations are anyway going to differ.}

Computation terms include returning values ($\return v$), sequential composition (${\bind{x\!}{\!e_1}{e_2}}$),
function applications ($v_1~v_2$), and
pattern matching for products ($\mathsf{pmatch}$) and sums
($\mathsf{case}$). In addition, computations include $\mathsf{get}$ and $\mathsf{put}$
actions for accessing the state and, finally,
$\mathsf{witness}$ and $\mathsf{recall}$ actions
parameterized by a first-order logic predicate ${s. \varphi}$
(where the variable $s$ is bound in $\varphi$).

%Computations $e$ include $\return v$ and $\bind{x}{e_1}{e_2}$,
%function application ($v_1~v_2$),
%pattern matching for pairs ($\mathsf{pmatch}$) and sums
%($\mathsf{case}$), the $\mathsf{get}$ and $\mathsf{put}$
%actions for accessing the current state, and finally,
%the $\mathsf{witness}$ and $\mathsf{recall}$ actions
%parameterized by a first-order logic predicate ${s. \varphi}$
%(the variable $s$ is bound in the formula $\varphi$).
%\[
%\begin{array}{l c l}
%e & ::= & \return v
% ~ \mid ~ \bind{x}{e_1}{e_2}
% ~ \mid ~ v_1 ~ v_2
%\\
%& \mid & \pmatch v x {x_1} {x_2} e
%\\
%& \mid & \caseof v x {x_1} {e_1} {x_2} {e_2}
%\\
%& \mid & \stget
% ~ \mid ~ \stput v
% ~ \mid ~ \witness {s . \varphi}
% ~ \mid ~ \recall {s . \varphi}
%\end{array}
%\]

Formulas ($\varphi$) used for the pre- and postconditions,
as well as in $\mathsf{witness}$ and $\mathsf{recall}$,
are drawn from a classical first-order logic
extended with (1) a fixed preorder on states (\rel),
(2) typed equality on value terms ($==$),
and (3) logical capabilities stating
that a predicate was true in some prior program state and can be recalled
in any reachable state ($\witnessed  {s. \varphi}$). The notation $s. \varphi$ defines a \emph{predicate}
on states by binding the variable $s$ of type $\statet$ in the formula $\varphi$.
\[
\begin{array}{l c l}
p & ::=  & \rel ~ \mid ~  ==\\
\varphi & ::= & p ~ \vec{v}
          ~ \mid ~ \top
          ~ \mid ~ \varphi_1 \wedge \varphi_2
          ~ \mid ~ \bot
          ~ \mid ~ \varphi_1 \vee \varphi_2
          ~ \mid ~ \varphi_1 => \varphi_2
          ~ \mid ~ \forall x \!:\! t .~ \varphi
          ~ \mid ~ \exists x \!:\! t .~ \varphi
          ~ \mid ~ \witnessed  {s . \varphi}
\end{array}
\]

We take formulas to be in first-order logic (as opposed to
\ls$Type$-valued functions in~\fstar) in order to
keep the meta-theory simple and focused on stateful computations.
Furthermore, this approach enables us to easily establish proof-theoretic
properties of the logical \ls$witnessed$-capabilities via a
corresponding sequent calculus presentation (see
\autoref{sec:inversion}). We use a classical logic to be
faithful to \fstar{}'s SMT-logic of pre- and postconditions shown in
our examples.

\subsection{Static Semantics of \MSTcalculus}

\ch{If space is tight I would leave out all the explanations of super
  boring wf judgments. In fact initially I had dropped those altogether
  from all rules.}

We define the type system of \MSTcalculus using judgments of
well-formed contexts (${|- \Gamma ~ \wf}$), 
well-formed logical formulas (${\Gamma |- \varphi ~ \wf}$), 
well-formed value types (${\Gamma |- t ~ \wf}$),
well-typed value terms (${\Gamma |- v : t}$), well-formed computation types
(${\Gamma |- C ~ \wf}$), and well-typed computation terms (${\Gamma |- e : C}$).
The rules defining the first five judgments are completely standard---we thus omit most of them and only give the formation rule for computation types:
\[\setnamespace{0pt}\setpremisesend{5pt}
\begin{array}{c}

\inferrule*[left=(Ty-MST)]
{
  \Gamma |- t ~ \wf \quad
  \Gamma, s{:}\statet |- \pre ~ \wf \quad
  \Gamma, s{:}\statet, x{:}t, s'{:}\statet |- \post ~ \wf
}
{
  \Gamma |- \mst{t}{s.~\pre}{s~x~s'.~\post} ~ \wf
}
\end{array}
\]
%
%\nik{We could trim this and just show return app and bind, saying the rest are identical to \autoref{fig:mst}} \da{this requires the reader to jump back to the beginning of
%the paper, i'd prefer if we keep the rules here}
The typing rules for computation terms are more interesting---we present
a selection of them in \autoref{fig:first-typing}.
%Typing values ($\Gamma |- v : t$) is completely standard, while
%typing computations ($\Gamma |- e : C$) is more interesting and
%selected rules are listed in \autoref{fig:first-typing}.
The rules for the
$\mathsf{put}$, $\mathsf{witness}$, and $\mathsf{recall}$ actions  correspond
directly to the typed interface from \autoref{fig:mst} on page
\pageref{fig:mst}.
%:
%%
%The $\mathsf{put}$ action is only allowed if the new state is
%related to the current state by the preorder $\rel$.
%%
%The $\witness {s. \varphi}$ action requires that the predicate
%${s. \varphi}$ is stable with respect to $\rel$ and also that it holds for the
%current state; it then provides the logical $\witnessed {s. \varphi}$ capability
%in its postcondition.
%%
%Dually, the $\recall {s. \varphi}$ action is only allowed if given the
%logical $\witnessed {s. \varphi}$ capability; it then recovers, \emph{for free},
%that  the predicate ${s. \varphi}$ holds for the new current state.
%
%This is exactly the precondition of $\recall {s. \varphi}$, which can
%later recover that the predicate ${s. \varphi}$ holds for the new
%current state.
%
Similarly to \fstar{}, \MSTcalculus supports subtyping for value types
(${\Gamma |- t <: t'}$) and computation types (${\Gamma |- C <: C'}$).
%
% It is worth noting that -- this is just bad writing, should normally be removed
Subtyping computation types (rule {\sc SubMST} from
\autoref{fig:first-typing}) is covariant in the postconditions, and
contravariant in the preconditions.
Moreover, logical entailment between the postconditions is proved assuming
that the preconditions hold, and that the initial and final states are
related by $\rel$ (recall that the state is only allowed to evolve according to $\rel$).

\begin{figure}

\[\setnamespace{0pt}\setpremisesend{5pt}
\hspace{-0.2cm}
\begin{array}{c}

\inferrule*[Left=\tightlabel{T-Return}]
{
  \Gamma |- v : t
}
{
  \Gamma |- \return v :
    \mst {t} {s . \top}
             {s~x~s'.~ s == s' \wedge x == v}
}

\hspace{0.5cm}

\inferrule*[left=(T-App)]
{
  \Gamma |- v_1 : x{:}t -> C \qquad
  \Gamma |- v_2 : t
}
{
  \Gamma |- v_1~v_2 : C[v_2/x]
}

\\[0.6em]

\inferrule*[lab=(T-Bind)]
{
  \Gamma |- e_1 :
    \mst {t_1} {s . \pre}
              {s~x_1~s'.~ \post}
  \quad
  \Gamma, x_1{:}t_1 |- e_2 :
    \mst {t_2} {s'.~ \exists s{:}\statet.~ \post}
               {s'~x_2~s''.~ \post'}
}
{
  \Gamma |- \bind{x_1}{e_1}{e_2} :
    \mst {t_2} {s. \pre}
               {s~x_2~s''.~ \exists x{:}t_1.~\exists s'{:}\statet.~
                              \post \wedge \post'}
}

\\[0.6em]

\inferrule*[Left=\tightlabel{T-Get}]
{
  |- \Gamma~\wf
}
{
  \Gamma |- \stget :
    \mst {\statet} {s . \top}
                   {s~x~s'.~ s == x == s'}
}

\hspace{0.2cm}

\inferrule*[Left=\tightlabel{T-Put}]
{
  \Gamma |- \sigma : \statet
}
{
  \Gamma |- \stput \sigma :
    \mst {\unit} {s.~ \rel~s~\sigma}
                   {s~x~s'.~ s' == \sigma}
}

\\[0.7em]

\inferrule*[Left=\tightlabel{T-Witness}]
{
  \hspace{2cm}
  \Gamma,s{:}\statet |- \varphi ~ \wf
  \qquad
  \stable{s.\varphi} =_\text{def} \forall s'.~\forall s''.~  \rel~s'~s'' \wedge \varphi[s'/s] ==> \varphi[s''/s]
}
{
  \Gamma |- \witness{s.\varphi} :
    \mst {\unit} {s'.~ \stable{s.\varphi} \wedge \varphi[s'/s]}
                 {s'~x~s''.~ s' == s'' \wedge  \witnessed{s.\varphi}}
}

\\[0.7em]

\inferrule*[Left=\tightlabel{T-Recall}]
{
  \Gamma,s{:}\statet |- \varphi ~ \wf
}
{
  \Gamma |- \recall{s.\varphi} :
    \mst {\unit} {s'.~ \witnessed{s.\varphi}}
                 {s'~x~s''.~ s' == s'' \wedge \varphi[s''/s]}
}

\\[0.7em]

\inferrule*[lab=(T-SubComp)]
{
  \Gamma |- e : C \quad
  \Gamma |- C <: C'
}
{
  \Gamma |- e : C'
}

\hspace{0.5cm}

\inferrule*[Left=\tightlabelup{Sub-MST}]
{
  \Gamma |- t <: t' \qquad
  \Gamma, s{:}\statet ~|~ \pre' |- \pre\\\\
  \Gamma, s{:}\statet, x{:}t, s'{:}\statet ~|~ \pre', \rel~s~s', \post |- \post'
}
{
  \Gamma |- \mst{t}{s.~\pre}{s~x~s'.~\post} <: \mst{t'}{s.~\pre'}{s~x~s'.~\post'}
}

\end{array}
\]
\caption{Selected typing and subtyping rules for \MSTcalculus}
\label{fig:first-typing}
\end{figure}

%\begin{figure}
%
%\[\setnamespace{0pt}\setpremisesend{5pt}
%\begin{array}{c}
%
%\inferrule*[lab=(T-SubComp)]
%{
%  \Gamma |- e : C \quad
%  \Gamma |- C <: C'
%}
%{
%  \Gamma |- e : C'
%}
%
%\hspace{1cm}
%
%\inferrule*[Left=\tightlabelup{Sub-MST}]
%{
%  \Gamma |- t <: t' \qquad
%  \Gamma, s{:}\statet ~|~ \pre' |- \pre\\\\
%  \Gamma, s{:}\statet, x{:}t, s'{:}\statet ~|~ \pre', \rel~s~s', \post |- \post'
%}
%{
%  \Gamma |- \mst{t}{s.~\pre}{s~x~s'.~\post} <: \mst{t'}{s.~\pre'}{s~x~s'.~\post'}
%}
%\end{array}
%\]
%
%\caption{Subsumption and subtyping rules for \MSTcalculus's computation types}
%\label{fig:first-sub}
%\end{figure}

We define logical entailment for formulas $\varphi$ using a natural deduction system
($\Gamma ~|~ \Phi |- \varphi$, where $\Phi$ is a finite set of
formulas). Most rules are standard for classical first-order logic, so
\autoref{fig:logic} only lists the rules that are specific to our setting.
These include weakening for $\mathsf{witnessed}$, reflexivity and transport for
equality, reflexivity and transitivity for $\rel$, and rules about the equality of pair and
sum values. In more expressive logics, the latter rules are commonly derivable.

\begin{figure}

\[\setnamespace{0pt}\setpremisesend{5pt}
\hspace{-0.1cm}
\begin{array}{c}

\inferrule*[lab=(Witnessed-Weaken)]
{
  \Gamma, s{:}\statet ~|~ \Phi |- \varphi ==> \varphi'
  \quad
  s \not\in FV(\Phi)
}
{
  \Gamma ~|~ \Phi |- (\witnessed{s.\varphi}) ==> (\witnessed{s.\varphi'})
}

\hspace{0.3cm}

\inferrule*[lab=(Eq-Refl)]
{
  \Gamma |- v : t
}
{
  \Gamma ~|~ \Phi |- v == v
}

\hspace{0.3cm}

\inferrule*[lab=(Eq-Transport)]
{
  \Gamma ~|~ \Phi |- v == v'
  \quad
  \Gamma ~|~ \Phi |- \varphi[v/x]
}
{
  \Gamma ~|~ \Phi |- \varphi[v'/x]
}

\\[0.6em]

\inferrule*[lab=(Rel-Refl)]
{
  \Gamma |- \sigma : \statet
}
{
  \Gamma ~|~ \Phi |- \rel~\sigma~\sigma
}

\hspace{0.5cm}

\inferrule*[lab=(Rel-Trans)]
{
  \Gamma ~|~ \Phi |- \rel~\sigma_1~\sigma_2
  \\\\
  \Gamma ~|~ \Phi |- \rel~\sigma_2~\sigma_3
}
{
  \Gamma ~|~ \Phi |- \rel~\sigma_1~\sigma_3
}

\hspace{0.5cm}

\inferrule*[lab=(Sum-Disjoint)]
{
  \Gamma ~|~ \Phi |- \inl{v_1} == \inr{v_2}
}
{
  \Gamma ~|~ \Phi |- \bot
}

\hspace{0.5cm}

\inferrule*[lab=(Pair-Injective)]
{
  \Gamma ~|~ \Phi |- (v_1,v_2) == (v'_1,v'_2)
}
{
  \Gamma ~|~ \Phi |- v_i == v'_i
}

%\\[0.6em]
%
%\inferrule*[lab=(Pair-Injective2)]
%{
%  \Gamma ~|~ \Phi |- (v_1,v_2) == (v_3,v_4)
%}
%{
%  \Gamma ~|~ \Phi |- v_2 == v_4
%}

\end{array}
\]

\caption{Natural deduction (selected rules)}
\label{fig:logic}
\end{figure}

\subsection{Instrumented Operational Semantics of \MSTcalculus}
\label{sec:first-opsem}

We equip \MSTcalculus with a small-step operational semantics that we
instrument with a log of witnessed stable properties, which we use as
additional logical assumptions in the correctness theorem we prove in
\autoref{sec:correctness}, so as to accommodate the $\mathsf{witness}$
and $\mathsf{recall}$ actions, and their typing.
%\ch{I guess this talks about proofs in next section,
%  could be made clearer}
%
Formally, we define the reduction relation $\leadsto$
on \emph{configurations} $(e,\sigma,W)$,
where $e$ is the computation term being reduced,
$\sigma$ is the current state value,
and $W$ is a finite set logging the stable predicates $s.\varphi$
witnessed so far.
Most reduction rules are standard, so we only list the rules for
actions below:
\[
\begin{array}{rrcl}
\textsc{\small(Get)} & (\stget,\sigma,W) & \leadsto & (\return{\sigma},\sigma,W) \\
\textsc{\small(Put)} & (\stput{\sigma'},\sigma,W) & \leadsto & (\return{()},\sigma',W) \\
\textsc{\small(Witness)} &
(\witness{s.\varphi},\sigma,W) & \leadsto & (\return{()},\sigma,W \cup s.\varphi)\\
\textsc{\small(Recall)} & (\recall{s.\varphi},\sigma,W) & \leadsto & (\return{()},\sigma,W)
\end{array}
\]
The $\mathsf{witness}$ action adds the witnessed
stable predicate to the log $W$, while $\mathsf{get}$, $\mathsf{put}$, and
$\mathsf{recall}$ do not use the log at all. The $\mathsf{get}$ action
returns the current state; the $\mathsf{put}$ action overwrites it.

\subsection{Correctness for \MSTcalculus}
\label{sec:correctness}

We now prove the correctness of our instrumented operational semantics
in a Hoare-style program logic sense. In particular,
we show that if ${~|- e : \mst{t}{s.~\pre}{s~x~s'.~\post}}$,  $e$ reduces
to $\return v$, and $\pre$ holds of the initial state, then $\post$
holds of the initial state, the returned value $v$, and the final state.
We also establish that the initial state is related
to the final state, and that the logs can ever only increase.
In order to better structure our proofs, we split them into
progress and preservation theorems, which when
combined give the above-mentioned result in \autoref{thm:partialcorrectness}.

%\da{the explanation below needs to be improved}
%
%\ch{Start by explaining intuitively what is it that you want to prove
%  here: total correctness in a program logic sense. Could also mention
%  that we're using progress, preservation, and normalization as a
%  proof technique, I guess.}

A key ingredient of the following correctness results is the use of the instrumentation
(log $W$ of witnessed stable predicates) to provide additional logical assumptions
corresponding to the logical \ls$witnessed$ capabilities resulting from earlier \ls$witness$
actions. In detail, these additional logical assumptions take the form
$\witnessed W =_\text{def} \witnessed s'.(\bigwedge_{s.\varphi \in W} \varphi[s'/s])$.
In the results below, we also use well-formed state-log pairs
$\Gamma |- (\sigma,W) ~ \wf$, defined to hold iff ${\Gamma |- \sigma : \statet}$ and 
\[
  %\Gamma |- \sigma : \statet \quad
  \Gamma ~|~ \witnessed W |- \varphi[\sigma/s] \qquad
  \Gamma ~|~ \witnessed W |- \stable{s.\varphi} \qquad
  (\text{for all } s.\varphi \in W)
\]
%
%defined using the following rule:
%\[\setnamespace{0pt}\setpremisesend{5pt}
%\begin{array}{c}
%
%\inferrule*%[lab=States-Logs]
%{
%  \Gamma |- \sigma : \statet \quad
%  \Gamma ~|~ \witnessed W |- \varphi[\sigma/s] \quad
%  \Gamma ~|~ \witnessed W |- \stable{s.\varphi} \quad
%  (\text{for all } s.\varphi \in W)
%}
%{
%  \Gamma |- (\sigma,W) ~ \wf
%}
%\end{array}
%\]

%
%We call a configuration $(e,\sigma,W)$ well formed in typing
%environment $\Gamma$, written $\Gamma |- (e,\sigma,W) ~ \wf$, when
%$\Gamma |- \sigma : \statet$ and for all $s.\varphi$ in $W$ we have
%$\Gamma ~|~ \witnessed W |- \varphi[\sigma/s]$ and
%$\Gamma ~|~ \witnessed W |- \stable{s.\varphi}$, where
%$\witnessed W =_\text{def} \witnessed s'.(\bigwedge_{s.\varphi \in W} \varphi[s'/s])$.
%%
%\ch{Need to explain this intuitively too}

%As standard, we prove the correctness of our semantics
%in terms of progress and preservation.

%As standard, the progress theorem shows that well-typed computation terms do not get stuck.

\begin{theorem}[Progress]
If $~|- e : \mst{t}{s.~\pre}{s~x~s'.~\post}$ then either $\exists v . ~ e = \return v$ or
$\forall \sigma ~ W . ~ \exists e' ~ \sigma' ~ W' . ~ (e,\sigma,W) \leadsto (e',\sigma',W')$.
\end{theorem}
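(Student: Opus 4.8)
The plan is to prove Progress by induction on the typing derivation ${~|- e : \mst{t}{s.~\pre}{s~x~s'.~\post}}$, with a case analysis on the last rule applied. Since \MSTcalculus has only one form of computation type, the property being proved depends on $e$ only through the fact that it is well-typed at \emph{some} \ls$MST$ type; hence the {\sc T-SubComp} case is immediate, as its premise already supplies ${|- e : C}$ with $C$ necessarily an \ls$MST$ type, and the induction hypothesis on that subderivation yields the conclusion verbatim. Because the context is empty, the variable case cannot occur.

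For the remaining rules: {\sc T-Return} lands in the left disjunct, as then $e = \return v$. For the action rules {\sc T-Get}, {\sc T-Put}, {\sc T-Witness}, {\sc T-Recall}, the matching reduction rule from \autoref{sec:first-opsem} fires for \emph{every} choice of $\sigma$ and $W$: e.g.\ if $e = \stput \sigma$ then $\sigma$ is syntactically a value (it is typed at $\statet$) and $(\stput \sigma,\sigma',W) \leadsto (\return{()},\sigma',W)$, and similarly $\stget$, $\witness{s.\varphi}$, and $\recall{s.\varphi}$ all step unconditionally, regardless of whether the relevant predicate is actually in $W$ or holds of $\sigma$ (those obligations are discharged in preservation, not progress). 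For {\sc T-Bind} with $e = \bind{x_1}{e_1}{e_2}$, apply the induction hypothesis to $e_1$: if $e_1 = \return v$ then $e$ reduces by the rule for \ls$bind$ of a returned value; otherwise $e_1$ reduces for all $\sigma,W$, and $e$ reduces by the congruence rule that evaluates under \ls$bind$.

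The interesting cases are {\sc T-App} and the two pattern-matching rules for $\mathsf{pmatch}$ and $\mathsf{case}$, which require a canonical-forms lemma: a closed value with ${~|- v : x{:}t -> C}$ is a $\lambda$-abstraction; with a product type it is a pair $(v_1,v_2)$; with a sum type it is $\inl{v}$ or $\inr{v}$. Given this, $v_1~v_2$ $\beta$-reduces, and the two matches reduce by selecting the appropriate branch and substituting. I expect this canonical-forms lemma to be the main obstacle, because \MSTcalculus has value subtyping (${\Gamma |- t <: t'}$), so inversion on value typing must cope with interleaved uses of the value-level subsumption rule; one must first establish that subtyping preserves the head type constructor (a function type is a subtype only of function types, and likewise for products, sums, $\statet$, and $\unit$). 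Concretely I would prove an inversion lemma characterizing the shape of derivations of ${\Gamma |- t <: t'}$, then prove canonical forms by induction on the value-typing derivation, peeling off subsumption steps using that inversion lemma. Everything else is a routine structural induction.
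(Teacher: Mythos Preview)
The paper does not include a proof of Progress in its body; the theorem is stated without argument, with full proofs deferred to the online supplementary materials. So there is nothing in the paper text to compare your approach against directly.

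That said, your proposal is the standard syntactic progress argument and is correct for \MSTcalculus as presented. A few small remarks. First, a typo: in your {\sc T-Put} case you wrote $(\stput \sigma,\sigma',W) \leadsto (\return{()},\sigma',W)$, but the rule actually updates the state to the argument of $\mathsf{put}$, i.e.\ $(\stput \sigma,\sigma',W) \leadsto (\return{()},\sigma,W)$; this does not affect the progress argument, only the displayed reduct. Second, your handling of {\sc T-Bind} relies on a congruence rule for $\mathsf{bind}$ and a $\beta$-rule for $\bind{x}{\return v}{e}$; the paper says the omitted reduction rules are ``standard,'' so this is a safe assumption, but it is worth making explicit. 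Third, your identification of canonical forms as the only non-trivial ingredient is right: the paper does define value subtyping $\Gamma |- t <: t'$ (used in {\sc Sub-MST}), so if there is a value-level subsumption rule then inversion on closed value typings must indeed peel off subtyping steps, and your plan of first proving that subtyping preserves head type constructors is the correct way to do this. If the calculus happens not to include value subsumption (the paper only shows {\sc T-SubComp} explicitly), canonical forms becomes immediate; either way your outline goes through.
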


%\begin{proof}
%By induction on the derivation of $~|- e : \mst{t}{s.\pre}{s~x~s'.\post}$.
%\end{proof}

%At a high level, the preservation theorem below states that if the precondition $\pre$
%holds in $\sigma$, then we can type the resulting computation term $e'$
%with a precondition $\pre'$ that holds in $\sigma'$ and a postcondition $\post'$ ...

%\ch{Is it possible to capture in words the ways in which this
%  preservation theorem deals with witnessed things? That's
%  non-standard and interesting.}

%\nik{This may be my bias, but I think this is more thatn just ``worth noting'' ... it's kind of the main invariant of the instrumentation}
%
%It is important to note that in the preservation theorem the
%well-formedness of $(\sigma,W)$ records that
%all previously witnessed stable predicates are in fact true of
%the current state; and the additional $\witnessed W$ assumptions ensure
%that once obtained logical capabilities remain valid in the future.

Preservation crucially uses the well-formedness of $(\sigma,W)$
to record that all previously witnessed stable predicates are in fact true of
the current state, in combination with the $\witnessed W$ assumptions which ensure
that, once obtained, logical capabilities remain usable in the future.

%It is worth noting that we use the well-formedness of $(\sigma,W)$
%and the logical $\witnessed W$ capabilities to keep track of witnessed
%stable predicates in the preservation theorem below, so as to
%prove the correctness of the \textsc{Recall} rule (see the proof).
%In particular, the well-formedness of $(\sigma,W)$ records that
%all previously witnessed stable predicates are in fact true of
%the current state; and the $\witnessed W$ assumptions ensure
%that once obtained logical capabilities remain valid.

%\nik{The theorem is really hard to read ... can we format it better? (multiple lines, e.g)}
%\begin{theorem}[Preservation]
%\label{thm:preservation}
%If
%${~|- e : \mst{t}{s.~\pre}{s~x~s'.~\post}}$ and
%${(e,\sigma,W) \leadsto (e',\sigma',W')}$, such that
%${|- (\sigma,W) ~ \wf}$ and
%${\witnessed W |- \pre[\sigma/s]}$, then we have
%${W \subseteq W'}$,
%${\witnessed {W'} |- \rel ~ \sigma ~ \sigma'}$,
%${|- (\sigma',W') ~ \wf}$, and there exist
%$\pre'$, $t'$, $\post'$, such that
%${|- e' : \mst{t'}{s.~\pre'}{s~x~s'.~\post'}}$,
%${|- t' <: t}$,
%${\witnessed {W'} |- \pre'[\sigma'/s]}$, and
%${x{:}t' , s'{:}\statet ~|~ \witnessed {W'} , \rel ~ \sigma' ~ s' , \post'[\sigma'/s] |- \post[\sigma/s]}$.
%\end{theorem}

\begin{theorem}[Preservation]
\label{thm:preservation}
If
${~|- e : \mst{t}{s.~\pre}{s~x~s'.~\post}}$ and
${(e,\sigma,W) \leadsto (e',\sigma',W')}$ such that
${|- (\sigma,W) ~ \wf}$ and
${\witnessed W |- \pre[\sigma/s]}$, then
\begin{enumerate}
\item ${W \subseteq W'}$ and ${\witnessed {W'} |- \rel ~ \sigma ~ \sigma'}$
  and ${|- (\sigma',W') ~ \wf}$  and
\item $\exists \pre' ~ t' ~ \post' .$
${|- e' : \mst{t'}{s.~\pre'}{s~x~s'.~\post'}}$ and
${|- t' <: t}$ and
${\witnessed {W'} |- \pre'[\sigma'/s]}$ and
${x{:}t' , s'{:}\statet ~|~ \witnessed {W'} , \rel ~ \sigma' ~ s' , \post'[\sigma'/s] |- \post[\sigma/s]}$.
\end{enumerate}
\end{theorem}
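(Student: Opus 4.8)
The plan is to prove Preservation by induction on the derivation of $(e,\sigma,W)\leadsto(e',\sigma',W')$ — equivalently, by case analysis on the head form of $e$, with an appeal to the induction hypothesis only for the one congruence rule (the $\mathsf{bind}$ reducing its first component). Before doing the cases I would set up the auxiliary machinery. (1) An \emph{inversion lemma} for computation typing: since (T-SubComp) is not syntax-directed, I want that from $\vdash e : C$ one can recover, for each head form of $e$, the principal type $C_0$ given by the corresponding rule of \autoref{fig:first-typing} together with $\vdash C_0 <: C$; this uses reflexivity and transitivity of computation subtyping, which I would establish first (collapsing chains of (T-SubComp)), relying on (Rel-Refl), (Rel-Trans) and cut for the logic. (2) A \emph{substitution lemma} — substituting a well-typed closed value for a variable preserves well-formedness, typing, and natural-deduction entailment — needed for the $\beta$-style reductions ($\mathsf{bind}$ of $\return v$, application of a $\lambda$, $\mathsf{pmatch}$/$\mathsf{case}$ of a constructor). (3) Weakening and the standard structural rules for $\Gamma\mid\Phi\vdash\varphi$, plus monotonicity of $\witnessed{\cdot}$ obtained by applying (Witnessed-Weaken) to the obvious entailment between conjunctions.

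For the action cases the content is entirely logical. For $\stget$ and $\recall{s.\varphi}$ we have $\sigma'=\sigma$ and $W'=W$, so part (1) of the conclusion ($W\subseteq W'$, $\witnessed{W'}\vdash\rel~\sigma~\sigma'$ by (Rel-Refl), $\vdash(\sigma',W')~\wf$) is immediate, $e'=\return{\sigma}$ resp. $\return{()}$ is typed by (T-Return), and the remaining obligation is that the (T-Return) postcondition entails (under $\witnessed{W}$ and $\rel~\sigma'~s'$) the action's postcondition; for $\recall{s.\varphi}$ this reduces to deriving $\varphi[\sigma/s]$ from $\witnessed{s.\varphi}$, which we have as $\witnessed{W}\vdash\witnessed{s.\varphi}$ from the hypothesis $\witnessed{W}\vdash\pre[\sigma/s]$. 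For $\stput{\sigma'}$, $\rel~\sigma~\sigma'$ comes straight from its precondition, and re-establishing $\vdash(\sigma',W)~\wf$ uses, for each $s.\psi\in W$, the stability of $s.\psi$ instantiated at $\sigma,\sigma'$ together with $\rel~\sigma~\sigma'$ and $\psi[\sigma/s]$. For $\witness{s.\varphi}$ the log grows to $W'=W\cup\{s.\varphi\}$; monotonicity gives $\witnessed{W'}\vdash\witnessed{W}$, which both transports all old well-formedness obligations to the new log and supplies the postcondition's $\witnessed{s.\varphi}$, while the new well-formedness obligations $\varphi[\sigma/s]$ and $\stable{s.\varphi}$ come from the precondition of $\witness{s.\varphi}$.

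For $e=\bind{x_1}{e_1}{e_2}$ there are two subcases. If $e_1$ is not a $\return$, then $(e_1,\sigma,W)\leadsto(e_1',\sigma',W')$; reading $e_1$'s principal type off (T-Bind) and applying the induction hypothesis yields $W\subseteq W'$, $\witnessed{W'}\vdash\rel~\sigma~\sigma'$, $\vdash(\sigma',W')~\wf$, and a type for $e_1'$ with suitably related pre/postconditions; re-assembling with (T-Bind) types $\bind{x_1}{e_1'}{e_2}$, and the postcondition entailment follows from the IH's, pushed through the existentials $\exists x_1{:}t_1.~\exists s'{:}\statet.$ and combined with (Rel-Trans) to chain $\rel~\sigma~s'$ with $\rel~s'~s''$. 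If $e_1=\return v$, then $e'=e_2[v/x_1]$ with $\sigma'=\sigma$, $W'=W$; inversion of $\return v$ gives $\vdash v:t_1$, the substitution lemma types $e_2[v/x_1]$, and the composed (T-Bind) postcondition is discharged by taking $x_1:=v$, $s':=\sigma$ and using the (T-Return) facts $s==s'$ and $x_1==v$ via (Eq-Transport). The cases $v_1~v_2$, $\mathsf{pmatch}$ and $\mathsf{case}$ are pure $\beta$-reductions handled in the same way by inversion plus substitution.

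I expect the main obstacle to be the logical step needed for $\recall{s.\varphi}$ and, more generally, for preserving $\vdash(\sigma,W)~\wf$: concluding $\witnessed{W}\vdash\varphi[\sigma/s]$ from $\witnessed{W}\vdash\witnessed{s.\varphi}$ is \emph{not} derivable from the natural-deduction rules of \autoref{fig:logic} alone — it requires the proof-theoretic analysis of the $\witnessed{\cdot}$ connective developed in \autoref{sec:inversion} (the sequent calculus and cut admissibility), which should give that $\witnessed{W}\vdash\witnessed{s.\varphi}$ forces the predicate entailment $s{:}\statet\mid\bigwedge_{s.\psi\in W}\psi\vdash\varphi$ at an arbitrary state; instantiating at $\sigma$ and combining with the well-formedness facts $\witnessed{W}\vdash\psi[\sigma/s]$ for each $\psi\in W$ then yields $\witnessed{W}\vdash\varphi[\sigma/s]$. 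A secondary source of friction is the bookkeeping of the existential quantifiers and of the $\rel$-chaining in the $\mathsf{bind}$ congruence case, and making the inversion lemma robust to arbitrary interleavings of (T-SubComp).
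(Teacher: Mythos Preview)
Your proposal is correct and follows essentially the same route as the paper's proof: case analysis on the reduction step, inversion of the typing judgment, and the same logical content in the \textsc{Put}, \textsc{Witness}, and \textsc{Recall} cases---including, crucially, your correct identification that the \textsc{Recall} case hinges on the proof-theoretic inversion property of $\witnessed{\cdot}$ established via the sequent calculus in \S\ref{sec:inversion}. The only notable difference is that the paper inducts on the sum of the height of the reduction derivation and the size of $e$, whereas you induct on the reduction derivation alone; since you factor inversion out as a separate lemma (collapsing chains of \textsc{T-SubComp}), your simpler metric suffices.
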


\ifsooner
\da{If we have room in the end, we should thinking about more spacious layout for these theorems.}
\fi

\begin{proof}
By induction on the sum of the height of the derivation of ${(e,\sigma,W) \leadsto (e',\sigma',W')}$
and the size of the computation term $e$, and by inverting the judgment
${|- e : \mst{t}{s.~\pre}{s~x~s'.~\post}}$ for
each concrete $e$. Below we comment briefly on the more interesting cases of this proof.

\vspace{0.15cm}

\noindent
{\sc{Put}}:
In this case, ${e = \stput \sigma''}$, ${\sigma' = \sigma''}$, and ${W' = W}$,
and we prove $\witnessed W' |- \rel ~ \sigma ~ \sigma'$ by combining the assumption
$\witnessed W |- \pre[\sigma/s]$ with $s{:}\statet ~|~ \pre |- \rel ~ s ~ \sigma''$
that we get by inverting the typing of $\stput \sigma''$.
%${~|- \stput \sigma'' : \mst{t}{s.~\pre}{s~x~s'.~\post}}$.
We then derive $|- (\sigma',W') ~ \wf$ from $|- (\sigma,W) ~ \wf$ by using the stability of
the witnessed predicates, proving
$\witnessed W |- \varphi[\sigma/s] ==> \varphi[\sigma''/s]$ for all $s.\varphi \in W$.

%\nik{Isn't the more interesting part here showing that $|- (\sigma', W) \wf$?}

\vspace{0.15cm}

\noindent
{\sc{Witness}}:
In this case, ${e = \witness{s.\varphi}}$, ${\sigma' = \sigma}$,
${W' = W \cup s.\varphi}$, and $\pre' = \witnessed s.\varphi$,
and we prove $\witnessed W' |- \pre'[\sigma'/s]$ by using
the {\sc{Witnessed-Weaken}} rule from \autoref{fig:logic}.

\vspace{0.15cm}

\noindent
{\sc{Recall}}:
In this case, ${e = \recall{s.\varphi}}$, ${\sigma' = \sigma}$, ${W' = W}$, and
${\pre' = \varphi}$, and we are required to \linebreak prove ${\witnessed W' |- \pre'[\sigma'/s]}$.
We do so by combining the assumption ${|- (\sigma,W) ~ \wf}$ with \linebreak the proof of
${s{:}\statet ~| \bigwedge_{s'.\varphi' \in W} \varphi'[s/s'] |- \varphi}$,
which follows from the assumed precondition \linebreak ${\witnessed W |- \witnessed s.\varphi}$;
this is a proof-theoretic property of the logic, see \autoref{sec:inversion} for details.
\end{proof}

%Next, we can easily prove a correctness result for returning values.

Finally, we combine progress and preservation results to prove partial correctness for \MSTcalculus.

%To establish the partial correctness of \MSTcalculus,
%we also prove the following result about \ls$return$.

%We also prove the following correctness result about returning values.

%In order to combine the above progress and preservation results into a partial
%correctness result in \autoref{thm:partialcorrectness}, we prove the following additional
%correctness result about returning values.

\begin{proposition}[Correctness of $\mathsf{return}$]
If ${|- \return v : \mst{t}{s.~\pre}{s~x~s'.~\post}}$ and \linebreak ${|- \sigma : \statet}$, then
$|- v : t$ and $|- \pre[\sigma/s] ==> \post[\sigma/s,v/x,\sigma/s']$.
\end{proposition}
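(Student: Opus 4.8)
The plan is to prove this as a small inversion lemma for the typing of $\return v$. The only rule that assigns a computation type to $\return v$ directly is (T-Return); every other rule that could conclude such a judgment is (T-SubComp). So a derivation of $\vdash \return v : \mst{t}{s.~\pre}{s~x~s'.~\post}$ consists of one instance of (T-Return), producing $\vdash v : t_0$ and the canonical type $\mst{t_0}{s.~\top}{s~x~s'.~s == s' \wedge x == v}$ for some value type $t_0$, followed by a chain of (T-SubComp) steps. First I would show that computation subtyping is reflexive and transitive — both are routine inductions, using reflexivity and transitivity of value subtyping together with the assumption rule and cut for the entailment judgment — which collapses the chain into a single (Sub-MST) step. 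This yields the inversion statement: there is a value type $t_0$ with $\vdash v : t_0$, $\vdash t_0 <: t$, and, unfolding (Sub-MST) with the subtype's precondition $\top$ and postcondition $s == s' \wedge x == v$, the entailment $s{:}\statet,\ x{:}t_0,\ s'{:}\statet \mid \pre,\ \rel~s~s',\ s == s' \wedge x == v \vdash \post$ (with the empty ambient context, since we are at top level).

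From $\vdash v : t_0$ and $\vdash t_0 <: t$, value subsumption gives $\vdash v : t$, which is the first half of the conclusion. For the second half it suffices, by the introduction rule for $==>$, to derive $\mid \pre[\sigma/s] \vdash \post[\sigma/s, v/x, \sigma/s']$. I obtain this by instantiating the entailment above with the substitution $[\sigma/s,\, v/x,\, \sigma/s']$; here one invokes the standard substitution lemma for the logic (if $\Gamma, y{:}t' \mid \Phi \vdash \varphi$ and $\Gamma \vdash w : t'$ then $\Gamma \mid \Phi[w/y] \vdash \varphi[w/y]$), applied with the hypotheses $\vdash \sigma : \statet$ and $\vdash v : t_0$. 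After the substitution the antecedents are $\pre[\sigma/s]$, $\rel~\sigma~\sigma$, and $\sigma == \sigma \wedge v == v$; the last two are discharged by (Rel-Refl) and (Eq-Refl) (the latter twice, followed by $\wedge$-introduction), all of which apply because $\sigma$ and $v$ are well typed. What remains is exactly $\mid \pre[\sigma/s] \vdash \post[\sigma/s, v/x, \sigma/s']$, as required.

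I expect the main obstacle to be the inversion step — specifically, establishing that composing (Sub-MST) steps is sound — rather than the closing calculation. Chaining two subtyping steps requires that their side entailments really compose, which rests on cut being admissible for the entailment judgment (equivalently, on the sequent-calculus presentation of \autoref{sec:inversion}) and on a narrowing property allowing the binder $x{:}t_0$ to be replaced by $x{:}t'$ whenever $\vdash t' <: t_0$. Everything downstream — value subsumption, the substitution lemma for entailment, and the reflexivity facts about $\rel$ and $==$ — is entirely standard, and the choice to instantiate both state variables with $\sigma$ and the result variable with $v$ is precisely what makes the canonical return postcondition $s == s' \wedge x == v$ hold on the nose.
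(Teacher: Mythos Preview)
Your argument is correct and is exactly the kind of inversion-plus-substitution proof one expects here. The paper itself does not spell out a proof of this proposition in the main text (it is deferred to the supplementary material), so there is no visible proof to compare against; but your decomposition---invert to \textsc{T-Return} followed by a chain of \textsc{T-SubComp}, collapse via transitivity of \textsc{Sub-MST}, then substitute $[\sigma/s,\,v/x,\,\sigma/s']$ and discharge $\rel~\sigma~\sigma$ and $\sigma == \sigma \wedge v == v$ by \textsc{Rel-Refl} and \textsc{Eq-Refl}---is the natural route and matches what the surrounding metatheory (substitution for entailment, value subsumption, cut) is set up to support. One small remark: cut in the natural-deduction presentation is not an obstacle at all (it is derivable from $\Rightarrow$-introduction and elimination), so the only genuine auxiliary lemma you need for collapsing the subtyping chain is narrowing of the $x{:}t_0$ binder, which you already identified.
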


%Finally, by combining the above results, we get the partial correctness result we were after.

%, which at
%a high level says that if the precondition $\pre$ holds in the initial state $\sigma$
%and $e$ reduces to a value $v$, then the postcondition will hold of the initial state $\sigma$,
%the returned value $v$, and the final state $\sigma'$.

\begin{theorem}[Partial correctness]
\label{thm:partialcorrectness}
If ${~|- e : \mst{t}{s.~\pre}{s~x~s'.~\post}}$ and we have a reduction sequence
${(e,\sigma,W) \leadsto^{*} (\return v,\sigma',W')}$ such that $|- (\sigma,W) ~ \wf$
and $\witnessed W |- \pre[\sigma/s]$, then $W \subseteq W'$ and
$\witnessed {W'} |- \rel ~ \sigma ~ \sigma'$, $|- v : t$ and
$\witnessed W' |- \post[\sigma/s,v/x,\sigma'/s']$.
\end{theorem}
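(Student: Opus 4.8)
The plan is to induct on the number $n$ of steps in the reduction sequence $(e,\sigma,W) \leadsto^{*} (\return v,\sigma',W')$, quantifying over $\pre$, $t$, and $\post$ in the induction hypothesis so that the changed indices produced by Preservation can be fed back in. Preservation (\autoref{thm:preservation}) supplies the single-step case and the Correctness of $\mathsf{return}$ proposition closes the base case.

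For $n = 0$ we have $e = \return v$, $\sigma' = \sigma$, and $W' = W$. Then $W \subseteq W'$ holds trivially, $\witnessed{W'} |- \rel~\sigma~\sigma'$ follows from reflexivity of $\rel$ ({\sc Rel-Refl}), and the Correctness of $\mathsf{return}$ proposition yields $|- v : t$ together with $\pre[\sigma/s] ==> \post[\sigma/s,v/x,\sigma/s']$; composing this entailment with the hypothesis $\witnessed W |- \pre[\sigma/s]$ gives $\witnessed{W'} |- \post[\sigma/s,v/x,\sigma'/s']$.

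For $n > 0$, write the sequence as $(e,\sigma,W) \leadsto (e_1,\sigma_1,W_1) \leadsto^{*} (\return v,\sigma',W')$. Applying Preservation to the first step---its hypotheses $|- (\sigma,W)~\wf$ and $\witnessed W |- \pre[\sigma/s]$ are exactly those we assume---gives $W \subseteq W_1$, $\witnessed{W_1} |- \rel~\sigma~\sigma_1$, $|- (\sigma_1,W_1)~\wf$, and some $\pre'$, $t'$, $\post'$ with $|- e_1 : \mst{t'}{s.~\pre'}{s~x~s'.~\post'}$, $|- t' <: t$, $\witnessed{W_1} |- \pre'[\sigma_1/s]$, and the entailment
\[
  x{:}t',\, s'{:}\statet ~|~ \witnessed{W_1},\, \rel~\sigma_1~s',\, \post'[\sigma_1/s] |- \post[\sigma/s] .
\]
Feeding the tail sequence, together with $\pre'$, $t'$, $\post'$ and the hypotheses $|- (\sigma_1,W_1)~\wf$ and $\witnessed{W_1} |- \pre'[\sigma_1/s]$, into the induction hypothesis yields $W_1 \subseteq W'$, $\witnessed{W'} |- \rel~\sigma_1~\sigma'$, $|- v : t'$, and $\witnessed{W'} |- \post'[\sigma_1/s,v/x,\sigma'/s']$. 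Now $W \subseteq W'$ is transitivity of $\subseteq$, and $|- v : t$ follows from $|- v : t'$ and $|- t' <: t$ by value subsumption. Because $W_1 \subseteq W'$, the conjunction defining $\witnessed{W'}$ implies the one defining $\witnessed{W_1}$, so {\sc Witnessed-Weaken} gives $\witnessed{W'} |- \witnessed{W_1}$; hence $\witnessed{W'} |- \rel~\sigma~\sigma_1$, and chaining with $\witnessed{W'} |- \rel~\sigma_1~\sigma'$ via {\sc Rel-Trans} gives $\witnessed{W'} |- \rel~\sigma~\sigma'$. Finally, instantiating the displayed entailment at $x := v$ and $s' := \sigma'$ yields
\[
  \witnessed{W_1},\, \rel~\sigma_1~\sigma',\, \post'[\sigma_1/s,v/x,\sigma'/s'] |- \post[\sigma/s,v/x,\sigma'/s'] ,
\]
whose three antecedents we discharge (after weakening into the context $\witnessed{W'}$ and cutting) using $\witnessed{W'} |- \witnessed{W_1}$, $\witnessed{W'} |- \rel~\sigma_1~\sigma'$, and $\witnessed{W'} |- \post'[\sigma_1/s,v/x,\sigma'/s']$, concluding $\witnessed{W'} |- \post[\sigma/s,v/x,\sigma'/s']$.

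The computational content is light; the work is in the logical bookkeeping. The main delicate points I anticipate are: keeping the substitutions for the state variables straight when instantiating the preservation entailment (so that $\post'[\sigma_1/s]$ with $x,s'$ instantiated really matches what the induction hypothesis delivers), and justifying $\witnessed{W'} |- \witnessed{W_1}$ from $W_1 \subseteq W'$ as an instance of {\sc Witnessed-Weaken}, i.e.\ monotonicity of $\witnessed$ under shrinking its conjunction. Everything else is routine use of {\sc Rel-Refl}, {\sc Rel-Trans}, value subsumption, and weakening/cut in the first-order natural-deduction system.
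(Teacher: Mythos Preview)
Your proposal is correct and follows the paper's intended approach: the paper states only that partial correctness is obtained by combining the preservation theorem with the correctness-of-$\mathsf{return}$ proposition, and your induction on the length of the reduction sequence---using Preservation for the step and Correctness of $\mathsf{return}$ for the base---is exactly that argument spelled out. The logical bookkeeping you identify (deriving $\witnessed{W'} |- \witnessed{W_1}$ from $W_1 \subseteq W'$ via {\sc Witnessed-Weaken}, chaining $\rel$ via {\sc Rel-Trans}, and instantiating the preservation entailment before cutting) is the substance of the proof and you have it right.
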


We strengthen this to total correctness by also showing that
\MSTcalculus is strongly normalizing.

\begin{theorem}
If $\Gamma |- e : C$ and $|- (\sigma,W) ~ \wf$, then $(e,\sigma,W)$ is strongly
normalizing in \MSTcalculus.
\end{theorem}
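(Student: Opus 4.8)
The plan is a standard Tait--Girard reducibility argument, made lighter by the observation that the operational semantics of \MSTcalculus is entirely insensitive to logical content: no reduction rule ever inspects a formula, and the only rule that touches the log merely appends to it ($\mathsf{witness}$), while $\mathsf{get}$, $\mathsf{put}$, and $\mathsf{recall}$ ignore it. Moreover the calculus has no fixpoints, recursors, or recursion of any kind. So I would first define a type erasure $\lfloor-\rfloor$ sending each value type to a simple type and each computation type $C = \mst{t}{s.~\pre}{s~x~s'.~\post}$ to $\mathsf{MST}\,\lfloor t\rfloor$, discarding all pre- and postconditions. Since value variables occur in types only inside formulas, $\lfloor C[v/x]\rfloor = \lfloor C\rfloor$ and likewise for value types, so erasure is stable under value substitution. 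This reduces the theorem to strong normalization for a simply typed, fine-grain call-by-value calculus over an (erasable) state-plus-log monad, which I prove by reducibility.

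Concretely, I define reducibility sets on closed terms by induction on erased types: $\mathcal{R}_{\statet}$ is all closed state values (constants from $S$), $\mathcal{R}_{\unit}=\{()\}$, $\mathcal{R}_{t_1\times t_2}=\{(v_1,v_2)\mid v_i\in\mathcal{R}_{t_i}\}$, $\mathcal{R}_{t_1+t_2}=\{\inl v\mid v\in\mathcal{R}_{t_1}\}\cup\{\inr v\mid v\in\mathcal{R}_{t_2}\}$, and $\mathcal{R}_{t\to C}=\{\fun x t e\mid \forall v\in\mathcal{R}_{t}.\ e[v/x]\in\mathcal{R}_{C}\}$; and for an erased computation type, $\mathcal{R}_{\mathsf{MST}\,t}$ is the set of closed computation terms $e$ such that for every state value $\sigma$ and every finite log $W$, the configuration $(e,\sigma,W)$ is strongly normalizing, and whenever $(e,\sigma,W)\leadsto^{*}(e',\sigma',W')$ with $e'$ not further reducible, we have $e'=\return v$ for some $v\in\mathcal{R}_{t}$. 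I then prove the routine closure facts: reducible terms and configurations are strongly normalizing; $\mathcal{R}_{\mathsf{MST}\,t}$ is closed under backward and forward $\leadsto$-reduction (in particular $(\mathsf{bind}\,x\leftarrow\return v\,\mathsf{in}\,e_2,\sigma,W)\leadsto(e_2[v/x],\sigma,W)$, which lets $\mathsf{bind}$ compose reducible computations); and every erased value type is inhabited by some closed reducible value, using non-emptiness of $S$ for $\statet$ and induction for the rest.

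The core is the Fundamental Lemma: if $\Gamma\vdash v:t$ (resp. $\Gamma\vdash e:C$) and $\gamma$ maps each $x{:}t'\in\Gamma$ to a value in $\mathcal{R}_{\lfloor t'\rfloor}$, then $v\gamma\in\mathcal{R}_{\lfloor t\rfloor}$ (resp. $e\gamma\in\mathcal{R}_{\lfloor C\rfloor}$), by induction on the typing derivation. Variables, $()$, pairing, injections, and abstraction are immediate; $\mathsf{pmatch}$ and $\mathsf{case}$ use that a reducible product/sum value is literally of the matching shape; \textsc{T-App} and \textsc{T-Bind} unfold the definition of computation reducibility, threading $\sigma$ and $W$ through evaluation of the subcomputation and re-entering $\mathcal{R}$ at the post-state, then conclude by backward closure over the $\beta$- and $\mathsf{bind}$-steps; $\mathsf{get}$, $\mathsf{put}$, $\mathsf{witness}$, $\mathsf{recall}$ each reduce in one step to a $\mathsf{return}$ of a reducible value (a state or $()$), so reducibility follows from backward closure; and \textsc{T-SubComp} is immediate because $\mathcal{R}$ depends only on the erased type and erasure is monotone under subtyping (the value-type function case being the usual contravariant/covariant step). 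For the theorem as stated, with possibly non-empty $\Gamma$, close $e$ by the inhabiting substitution $\gamma$ from the inhabitation lemma; then $e\gamma\in\mathcal{R}_{\mathsf{MST}\,\lfloor t\rfloor}$, so $(e\gamma,\sigma,W)$ is strongly normalizing for every $\sigma,W$, and since $\leadsto$ commutes with value substitution, any infinite reduction of $(e,\sigma,W)$ would lift to one of $(e\gamma,\sigma,W)$, a contradiction. The hypothesis $\vdash(\sigma,W)~\wf$ plays no role here since reduction ignores formulas; it is kept only for uniformity with the other correctness statements.

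I expect the main obstacle to be pinning down the computation-reducibility clause so that it simultaneously (i) is closed under the monadic $\mathsf{bind}$ --- which forces quantification over all $(\sigma,W)$ and re-entry of the predicate at each reachable post-state --- and (ii) validates subsumption, which forces the predicate to be phrased purely in terms of erased types; a secondary wrinkle is the bookkeeping that erasure commutes with the value-substitutions in \textsc{T-App} and \textsc{T-Bind}, so that $\mathcal{R}_{\lfloor C[v/x]\rfloor}=\mathcal{R}_{\lfloor C\rfloor}$ and the induction hypotheses line up. Neither is deep: with no recursion and no genuine type-level computation, this is essentially the textbook strong normalization proof for simply typed $\lambda$-calculus, adapted to fine-grain call-by-value and to an erasable state-and-log effect.
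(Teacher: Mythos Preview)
Your approach is essentially the same as the paper's: both erase the dependent, formula-carrying types to simple types and then run a reducibility argument to obtain strong normalization. The paper phrases this as a typing- and reduction-preserving translation $|{-}|$ into a separate simply typed calculus (erasing dependency, formulas, \emph{and} logs, so that $|(e,\sigma,W)|=(|e|,|\sigma|)$) and then proves SN of the target calculus via Lindley--Stark $\top\top$-lifting; you instead keep the original operational semantics and define a direct Tait-style reducibility predicate indexed by erased types, quantifying over all $(\sigma,W)$. Both routes exploit exactly the observation you make---reduction is insensitive to formulas and to the log---so the difference is organizational: the paper factors through an intermediate calculus and invokes biorthogonality, which is a bit more robust if one later enriches the effect structure, whereas your direct predicate is more elementary and avoids setting up a second calculus. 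Either way the argument goes through; your handling of open $\Gamma$ by substituting inhabitants and lifting reductions is also fine and matches what the paper's translation does implicitly.
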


\begin{proof}
The proof is based on defining
a typing- and reduction structure preserving translation~$| \text{-} |$
of \MSTcalculus types, terms, and configurations
to a corresponding strongly normalizing simply typed
calculus (by erasing type dependency,
logical formulas, and logs, e.g., $| x{:}t -> C| =_{\text{def}} |t| -> |C|$,
$| \witness{s.\varphi} | =_{\text{def}} \mathsf{witness}$\footnote{To
simplify the proof, the simply typed calculus
includes computationally irrelevant computation terms
$\mathsf{witness}$ and $\mathsf{recall}$.}, and $|(e,\sigma,W)| =_{\text{def}} (|e|,|\sigma|)$).
We establish the strong normalization of this simply typed calculus
using the standard $\top\top$-lifting approach \citep{Lindley2005}.
\end{proof}

%\begin{proof}
%We proceed in three steps: (1) we define a typing-preserving translation
%$| \text{-} |$ of \MSTcalculus types, terms, and configurations to a corresponding simply typed
%calculus (by erasing dependency,
%logical formulas, and logs, e.g., $| x{:}t -> C| =_{\text{def}} |t| -> |C|$,
%$| \witness{s.\varphi} | =_{\text{def}} \mathsf{witness}$\footnote{To
%simplify step (3), the simply typed calculus
%includes computationally irrelevant computation terms $\mathsf{witness}$ and $\mathsf{recall}$.},
%and \linebreak $|(e,\sigma,W)| =_{\text{def}} (|e|,|\sigma|)$); (2) we prove that well-typed
%terms of this simply typed calculus are
%strongly normalizing, using \citepos{Lindley2005} $\top\top$-lifting approach; and
%(3) we show that the reduction sequences starting from $(e,\sigma,W)$ and
%$(|e|,|\sigma|)$ have the same exact structure.
%\end{proof}

%\nik{This could be trimmed... it is so similar to the previous one. Or, we could leave out the formal statement of the partial correctness one and just show this, since it is the main result}
%\begin{corollary}[Total correctness]
%If ${~|- e : \mst{t}{s.~\pre}{s~x~s'.~\post}}$, such that
%$|- (\sigma,W) ~ \wf$ and $\witnessed W |- \pre[\sigma/s]$, then there exist $v$, $\sigma'$, and $W'$,
%such that ${(e,\sigma,W) \leadsto^{*} (\return v,\sigma',W')}$, $W \subseteq W'$,
%$\witnessed {W'} |- \rel ~ \sigma ~ \sigma'$, $|- v : t$, and
%$\witnessed W' |- \post[\sigma/s,v/x,\sigma'/s']$.
%\end{corollary}

\ch{In the worst case, would we be up to dropping or shrinking all proofs?}

\subsection{Proof-theoretic Properties of the Logic of Pre- and Postconditions}
\label{sec:inversion}

\nik{Why didn't we use the sequent calculus in the definition right away, instead of
having to switch from natural deduction here? Needs some more motivation}
\ch{If space is very tight could we consider dropping the natural
  deduction formulation?}

We conclude our investigation into the meta-theory of \MSTcalculus by recalling that
in the proof of \autoref{thm:preservation}, it was crucial (in the {\sc Recall} case) to
construct a derivation of $s{:}\statet ~|~ \varphi |- \varphi'$ from a derivation
of $\witnessed{s.\varphi} |- \witnessed{s.\varphi'}$. Intuitively, this proof-theoretic property of
the logic means that $\varphi'$ must be a logical consequence of $\varphi$ because
$\witnessed{s.\varphi} |- \witnessed{s.\varphi'}$ could only have been
proved using the natural deduction hypothesis and {\sc{Witnessed-Weaken}}
rules.

It is well known that establishing such properties directly in
natural deduction is difficult due to the introduction rule
for implication. We follow standard practice and turn to sequent
calculus, which we define using judgment $\Gamma ~|~ \Phi |- \Phi'$, where
$\Phi$ and $\Phi'$ are finite sets of formulas. Most rules are
standard for classical sequent calculus, so \autoref{fig:seqcalc} only
lists the more relevant ones. Importantly, the rules for $\Gamma ~|~ \Phi |- \Phi'$
do not include cut---we instead prove that it is admissible.

\begin{theorem}[Admissibility of cut]
The cut rule is admissible in this sequent calculus.
\end{theorem}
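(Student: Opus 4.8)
The plan is to prove cut admissibility by the standard Gentzen-style argument, adapted only where the non-logical connective $\witnessed$ is involved. Before the main argument I would establish the routine auxiliary facts for this calculus: admissibility of weakening on both sides of the turnstile and in the term context $\Gamma$; a renaming lemma for the state variable bound by the $\witnessed{s.\varphi}$ right-rule and by the quantifier rules, so that the freshness side-conditions can always be re-established after a permutation; and, since the antecedent/succedent contexts $\Phi,\Phi'$ are finite sets, the usual care to handle implicit contraction during the elimination. I would also prove the general identity axiom $\Gamma \mid \varphi \vdash \varphi$ admissible by induction on $\varphi$, restricting the primitive axiom rule to atomic formulas, which streamlines the principal cases.

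The proof of cut admissibility itself is a lexicographic induction: the \emph{outer} measure is the size of the cut formula, and the \emph{inner} measure is the sum of the heights of the two given derivations $D_1$ of $\Gamma \mid \Phi \vdash \varphi, \Psi$ and $D_2$ of $\Gamma \mid \Phi', \varphi \vdash \Psi'$. One then case-splits on the last rules of $D_1$ and $D_2$. When the cut formula is not principal in $D_1$, or not principal in $D_2$, the cut is permuted above that rule; this keeps the outer measure fixed and strictly decreases the inner one. The only points deserving attention here are that the renaming lemma is invoked to preserve the freshness conditions of the quantifier and $\witnessed$ rules, and that the ``non-logical'' rules — reflexivity and transitivity of $\rel$, reflexivity of $==$, and the disjointness/injectivity rules for sums and pairs — are treated as axiomatic and permuted like any other rule. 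When the cut formula is principal in both $D_1$ and $D_2$ and is built from a first-order connective ($\wedge$, $\vee$, implication, $\forall$, $\exists$, or a unit), one uses the textbook reductions to one or two cuts on proper subformulas of the cut formula (for $\forall$/$\exists$ also instantiating with the witnessing term), each strictly decreasing the outer measure.

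The one genuinely new case is a cut on $\witnessed{s.\varphi'}$ that is principal in both premises. Here $D_1$ ends with the right-rule for $\witnessed$ — a rule in the style of the modal $K$ rule, whose premise is a derivation $D_1'$ of $\Gamma, s{:}\statet \mid \varphi_0 \vdash \varphi'$ in the context stripped to the $\witnessed$-formula(s) of $D_1$'s conclusion — and $D_2$ ends with the $\witnessed$ rule having $\witnessed{s.\varphi'}$ as its principal antecedent, with premise $D_2'$ deriving $\Gamma, s{:}\statet \mid \varphi' \vdash \chi$. After applying the renaming lemma to align the bound state variable, I cut $D_1'$ against $D_2'$ on $\varphi'$; since $\varphi'$ is a proper subformula of $\witnessed{s.\varphi'}$ the outer induction hypothesis applies, yielding $\Gamma, s{:}\statet \mid \varphi_0 \vdash \chi$. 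Re-applying the $\witnessed$ right-rule and then weakening with $\Phi,\Phi',\Psi,\Psi'$ reconstructs the desired end-sequent. This is precisely the standard cut-reduction for a normal modality, so no new proof-theoretic phenomenon arises.

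I expect the work to be bookkeeping rather than conceptual, the two mild obstacles being: maintaining the fresh state-variable side-conditions through every cut permutation (handled once and for all by the renaming lemma), and ensuring the equality congruence rule {\sc Eq-Transport}, which superficially behaves like a cut, does not block elimination — the standard remedy is to restrict it to atomic formulas (equivalently, to fold the equational theory into the atomic axioms), after which it permutes routinely. Consistency of the logic is then a corollary, as the cut-free calculus enjoys the subformula property and cannot derive the empty sequent; combined with a straightforward equivalence between this sequent calculus and the natural-deduction system of \autoref{fig:logic}, this delivers the inversion principle ``$\witnessed{s.\varphi} \vdash \witnessed{s.\varphi'}$ entails $s{:}\statet \mid \varphi \vdash \varphi'$'' used in the proof of \autoref{thm:preservation}.
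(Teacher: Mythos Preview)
Your proposal is correct and matches the paper's approach: a standard Gentzen cut-elimination with \citeauthor{Negri1998}-style left rules for the non-logical axioms (equality, $\rel$, pair/sum facts), the restriction of {\sc Eq-Transport-SC} to atomic predicates, and a modal-$K$-style reduction for the principal $\mathsf{witnessed}$ case.

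One small correction to your description of the principal case: the paper does \emph{not} have separate left/right rules for $\mathsf{witnessed}$, nor does the rule strip the context to witnessed formulas.  The single rule {\sc Witnessed-Weaken-SC} simultaneously unboxes one antecedent formula and one succedent formula while \emph{retaining} the surrounding contexts $\Phi,\Phi'$ (subject only to $s\notin FV(\Phi,\Phi')$).  So in the principal cut the premises are $D_1':\Gamma,s{:}\statet\mid\Phi_1,\varphi_0\vdash\varphi',\Psi_1$ and $D_2':\Gamma,s{:}\statet\mid\Phi_2,\varphi'\vdash\chi,\Psi_2$; cutting on $\varphi'$ and reapplying the rule directly yields the target sequent, with no weakening step needed.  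This does not change the structure of your argument---the reduction is still to a cut on the strict subformula $\varphi'$---but your ``stripped context'' description and the final weakening are artifacts of a slightly different rule than the one the paper actually uses.
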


%\begin{theorem}[Admissibility of cut]
%\label{thm:cutadmissibility}
%The following rule is admissible in this sequent calculus:
%\[\setnamespace{0pt}\setpremisesend{5pt}
%\begin{array}{c}
%
%\inferrule*[Left=(Cut)]
%{
%  \Gamma ~|~ \Phi |- \varphi , \Phi'
%  \quad
%  \Gamma ~|~ \Phi , \varphi |- \Phi'
%}
%{
%  \Gamma ~|~ \Phi |- \Phi'
%}
%
%\end{array}
%\]
%\end{theorem}

In order to accommodate the rules concerning $==$ and $\rel$ from \autoref{fig:logic},
while at the same time ensuring that cut remains admissible,
we follow \citet{Negri1998} in defining the corresponding rules in the sequent
calculus such that they only modify
the left-hand side of the entailment judgment---see the equality rules in
\autoref{fig:seqcalc}. Also following \citeauthor{Negri1998}, we restrict the
{\sc{Eq-Transport-SC}} rule to atomic predicates (the general rule
is admissible).

\begin{figure}

\[\setnamespace{0pt}\setpremisesend{5pt}
\begin{array}{c}

\inferrule*[lab=(Eq-Refl-SC)]
{
  \Gamma ~|~ \Phi , v == v |- \Phi'
  \quad
  \Gamma |- v : t
}
{
  \Gamma ~|~ \Phi |- \Phi'
}

\hspace{1cm}

\inferrule*[lab=(Eq-Transport-SC)]
{
  \Gamma ~|~ \Phi , (p ~ \vec{v})[v''/x] |- \Phi'
  \quad
  \Gamma |- v' : t
  \quad
  \Gamma |- v'' : t
}
{
  \Gamma ~|~ \Phi , v == v' , (p ~ \vec{v})[v'/x] |- \Phi'
}

\\[0.6em]

\inferrule*[left=(Witnessed-Weaken-SC)]
{
  \Gamma, s{:}\statet ~|~ \Phi , \varphi |- \varphi' , \Phi'
  \quad
  s \not\in FV(\Phi)
  \quad
  s \not\in FV(\Phi')
}
{
  \Gamma ~|~ \Phi , \witnessed{s.\varphi} |- \witnessed{s.\varphi'} , \Phi'
}

\end{array}
\]

\caption{Sequent calculus (selected rules)}
\label{fig:seqcalc}
\end{figure}

Next, we relate this sequent calculus to our natural deduction system in the standard
way, e.g., if $\Gamma ~|~ \Phi |- \Phi'$ in the sequent calculus, then
$\Gamma ~|~ \Phi |- \bigvee_{\varphi \in\, \Phi'} \varphi$ in natural deduction, and vice versa.
Based on the equivalence of the systems, and the syntax directed nature of the sequent
calculus rules, we finally prove the property of the logical $\mathsf{witnessed}$
capabilities we used in the proof of \autoref{thm:preservation}.

%The syntax directed nature of the sequent calculus rules now enables
%us to prove the property of the logical $\mathsf{witnessed}$
%capabilities we used in the proof of \autoref{thm:preservation}
%(via the above translations).

%the next
%property of the logical $\mathsf{witnessed}$ capabilities. The same property, as used in
%the proof of \autoref{thm:preservation}, then also holds
%for the natural deduction system via the translations defined in \autoref{thm:seqcalctranslation}.

\begin{theorem}
If we have a derivation of $~\Gamma ~|~ \Phi , \witnessed{s.\varphi} |- \witnessed{s.\varphi'}$ in the sequent calculus
such that $s \not\in FV(\Phi)$ and $\Phi$ contains only formulas $p ~ \vec{v}$, then we have
$~\Gamma , s{:}\statet ~|~ \Phi , \varphi |- \varphi'$.
\end{theorem}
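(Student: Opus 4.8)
The plan is to argue by induction on the given \emph{cut-free} derivation $D$ of $\Gamma \mid \Phi, \witnessed{s.\varphi} \vdash \witnessed{s.\varphi'}$, exploiting the syntax-directed character of the sequent calculus together with the hypothesis that $\Phi$ consists only of atomic predicates $p~\vec{v}$ with $s \notin FV(\Phi)$. The crucial enabling fact is admissibility of cut: without it, $D$ could contain cuts that hide the decomposition we want to expose, so the very first thing is to appeal to the cut-admissibility theorem and work with a cut-free $D$. The second observation is that the $\witnessed{\cdot}$ connective is essentially inert in this calculus: the only rule mentioning it is \textsc{Witnessed-Weaken-SC}, so a $\witnessed{\cdot}$ formula on the left can never be decomposed, and $\witnessed{s.\varphi'}$ on the right can only arise by (i) the identity axiom matching it against a left-hand $\witnessed{\cdot}$ formula, (ii) right-weakening from a sequent whose left-hand side is already inconsistent, or (iii) \textsc{Witnessed-Weaken-SC} acting on some left-hand $\witnessed{\cdot}$ formula. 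Since, by the restriction on $\Phi$, the only $\witnessed{\cdot}$ formula on the left is $\witnessed{s.\varphi}$, in case (i) we get $\varphi = \varphi'$ and in case (iii) the rule is applied to exactly $\witnessed{s.\varphi}$ against $\witnessed{s.\varphi'}$.

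Third, since $\Phi$ contains only atoms $p~\vec{v}$ and $s \notin FV(\Phi)$, the only non-structural rules that can fire on the left part of the sequent are the equality and preorder rules (\textsc{Eq-Refl-SC}, \textsc{Eq-Transport-SC}, \textsc{Rel-Refl}, \textsc{Rel-Trans}, \textsc{Pair-Injective}, \textsc{Sum-Disjoint}), each of which modifies only the left-hand side and preserves the invariant of being a set of atoms not mentioning $s$ (the value terms involved are typed in $\Gamma$, which does not bind $s$). Hence $D$ has the shape of a (possibly empty) sequence of such left-manipulations and structural steps applied to $\Phi$, capped at its leaves by an identity axiom (forcing $\varphi = \varphi'$), by an application of \textsc{Witnessed-Weaken-SC} on $\witnessed{s.\varphi}$ against $\witnessed{s.\varphi'}$ whose premise is literally $\Gamma, s{:}\statet \mid \Phi_i, \varphi \vdash \varphi'$ since the right context is a singleton, or by a leaf exhibiting the inconsistency of the accumulated left context.

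To turn this into a proof I would state a slightly strengthened induction hypothesis that allows the left context to be an arbitrary atomic set $\Phi_0$ with $s \notin FV(\Phi_0)$ (and allows the right side to carry the extra atoms that weakening can introduce): from a cut-free derivation of $\Gamma \mid \Phi_0, \witnessed{s.\varphi} \vdash \witnessed{s.\varphi'}$ we extract a derivation of $\Gamma, s{:}\statet \mid \Phi_0, \varphi \vdash \varphi'$. The induction is by case analysis on the last rule of $D$: for a structural or equality/preorder step we apply the IH to the premise(s) and re-apply the same rule to the left-hand side of the goal (legitimate, since those rules touch only the atomic part and neither they nor $\Gamma$ involve $s$); for the \textsc{Witnessed-Weaken-SC} leaf the premise is the goal; for the identity axiom we have $\varphi = \varphi'$ and conclude by the (derivable) general identity rule $\Gamma, s{:}\statet \mid \Phi_0, \varphi \vdash \varphi$; and in the inconsistent-context case we simply replay the left-derivation, which concludes any right-hand side, in particular $\varphi'$.

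The step I expect to be the main obstacle is the careful enumeration of cases in this induction — in particular verifying that no rule can interfere with the two $\witnessed{\cdot}$ formulas except \textsc{Witnessed-Weaken-SC}, and that contraction and weakening cannot smuggle a second $\witnessed{\cdot}$ formula onto the left. This is precisely where the assumption ``$\Phi$ contains only formulas $p~\vec{v}$'' is indispensable, and it must be threaded through every inductive case. A secondary subtlety is that \textsc{Eq-Transport-SC} is restricted to atomic predicates, so one must check that the possibly non-atomic $\varphi$ is never itself the subject of a left rule inside $D$; but by the shape analysis above $\varphi$ occurs only in the premise of the final \textsc{Witnessed-Weaken-SC} step and is never decomposed in $D$, so no difficulty actually arises.
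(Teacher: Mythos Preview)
Your proposal is correct and matches the paper's approach: the paper does not spell out a detailed proof in the main text, but explicitly sets things up so that the theorem follows ``based on \ldots\ the syntax directed nature of the sequent calculus rules'' after establishing cut admissibility, which is precisely the induction on a cut-free derivation you describe. Your identification of the key ingredients---that \textsc{Witnessed-Weaken-SC} is the only rule touching $\mathsf{witnessed}$, that the Negri-style equality/preorder rules act only on atomic left formulas, and that the restriction on $\Phi$ prevents any other left rule from firing---is exactly what makes the case analysis go through.
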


Another standard consequence of cut admissibility is logical \emph{consistency}:
the addition of the rules concerning \ls$witnessed$, $==$, and $\rel$ do
not make our natural deduction system inconsistent.

%\begin{corollary}
%There is no derivation of $~|- \bot$ in our sequent calculus and natural deduction.
%\end{corollary}

%\begin{itemize}
%\item sequent calculus
%\item its cut admissibility
%\item translations between ND and the sequent calculus
%\end{itemize}

\section{Revealing the representation of the monotonic-state monad}
\label{sect:reify}

\ch{For full honesty, we would make it clear that part is more
  speculative than the rest of the paper. In particular, it wasn't yet
  implemented in \fstar{} and we don't really know whether it would
  work in practice. First experiments are not so encouraging.}

\ch{Should better explain the limitations of \autoref{sect:reify},
  in preparation for a proper solution to the problem.}
  
\da{I tried to add couple of sentences about this in the future 
  work section.}

So far, the key ingredient for ensuring that
the \ls$witness$ and \ls$recall$ actions are sound has been the abstract
treatment of \ls$MST$. In particular,
as illustrated in \autoref{sec:mst}, carelessly breaking the abstraction
and revealing \ls$MST$ computations as pure state-passing functions
%\ls@s$_0$:state -> (a * s$_1$:state{s$_0$ `rel` s$_1$})@
quickly leads to unsoundness in the presence of \ls$witness$ and
\ls$recall$.
In this section, we present our full formal dependently typed lambda
calculus \rMSTcalculus (as
a delta relative to \MSTcalculus that we studied in \autoref{sect:metatheory}),
which also supports revealing the
representation of computations in a controlled and provably safe way.

\subsection{Reify and Reflect, by Example}
\label{sec:reifyreflectbyexample}

Our starting point is to introduce two coercions for soundly exposing
the representation of the monotonic-state monad:
\ls$reify$ for revealing the representation of an \ls$MST$ computation
as a state-passing function; and \ls$reflect$ that turns a
state-passing function into an \ls$MST$ computation \cite{Filinski94,
dm4free}.
Then, we carefully arrange the \rMSTcalculus calculus so that the
typing of computations that rely on the representation of \ls$MST$
effectively treat the \ls$witness$ and \ls$recall$ actions as no-ops.
By following this approach, we prevent the inconsistencies sketched in \S\ref{sec:mst}, yet
allow several interesting applications based on the pure state-passing
representation of \ls$MST$.

%% %
%% By giving these coercions strong types, as well as changing \ls$MST$
%% itself and the types of \ls$witness$ and \ls$recall$, we will be able
%% to prevent inconsistencies, while still enabling interesting applications.

%\da{say they are variants of monadic reification and reflection}

\paragraph{Reification is Useful.} One useful application of \ls$reify$ is
to prove relational properties of stateful programs \cite{relational}.
Consider a variant of the monotonic counter from \autoref{sec:mst}
that we can increment by an arbitrary positive amount.
We would like to be able to show that a simple computation branching
on a secret boolean \ls$h$ and then either incrementing the counter by
2 or incrementing it twice by 1 (see \ls$incr2$ below) does not leak
information about \ls$h$ into the public counter, a noninterference
property proven by the assertion on reified terms on the last line:
\begin{lstlisting}
let incr (n:nat) : MST unit = put (get() + n)
let incr2 (h:bool) : MST unit = if h then incr 2 else (incr 1; incr 1)
assert (forall h$_0$ h$_1$ c. reify (incr2 h$_0$) c = reify (incr2 h$_1$) c) (* noninterference *)
\end{lstlisting}
In particular, observe that by turning an abstract \ls$MST$ computation 
into a pure state-passing function, \ls$reify$ serves two purposes in this 
example: on the one hand, it enables one to apply the \ls$MST$ computation 
\ls$incr2$ to an initial state argument \ls$c$; on the other hand, it allows 
this \ls$MST$ computation to be used in the assertion where only pure 
(effect-free) programs are allowed in \fstar.

\paragraph{Reflection is Useful.} The dual of reification, monadic
reflection, enables users to extend effect interfaces with new actions
defined as pure state-passing functions.
For instance, suppose we want to add a new monadic action for squaring
the value of a monotonic counter.
The counter interface might not provide all the operations for
conveniently implementing this (\EG it might only provide \ls$incr$
and \ls$is_above n$), but if the interface exposes \ls$reflect$, then we can
easily implement the squaring action from scratch. e.g.,
\ls$let square () : MST unit = reflect (fun s -> ((), s * s))$.

\vspace{0.25cm}

We expect this kind of reasoning to be sound, certainly when, as in
the examples above, \ls$witness$ and \ls$recall$ are not used at all.
However, even slightly more complex examples implicitly rely
on \ls$witness/recall$, though this may not always be immediate,
e.g., programs using typed references implicitly
use \ls$recall$ at dereferencing.
Yet, allowing unrestricted combinations of
\ls$reify$, \ls$reflect$, \ls$witness$, and \ls$recall$
immediately enables counterexamples (\autoref{sec:mst}).
Our type system of \rMSTcalculus ensures that programs relying on
the representation of \ls$MST$ gain no benefit from
using \ls$witness$ and \ls$recall$.

\subsection{\rMSTcalculus: Syntax and Static Semantics}

%We justify the correctness of the above examples by extending \MSTcalculus
%with safe controlled monadic reification and reflection for monotonic state.
%We achieve controlled reification and reflection by indexing the \ls$MST$
%computation type with a boolean index $b$; we thus call the resulting calculus \rMSTcalculus.
%The intuitive idea is that reification is allowed when $b = \true$ and not allowed

%% We now give a formal description of controlled and safe monadic reification
%% and reflection for monotonic state. We do so by defining and studying
%% a variant of the \MSTcalculus calculus from \autoref{sect:metatheory}; we
%% call this new calculus \rMSTcalculus.
%% This name originates from our key idea for safely accommodating \ls$reify$:
%% %and \ls$reflect$: -- reflect would also work without this boolean
%% we index

The main idea of \rMSTcalculus is to simultaneously provide two
version of the \ls$MST$ computation type, one which is abstract, and
another whose representation is exposed.
As such, we propose an \emph{indexed computation type}, \ls@MST$_b$@, where
the boolean $b$ signifies whether or not the computation is
``representation aware'' ($b=\true$) or abstract ($b=\false$), i.e.,
the abstract \ls$MST$ computation type of~\autoref{sect:metatheory} is
really just \ls@MST$_\false$@.
Moreover, reusable \ls@MST$_b$@ computations can be parameterized over
an arbitrary boolean $b$, and so will be usable with both values of
$b$, similar to an intersection type.

The value types $t$ of \rMSTcalculus are identical to \MSTcalculus;
and we encode the type of booleans simply as $\boolt =_{\text{def}} \unit + \unit$.
%with the two truth values given by $\true =_{\text{def}} \inl ()$, and $\false =_{\text{def}} \inr ()$.
As a convention, we use $b$ to range over value terms of type $\boolt$.
More interestingly, \rMSTcalculus's computation types $C$ are given
by the following extended grammar:
\[
\begin{array}{l c l}
C & ::= & \pure t {\pre} {x.~\post}
               ~\mid~
	       \bmst b t {s.~ \varphi_{\text{pre}}}
                            {s~x~s'.~\varphi_{\text{post}}}
\end{array}
\]
%The \ls$MST$ computation type is now indexed by a boolean $b$.
%
To give strong types to the state-passing functions involved
in \ls$reify$ and \ls$reflect$, we follow \fstar and introduce
\ls$Pure$, the type of pure, effect-free computations \cite{dm4free},
enabling us to reason about pure programs in terms of pre- and
postconditions using the same logic as in \autoref{sect:metatheory}.

The value and computation terms of \rMSTcalculus are also a minor
extension of \MSTcalculus's terms.  In particular, \rMSTcalculus also
includes a value term for monadic reification ($\reify e$), and a
computation term for monadic reflection ($\reflect v$). We also
provide a coercion from representation-aware to abstract computations
($\coerce e$, also a computation term), to be able to reuse a reflected
state-passing function in a context expecting an abstract computation.

As the monotonic-state monad is now indexed by a boolean $b$,
the \ls$get$, \ls$put$, \ls$witness$, and \ls$recall$ actions, and
the \ls$return$ of \ls@MST$_b$@, all take an additional boolean-valued
argument. Nevertheless, computation terms in \rMSTcalculus include
$\return v$ to return values in the \ls$Pure$ computation type.
\[
\begin{array}{l c l l c l}
v & ::= & \ldots
 ~ \mid ~ \reify e
 \\[1mm]
 e & ::= & \ldots
 ~ \mid ~ \breturn b ~ v
 ~ \mid ~ \bstget b
 ~ \mid ~ \bstput b ~ \sigma
 ~ \mid ~ \bwitness b ~ {s . \varphi}
 ~ \mid ~ \brecall b ~ {s . \varphi}
 ~ \mid ~ \reflect v
 ~ \mid ~ \coerce e
\end{array}
\]

%% The well-formedness
%% and subtyping rules for \rMSTcalculus's value and computation types
%% are analogous to the corresponding rules for \MSTcalculus, we thus omit them.
%% We do not subtype the boolean indices of \ls$MST$$_b$;
%% instead we use the explicit coercions $\coerce e$.
%% Regarding the typing of value and
%% computation terms, we present the more interesting typing rules
%% in \autoref{fig:second-typing}.

\autoref{fig:second-typing} presents the main differences
to the typing rules of \rMSTcalculus relative to \MSTcalculus, where
parts typeset in gray apply only to
abstract \ls@MST$_\false$@ computations.
The \textsc{T-Witness} and \textsc{T-Recall} rules now
guard their pre- and postconditions with the boolean $b$.
As noted previously, for representation-aware computations ($b = \true$),
\ls$witness$ and \ls$recall$ are just no-ops, as seen by focusing only
on the non-gray parts of the rules.
Since representation-aware computations cannot make meaningful use of
monotonicity, the \textsc{T-Coerce} rule allows them to be
promoted to abstract \ls$MST$ computations---a coercion in the other
direction would immediately break soundness.
Finally, the postcondition of the state-passing function type
used in the \textsc{T-Reify} and \textsc{T-Reflect} rules
crucially captures the monotonic evolution of state.
For \ls$reify$, this is an important invariant that \ls$MST$
computations provide, which we can \emph{rely} on when their representation
is revealed. Dually, for \ls$reflect$, representation-aware code must
\emph{guarantee} to preserve an abstract computation's monotonic
view of the state.

\begin{figure}

\[\setnamespace{0pt}\setpremisesend{5pt}
\begin{array}{c}

\inferrule*[lab=(T-Witness)]
{
  \Gamma |- b : \boolt
  \qquad
  \Gamma,s{:}\statet |- \varphi ~ \wf
  \qquad
  {\color{gray}\stable{s.\varphi} =_\text{def} \forall s'.~\forall s''.~  \rel~s'~s'' \wedge \varphi[s'/s] ==> \varphi[s''/s]}
}
{
  \hspace{-1.5cm}
  \Gamma |- \bwitness b {s.\varphi} :
    {\color{dkblue}\mathsf{MST}}_{b}~\unit~
      (s'.~ {\color{gray} (b == \false) ==> (\stable{s.\varphi} \wedge \varphi[s'/s])})\\\\
      \hspace{4.275cm}
      (s'~x~s''.~ s' == s'' \wedge {\color{gray}((b == \false) ==>  \witnessed{s.\varphi})})
    %\bmst b {\unit} {s'.~ \stable{s.\varphi} \wedge \varphi[s'/s]}
     %            {s'~x~s''.~ s' == s'' \wedge  \witnessed{s.\varphi}}
}

\\[0.6em]

\inferrule*[left=(T-Recall)]
{
  \Gamma |- b : \boolt
  \qquad
  \Gamma,s{:}\statet |- \varphi ~ \wf
}
{
  \hspace{-1.65cm}
  \Gamma |- \brecall b {s.\varphi} :
      {\color{dkblue}\mathsf{MST}}_{b}~\unit~
      (s'.~ {\color{gray}(b == \false) ==> \witnessed{s.\varphi}})\\
      \hspace{3.95cm}
      (s'~x~s''.~ s' == s'' \wedge {\color{gray}((b == \false) ==> \varphi[s''/s])})
    %\mst {\unit} {s'.~ \witnessed{s.\varphi}}
     %            {s'~x~s''.~ s' == s'' \wedge \varphi[s''/s]}
}

\\[0.8em]

\inferrule*[left=(T-Coerce)]
{
  \Gamma |- e : \bmst {\true} t {s.~\pre} {s~x~s'.~\post}
}
{
  \Gamma |- \coerce e : \bmst {\false} t {s.~\pre} {s~x~s'.~\post}
}

\\[0.6em]

\inferrule*[left=(T-Reify)]
{
  \Gamma |- e :
     \bmst {\true} {t} {s.~\pre} {s~x~s'.~\post}
}
{
  \Gamma |- \reify e :
    s{:}\statet -> \pure{(t \times \statet)}{\pre}
                            {y.~\exists x .~ \exists s' .~ y == (x,s') \wedge \rel~s~s' \wedge \post}
}

\\[0.8em]

\inferrule*[left=(T-Reflect)]
{
  \Gamma |- v : s{:}\statet -> \pure{(t \times \statet)}{\pre}
                            {y.~\exists x .~ \exists s' .~ y == (x,s') \wedge \rel~s~s' \wedge \post}
}
{
  \Gamma |- \reflect v : \bmst {\true} t {s.~\pre} {s~x~s'.~\post}
}

\end{array}
\]
\caption{Selected typing rules for \rMSTcalculus}
\label{fig:second-typing}
\end{figure}

%As such, the underlying representation of reifiable \ls$MST$
%computations is given by an instance of \citepos{Ahman13}
%update monads.
%\da{with S = state and P s = {s' | rel s s'}}

\subsection{Instrumented Operational Semantics of \rMSTcalculus}

We equip \rMSTcalculus with an instrumented operational semantics that
is a minor extension of  \MSTcalculus's semantics from \autoref{sec:first-opsem}.
The
reduction relation $\leadsto$ again works on \emph{configurations} $(e,\sigma,W)$.
We give the new reduction rules in \autoref{fig:second-reduction}.
Importantly, \rMSTcalculus has two rules for
the \ls$witness$ action:
\textsc{Witness-False} is just the \textsc{Witness}
reduction rule of \autoref{sec:first-opsem},
whereas \textsc{Witness-True} states that \ls$witness$ is a no-op in
representation-aware code.
The \textsc{Reify-Return} and \textsc{Reify-Context} rules make it
precise that $\reify e$ is a pure state-passing function that behaves
exactly like the given stateful computation~$e$, when applied to a
state $\sigma$. In particular, the \textsc{Reify-Return} rule
says that the reification of $\breturn {\true} {v}$ is essentially the same as
the pure state-passing function `$\lambda s{:}\statet .~ \return (v,s)$'; and
the \textsc{Reify-Context} rule takes care of \ls$get$, \ls$put$, etc., by
evaluating them as stateful computations.

The \textsc{Reflect-Return} and \textsc{Reflect-Context} rules
describe that $\reflect v$ behaves exactly like the
given pure state-passing function $v$.
In particular, in \textsc{Reflect-Context}
we evaluate the given function $\lambda s{:}\statet.~e$ by first substituting
the initial state $\sigma$ into its body, and by then reducing the resulting
pure computation term.
The computation term $e'$ (the reduct) is no longer dependent on the
lambda-bound variable, since the initial substitution of $\sigma$ into
$e$ removes all occurrences of the original bound variable.
\da{this explanation needs to be improved}
Finally, when the state-passing function finishes reducing
under \ls$reflect$, the \textsc{Reflect-Return} rule replaces the
current program state with the state computed by the pure
state-passing function. Together, \textsc{Reflect-Context}
and \textsc{Reflect-Return} ensure that a reflected computation's
effect on the state $\sigma$ occur atomically, i.e., the intermediate
states of pure, state-passing function are not observable and need not
respect the preorder, although, by \textsc{T-Reflect}, taken as a
single state transition, a reflected function is preorder compliant.

\da{comparison with EMF* semantics to be added when Kenji finishes the proof}

\begin{figure}
\[\setnamespace{0pt}\setpremisesend{5pt}
\hspace{-0.3cm}
\begin{array}{c}
\begin{array}{rrcl}
{\color{gray}\textsc{\small(Witness-False)}} &
{\color{gray}  (\bwitness \false s.\varphi ,\sigma,W)} & {\color{gray}\leadsto} & {\color{gray}(\breturn \false (),\sigma,W \cup s.\varphi)}
\\
\textsc{\small(Witness-True)} &
  (\bwitness \true s.\varphi ,\sigma,W) & \leadsto & (\breturn \true (),\sigma,W)
\\
\textsc{\small(Reify-Return)} &
  ((\reify (\breturn \true v)) ~ \sigma,\sigma',W) & \leadsto & (\return (v,\sigma),\sigma',W)
\\
\textsc{\small(Reflect-Reify)} &
  (\reflect (\reify e),\sigma,W) & \leadsto & (e,\sigma,W)
\\
\textsc{\small(Reflect-Return)} &
  (\reflect (\lambda s{:}\statet .~ \return (v,\sigma')),\sigma,W)
  & \leadsto &
  (\breturn \true v[\sigma/s],\sigma'[\sigma/s],W)
\\
\textsc{\small(Coerce-Return)} &
  (\coerce (\breturn \true v),\sigma,W)
  & \leadsto &
  (\breturn \false v,\sigma,W)
\end{array}
\\\\[-0.7em]
\begin{array}{c}
\inferrule*[left=(Reify-Context)]
{
  (e,\sigma,W) ~ \leadsto ~ (e',\sigma',W')
}
{
  ((\reify e) ~ \sigma,\sigma'',W) ~ \leadsto ~ ((\reify e') ~ \sigma',\sigma'',W')
}
\\[0.7em]
\inferrule*[left=(Reflect-Context)]
{
  (e[\sigma/s],\sigma,W) ~ \leadsto ~ (e',\sigma',W')
}
{
  (\reflect (\lambda s{:}\statet .~ e),\sigma,W)
  ~ \leadsto ~
  (\reflect (\lambda \_{:}\statet .~ e'),\sigma',W')
}
\end{array}
\end{array}
\]

\caption{Selected reduction rules for \rMSTcalculus}
\label{fig:second-reduction}
\end{figure}

\subsection{Correctness for \rMSTcalculus}
\label{sec:correctnessrMST}

We establish a partial correctness result for
\rMSTcalculus, similar to the one proved for \MSTcalculus in
\autoref{sec:correctness}.
We also structure our correctness proof using progress and
preservation theorems; as they are mostly analogous to
\autoref{sec:correctness}, we omit them here and only note the main differences:
(1) in order to use the well-typedness of the substitution $e[\sigma/s]$
for the \textsc{Reflect-Context} rule,
we consider only well-typed initial states in the progress theorem;
(2) \ls$Pure$ reduction steps do not modify the
state and the log of witnessed stable predicates; and
(3) \ls$MST$$_\true$ reduction steps to not change the log.
%Moving from partial to total correctness via a normalization result
%applies as before.

\begin{theorem}[Partial correctness, for Pure]
If ${~|- e : \pure{t}{\pre}{x.~\post}}$ and we have a reduction sequence
${(e,\sigma,W) \leadsto^{*} (\return v,\sigma',W')}$ such that $|- (\sigma,W) ~ \wf$
and $\witnessed W |- \pre$, then $W = W'$ and
$\sigma = \sigma'$ and $|- v : t$ and
$\witnessed W' |- \post[\sigma/s,v/x,\sigma'/s']$.
\end{theorem}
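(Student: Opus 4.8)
The plan is to obtain this partial-correctness statement for \ls$Pure$ exactly as in \autoref{sec:correctness}, namely as a corollary of the \emph{progress} and \emph{preservation} theorems for \rMSTcalculus, proceeding by induction on the length $n$ of the reduction sequence ${(e,\sigma,W) \leadsto^{*} (\return v,\sigma',W')}$. The base case $n=0$ is handled by the \ls$Pure$ analogue of the Correctness-of-$\mathsf{return}$ proposition: here $e = \return v$, $\sigma=\sigma'$, $W=W'$, and inverting ${|- \return v : \pure{t}{\pre}{x.~\post}}$ (up to subtyping for \ls$Pure$) yields $|- v : t$ together with validity of $\pre ==> \post[v/x]$, so $\witnessed W |- \pre$ gives $\witnessed{W'} |- \post[\sigma/s,v/x,\sigma'/s']$ (the two state substitutions coinciding since $\sigma=\sigma'$).

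For the inductive step, split the sequence as ${(e,\sigma,W) \leadsto (e_1,\sigma_1,W_1) \leadsto^{*} (\return v,\sigma',W')}$. The point specific to \ls$Pure$ — difference (2) noted in \autoref{sec:correctnessrMST} — is that a \ls$Pure$ reduction step changes neither the state nor the log, so $\sigma_1 = \sigma$ and $W_1 = W$; in particular ${|- (\sigma_1,W_1)~\wf}$ and $\witnessed{W_1} |- \pre$ still hold, and the eventual equalities $W=W'$, $\sigma=\sigma'$ will follow by transitivity from the induction hypothesis. Preservation for \ls$Pure$ then produces $\pre'$, $t'$, $\post'$ with ${|- e_1 : \pure{t'}{\pre'}{x.~\post'}}$, ${|- t' <: t}$, ${\witnessed{W_1} |- \pre'}$, and a logical entailment from $\post'[\dots]$ back to $\post[\dots]$. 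Applying the induction hypothesis to the shorter run from $(e_1,\sigma_1,W_1)$ gives $W_1 = W'$, $\sigma_1 = \sigma'$, $|- v : t'$, and $\witnessed{W'} |- \post'[\sigma_1/s,v/x,\sigma'/s']$; composing with the subtyping ${|- t' <: t}$ and the preservation entailment discharges the step.

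The genuine work therefore lies not in this induction but in establishing the \ls$Pure$ cases of progress and preservation themselves, which are where the subtleties of \rMSTcalculus concentrate: progress must be stated for well-typed initial states so that the substitution $e[\sigma/s]$ in the \textsc{Reflect-Context} rule is itself well typed, and preservation must show that the internal \ls$Pure$ steps of a reflected function neither leak intermediate states nor perturb the log (difference (1) and (2)). The other delicate bookkeeping is threading the accumulated subtyping and logical entailments from each preservation step so that, at the end, the postcondition derived for $\return v$ can be weakened back to the postcondition $\post$ of the original type of $e$. I expect this last composition of entailments to be the main (though still routine) obstacle; it is strictly easier here than in the \ls$MST$ fragment of \autoref{sec:correctness}, because along \ls$Pure$ reductions the state and the witnessed-log are invariant, so there is no monotonicity obligation $\rel~\sigma~\sigma'$ to discharge and no new $\witnessed{}$ assumption to introduce.
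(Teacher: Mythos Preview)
Your proposal is correct and follows essentially the same route as the paper: the \ls$Pure$ partial-correctness theorem is obtained by combining the \rMSTcalculus progress and preservation results (with the differences (1)--(2) you identify) via a routine induction on the length of the reduction sequence, exactly mirroring the structure of \autoref{sec:correctness}. One small slip: the rule that actually arises when a \ls$Pure$ term steps into nested stateful evaluation is \textsc{Reify-Context} (since $(\reify e)~\sigma$ is the \ls$Pure$ term), whereas \textsc{Reflect-Context} governs an \ls$MST$$_\true$ step; this does not affect your argument, since the progress and preservation lemmas for \ls$Pure$ and \ls$MST$$_b$ are established jointly and the well-typed-initial-state requirement (difference (1)) is needed for the latter.
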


\begin{theorem}[Partial correctness, for MST$_b$]
If ${~|- e : \bmst{b}{t}{s.~\pre}{s~x~s'.~\post}}$ and
${(e,\sigma,W) \leadsto^{*} (\return v,\sigma',W')}$ such that $|- (\sigma,W) ~ \wf$
and $\witnessed W |- \pre[\sigma/s]$, then $W \subseteq W'$ (${W \!=\! W'}$ if ${~b\!=\!\true}$) and
$\witnessed {W'} |- \rel ~ \sigma ~ \sigma'$ and $|- v : t$ and
$\witnessed W' |- \post[\sigma/s,v/x,\sigma'/s']$.
\end{theorem}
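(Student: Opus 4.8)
The plan is to follow the two-step recipe of \autoref{sec:correctness}: prove progress and preservation for \rMSTcalculus, then chain preservation along the reduction sequence and close with a ``correctness of \ls$return$'' lemma. Since \rMSTcalculus has two computation-type formers, I would state and prove progress and preservation \emph{simultaneously} for both $|- e : \pure{t}{\pre}{x.~\post}$ and $|- e : \bmst{b}{t}{s.~\pre}{s~x~s'.~\post}$, so that the mutual recursion introduced by \ls$reify$/\ls$reflect$ (which move between the \ls$Pure$ and \ls@MST$_\true$@ worlds) is handled uniformly. Progress is by induction on the typing derivation; the only delicate point relative to \MSTcalculus is \textsc{Reflect-Context}, whose premise reduces the \emph{substituted} body $e[\sigma/s]$ rather than $e$, so the induction hypothesis is only available when $e[\sigma/s]$ is well typed. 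This forces the progress statement — and hence the partial-correctness statement — to assume $|- (\sigma,W)~\wf$, exactly as the surrounding text anticipates.

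Preservation is the crux. I would prove it by induction on a measure combining the height of the derivation of $(e,\sigma,W)\leadsto(e',\sigma',W')$ with the size of $e$, inverting the typing of $e$ for each reduction rule, just as in \autoref{thm:preservation}; in \textsc{Reflect-Context} the height component strictly decreases, so the blow-up of $e[\sigma/s]$ in the size component is harmless. All cases inherited from \MSTcalculus go through, with \textsc{Witness} splitting into \textsc{Witness-False} (the old case) and \textsc{Witness-True}, where $W'=W$ and — inverting \textsc{T-Witness} with $b=\true$ — the postcondition collapses to $s'==s''$, so nothing is left to reprove. The genuinely new cases are the reify/reflect/coerce rules. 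For \textsc{Coerce-Return} we simply retype $\breturn{\true}{v}$ at \ls@MST$_\false$@. For \textsc{Reify-Return} we retype $\return{(v,\sigma)}$ at the \ls$Pure$ type prescribed by \textsc{T-Reify}, using reflexivity of $\rel$ for the $\rel~s~s'$ conjunct and the postcondition $s==s'\wedge x==v$ of $\breturn{\true}{v}$ obtained by inverting \textsc{T-Return}. For \textsc{Reflect-Reify} we observe that $\reify{e}$ already has precisely the state-passing function type consumed by \textsc{T-Reflect}, so $\reflect{(\reify{e})}$ and $e$ receive the same \ls@MST$_\true$@ type. For \textsc{Reify-Context} and \textsc{Reflect-Context} we apply the induction hypothesis to the inner reduction and re-wrap with $\reify{(\cdot)}$, resp.\ $\reflect{(\lambda\_{:}\statet.~\cdot)}$, using that \ls@MST$_\true$@ and \ls$Pure$ steps leave $W$ untouched, which sustains the ``$W=W'$ when $b=\true$'' bookkeeping. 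Finally, for \textsc{Reflect-Return} we show that overwriting the program state $\sigma$ with the state returned by the pure function still respects $\rel$; this is exactly the $\rel~s~s'$ conjunct that \textsc{T-Reflect} bakes into the postcondition, instantiated at the initial state.

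I expect the main obstacle to be \textsc{Reflect-Context}. Its reduct is $\reflect{(\lambda\_{:}\statet.~e')}$, where $e'$ comes from purely reducing $e[\sigma/s]$; to re-apply \textsc{T-Reflect} one must exhibit a \ls$Pure$ type for $\lambda\_{:}\statet.~e'$ whose pre- and postconditions remain strong enough, which means threading the \ls$Pure$-preservation invariant through a body that no longer mentions the bound variable — so the induction must be stated for such ``partially reduced'' reflect terms $\reflect{(\lambda\_{:}\statet.~e')}$ with $e'$ a closed \ls$Pure$ computation. The payoff of this care is the soundness argument that was missing in \autoref{sec:mst}: because \ls$Pure$ reductions alter neither $\sigma$ nor $W$, no \ls$witnessed$ capability recorded in $W$ can be cashed in against an intermediate state of the state-passing function that violates the preorder, and \ls$recall$ is anyway a no-op under $b=\true$; the only externally visible state change is the single atomic \textsc{Reflect-Return} step, which \textsc{T-Reflect} guarantees to be preorder-compliant. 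Re-establishing $|- (\sigma',W')~\wf$ along \textsc{Reflect-Context} is then immediate, since $\sigma$ and $W$ do not move until \textsc{Reflect-Return}.

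To finish, I would prove the two ``correctness of \ls$return$'' propositions — one inverting the \ls$Pure$ \textsc{T-Return}, one inverting the \ls@MST$_b$@ \textsc{T-Return} — together with the relevant subtyping facts, exactly as in \autoref{sec:correctness}. Partial correctness for \ls@MST$_b$@ then follows by induction on the length of $(e,\sigma,W)\leadsto^{*}(\return{v},\sigma',W')$: the base case is the \ls@MST$_b$@ ``correctness of \ls$return$'' proposition, and the inductive step feeds the outputs of one preservation step — the new computation type, the facts $W\subseteq W'$ and $\witnessed{W'}|-\rel~\sigma~\sigma'$, and the re-established $|- (\sigma',W')~\wf$ and precondition — into the induction hypothesis, combining the two $\rel$ facts by transitivity and the two postcondition entailments as in \autoref{thm:partialcorrectness}; the refinement ``$W=W'$ when $b=\true$'' is preserved because every \ls@MST$_\true$@ and \ls$Pure$ reduction step keeps the log fixed. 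The ``Partial correctness, for Pure'' statement drops out as the special case in which $\sigma$ and $W$ never change.
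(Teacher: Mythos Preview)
Your proposal is correct and follows essentially the same approach as the paper: progress and preservation (now simultaneously for \ls$Pure$ and \ls@MST$_b$@) combined with a correctness-of-\ls$return$ lemma, with exactly the three refinements the paper highlights---well-typed initial states to handle the substitution in \textsc{Reflect-Context}, \ls$Pure$ steps leaving both state and log fixed, and \ls@MST$_\true$@ steps leaving the log fixed. You have in fact spelled out considerably more of the case analysis (notably for the reify/reflect/coerce rules and the atomicity argument around \textsc{Reflect-Return}) than the paper's terse ``mostly analogous'' treatment in the main text.
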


\subsection{Discussion}
\label{sec:discussion}

We conclude by observing that, while \rMSTcalculus 
enables us to mix monotonic-state computations and 
reification-reflection in a sound way, the 
situation is not entirely satisfactory. 
Any 
\ls@MST$_b$@ computation that is parametric in 
$b$ essentially has to be equipped with 
two specifications, one making use of on monotonicity 
and employing the state-independent $\witnessed{s.\varphi}$ propositions, 
and the other using the corresponding predicates $s.\varphi$ as stateful 
invariants, so as to compensate for \ls$witness$ and \ls$recall$ being 
treated as no-ops when $b == $~\ls$true$. Instead, 
we expect it may be possible to avoid such duplication,
with abstract monotonic-state computations given a single
specification using state-independent $\witnessed{s.\varphi}$ 
propositions, as in \autoref{sect:metatheory}, from which the 
corresponding specification of the underlying representation (using
predicates $s.\varphi$ as stateful invariants) would then be 
derived automatically. We leave such a solution for future work
and further discuss it in relation to hybrid
modal logic in \autoref{sec:conclusion}.

\section{Related work}
\label{sec:relatedwork}

We have covered several strands of related work throughout the paper.
As discussed earlier, the closest related work to ours
is~\citepos{fstar-pldi13}, who propose and use in \fstar (but do not
formalize) a simple form of the witness/recall style that we
generalize, formalize, implement, and illustrate in this paper. However, reasoning about
monotonic state has a longer tradition---we discuss some closely
related, recent efforts below.

\vspace{-0.025cm}

\paragraph{The Essence of Monotonic State.}
\citet{PilkiewiczP11} devise a type system based on
linear and non-linear capabilities to model monotonic state.
Their \emph{fates} are linear capabilities constrained to evolve by a
preorder that are associated with, and describe invariants of,
fragments of the state. Due to their linearity, programs can gain
ownership of a fate, update the associated state, restoring
monotonicity without allowing intermediate, potentially non-monotonic
state transitions to be observable. Analogous to our stable
predicates, the non-linear capabilities in their system
(called \emph{predictions}) are properties that are stable with
respect to the evolution of a fate.
In contrast, our \ls$MST$ monad is based on a simpler, more abstract,
global state, deriving notions like typed references
which \citeauthor{PilkiewiczP11} take as primitive.
On the other hand,
our technique lacks any specific support for linearity or ownership;
it would be interesting to investigate this in the future.

\vspace{-0.025cm}

\paragraph{Rely-Guarantee References.} \citet{GordonEG13} devise a
monadic DSL in Coq with an \emph{axiomatic} heap and typed references,
where, like our monotonic references, a reference is associated with a
\emph{guarantee} relation constraining how its values may evolve. Whereas
in our `\ls$mref a rel$' the relation is a preorder
on \ls$a$, \citeauthor{GordonEG13} allow their guarantee relation to also
mention the entire heap.
This appears to be a middle ground between global preorders on the entire state
%(composed pointwise of local \ls$mref$ preorders)
and the per-\ls$mref$
preorders.
% also provided by our libraries--our examples simultaneously
%use both the local and global preorders.
%
\citeauthor{GordonEG13} also decorate references with a
\emph{rely} relation, analogous to our stable predicates, that are stable
with respect to guarantee.
They also include aliasing control whereby multiple
references to the same data can provide different, but compatible,
rely-guarantee relations.

\vspace{-0.025cm}

\paragraph{PCM-Based Separation Logics.}
Starting with the Fictional Separation Logic of \citet{Jensen12}, many
modern (concurrent) separation logics (e.g., \citepos{Krebbers0BJDB17}
Iris) use partial commutative monoids (PCMs) to allow users to define
their own problem-specific abstractions of the physical state.  These
PCM-based separation logics are very general, and can be used to
encode many styles of spatial and temporal reasoning about programs,
including reasoning about monotonic state. For example, \citeauthor{Jensen12} 
show how to use their logic to reason about a monotonic counter \ls$c$ using a 
(freely duplicable) predicate \ls$MC(c,i)$ from which one can conclude 
that the value of the counter \ls$c$ is at least \ls$i$, analogously to 
how we use the logical capability \ls$witnessed (fun c -> i <= c)$ 
to reason about monotonic counters in \autoref{sec:mst}.

This PCM-based model of monotonic counters is however somewhat 
imprecise, in that it only supports reasoning about monotonic counters 
using stable predicates, rather than freely mixing stable predicates with 
precise reasoning about the concrete value of the counter. In particular, 
after incrementing the counter \ls$c$, one can prove \ls$MC(c,i + 1)$ 
but not that the new value of the counter is exactly \ls$1$ 
greater than its previous value. In order to reason using both precise 
values and stable predicates, following discussions with an author of 
Iris, it seems necessary to move to a richer PCM that encodes a finer 
model of permissions, supporting both duplicable, stable observations 
as well as exclusively owned, precise assertions about the same resource, 
such as a monotonic counter. As such, even simple uses of monotonic 
reasoning in PCM-based logics seem to require some sophisticated encodings, 
even if staying in their sequential fragment. In contrast, we value the 
simplicity of our classical Hoare logic based verification model, 
recognizing that while PCMs are more general, our model is easy to 
use, benefits from SMT automation, and scales well.

\vspace{-0.025cm}

\paragraph{State-Machine Based Separation Logics.}
Some concurrent separation logics, such as CAP~\citep{Dinsdale-Young:2010} 
and its successors, use state machines to control 
how the program state is allowed to evolve.
For instance, \citet{SergeyWT18} recently devised a variant
of Hoare Type Theory for distributed protocols, that includes
primitive support for distributed ghost state governed by state
machines.
This is analogous to our use of preorders in the monotonic-state
monad: in \autoref{sec:protocol_iface} we show how distributed state
and state machines can be modeled using \ls$MST$.
However, all separation
logics that employ state machines of which we are aware have explicit
side-conditions on their reasoning rules requiring that all employed 
predicates are stable with respect to taking transitions in the state 
machine, ruling out mixing monotonic reasoning about stable predicates
with precise reasoning about non-stable ones when concurrent state
updates are not a concern.

% \da{something about future work re concurrency?}
%-- CH: already mentioned in future work not here

%\emph{The Iris program logic:} \citet{JungSSSTBD15} define Iris, a
%framework for higher-order, concurrent separation logic. The logic
%incorporates the use of state transition systems (STSs) that prescribe
%the manner in which state is allowed to evolve. Given
%its extreme generality, it seems likely that our abstract variant of
%the monotonic-state monad could be encoded in their
%system. Indeed, \citeauthor{JungSSSTBD15} claim
%that~\citepos{PilkiewiczP11} system can be encoded in Iris, suggesting
%further that ours may be encodable as well, although we have yet to
%uncover such an encoding. More broadly, our witness/recall approach
%may be useful in the concurrent setting that Iris targets, where the
%preorder dictates how the state may be manipulated by each thread,
%thereby controlling interference. Exploring this direction would be
%interesting future work.

\vspace{-0.025cm}

\paragraph{Reasoning About Object-Oriented Programs.}
Monotonicity is also of importance in the 
verification of object-oriented programs, where reflexive-transitive  
two-state invariants \citep{LeinoS07,CohenMST10} and update guards 
\citep{BarnettN04,PolikarpovaTFM14} are used to constrain how 
one is allowed to update an object, enabling one to 
establish that any consistent object depending on a consistent 
object \ls$O$ remains consistent after updates to \ls$O$. 
Also, similarly to our use of \ls$witness$ and \ls$recall$ to 
simplify the verification of stateful sequential programs, these works 
simplify the verification of object-oriented programs further by allowing
an object \ls$O$ to be explicitly marked as \emph{packed}, asserting that 
\ls$O$ is consistent and indicating that it remains so irrespective of updates 
to unrelated objects; and as \emph{unpacked}, indicating that \ls$O$ is 
susceptible to consistency-breaking updates. As such, the use of packing and 
unpacking in these works is also similar to our use of snapshots to temporarily 
escape a given preorder in \autoref{sec:mst}.

\vspace{-0.025cm}

\paragraph{Other Related Works.} For example, \citepos{BengtsonBFGM11} RCF calculus
takes as primitive a monotonic log of formulas to represent the
distributed state of a protocol; and 
\citepos{ChajedCCKZZ17} Crash
Hoare Logic is designed for reasoning about program correctness in the presence of
failure and recovery. We expect the latter could be used to reason about
the Ariadne protocol (\autoref{sec:case-studies.ariadne}),
although, lacking monotonic state, we expect the proof would have to
be conducted using stateful invariants.

\ch{Francois Dupressoir reminded me that in his PhD thesis
  \cite{Dupressoir13} he also used monotonicity in VCC, with some of
  the monotonicity obligations discharged Coq. VCC seems to support
  something called ``two-state invariants'' for its concurrency
  verification.}
\ch{``The term invariant encompasses two-state predicates for the
  before and after states of a state transition.  In this way,
  invariants serve as the rely conditions in a form of rely-guarantee
  reasoning (see \citep{Jones81}, an early formulation of this concept).''}

\ch{Type-driven Development of Concurrent Communicating Systems, Edwin Brady
  To appear in Computer Science Journal 2017.
  \url{https://eb.host.cs.st-andrews.ac.uk/writings/tdd-conc.pdf}
  State transition systems in the types, for instance session types.
  Related to Ilya's work~\citet{SergeyWT18}}

\ch{Derek brought up one more of his papers\cite{DreyerNRB10} where he
  uses a modal operator to lift a predicate wrt a certain kind of
  state extensions. There is no notion of a user-defined preorder or
  anything like that, so the relation seems rather weak.}

\section{Conclusion and future work}
\label{sec:conclusion}

In this paper, we have provided a practical way to ease the
verification of programs whose state evolves monotonically.
In summary, the main distinctive contributions of our work are:

\begin{itemize}
\item A new, simple, core design for reasoning about stable predicates on
monotonic global state, namely, the \emph{monotonic-state monad} \ls$MST$,
together with its \ls$witness$ and \ls$recall$ actions.

\item A demonstration that despite its simplicity, \ls$MST$ is general
enough to account for diverse forms of monotonic state, capturing both
\emph{language-level} (e.g., a memory model with typed references) and
\emph{application-level} (e.g., ghost distributed state) monotonicity.
% \ch{We can't use invariants to describe what we do, so let's find
%   another word here and elsewhere. As Cedric commented on pg 1,
%   temporal invariant seems like an oxymoron.}
% \item Whereas prior systems rely on treating effectful
% computations abstractly to enforce monotonicity, we appear to be the
% first to also accommodate reasoning about monotonic programs in terms
% of their underlying pure functions, enabling support for extrinsic,
% relational proofs and extensibility of monadic actions.
% CH: this is just repeating what was said above,
%     emphasizing the most speculative part of the paper

\item An application of \ls$MST$ to two substantial case studies:
we provide machine-checked proofs of a secure file-transfer application,
and the recent Ariadne state-continuity protocol.

\item A formalization of the monotonic-state monad
  in the context of a higher-order, dependently typed calculus,
  showing how to support both \emph{intrinsic} proofs when treating \ls$MST$ abstractly,
  and \emph{extrinsic} proofs when carefully revealing the underlying representation of \ls$MST$.
\end{itemize}

These contributions are focused on controlling the
temporal evolution of state in sequential programs, leaving spatial
properties to be handled by existing techniques. As discussed in
\autoref{sec:relatedwork}, some strands of
related work profitably combine reasoning about the temporal evolution
of state with concurrency and various forms of aliasing
control, an area that we hope to investigate in the future.
\da{I feel this comment needs to be made more precise.}
We are also currently working on extending \fstar with indexed effects,
so as to make programming and reasoning about \ls$MST$ computations more
convenient, allowing \ls$MST$ to be indexed in \fstar by
state types, preorders, and booleans.
In addition, while our monotonic references support programming with
multiple \emph{local} preorders, we plan to investigate allowing different
parts of the programs to also modularly make use of different \emph{global}
preorders, \EG by using a graded-monads-style ordered monoidal
structure \citep{Katsumata14} on the preorder index of \ls$MST$. 
Finally, we are also working on making formal the connection 
between the intuitive holds-everywhere-in-the-future reading of 
our \ls$witnessed$ logical capability and the standard $\square$
necessity modality of  modal logics. In particular, we conjecture
that if one were to use a hybrid modal logic in which one can bind
the ``current'' modal world in the assertions, such as \citepos{reed09}
Hybrid LF, $\witnessed{s.\varphi}$ could be expressed 
as the combination of the $\square$ modality and this hybrid binding 
operation, i.e., as $\square (\downarrow{\!}s.\varphi)$. 
We hope that such a modal treatment of \ls$witnessed$ will
also address the duplication of specifications issue
discussed in \autoref{sec:discussion}.
%Namely, we hope to  be 
%able to use the other standard hybrid operation, $\varphi$~\ls$@$~$\sigma$, 
%that locally fixes the ``current'' modal world to be $\sigma$ in $\varphi$, 
%to derive $\varphi[\sigma/s]$ from $\witnessed{s.\varphi}$.

\iffull
\da{Perhaps this was too detailed promise for related 
  work section, let's discuss.}
We expect that using such hybrid modal logic based treatment 
of $\witnessed{s.\varphi}$ will also help us to decrease the 
burden put on the programmer when reifying \ls$MST$ 
computations. Namely, we envisage that it will help us to reify 
\ls$MST$ computations  
without having to introduce an intersection typing discipline  
as in \autoref{sect:reify} which inevitably leads to one having 
to equip most \ls$MST$ computations with two simultaneous specifications, 
one based on monotonicity and using state-independent $\witnessed{s.\varphi}$ 
propositions, and the other using the predicates $s.\varphi$ as stateful invariants.
Instead, we expect that we will be able to use 
the hybrid operation $\varphi$~\ls$@$~$\sigma$ (that locally fixes the 
``current'' modal world to be $\sigma$ in $\varphi$) in the typing of \ls$reify$, 
e.g.,  to turn a precondition $\witnessed{s.\varphi}$ of an  
\ls$MST$ computation to a corresponding precondition $\varphi[\sigma/s]$ 
of a pure state-passing function that requires $\varphi$ to be true 
in the initial state $\sigma$, and analogously for postconditions and final states.
\fi

%Going forward, we also plan to investigate allowing
%different parts of the program to make use of different
%preorders, \EG by requiring the preorder-valued index
%of such \ls$MST$ to come with a
%graded-monads-style ordered monoidal structure \citep{Katsumata14}.

%% %
%% Going forward, we also plan to investigate ways of extending our proposed
%% verification style with support for multiple preorders and their
%% combinations, \EG by equipping our preorders with a
%% graded-monads-style ordered monoidal structure \citep{Katsumata14}.

%In addition, we are also working on extending \fstar with indexed effects,
%so as to make programming and reasoning about \ls$MST$ computations more
%convenient, allowing \ls$MST$ to be indexed by \ls$state$ and \ls$rel$.

%; currently one has to define a copy of \ls$MST$ for
%each concrete type \ls$state$ and preorder \ls$rel$. This inconvenience could
%be avoided if \ls$state$ and \ls$rel$ would be instantiable indices of \ls$MST$.
%\ch{Can we rewrite this last part to trash our current work less?}

%\subsection{Future work}
%
%\begin{itemize}
%\item Indexed effects, \EG for supporting multiple pre-orders
%\item More structure on preorders
%  \begin{itemize}
%  \item subtyping preorders
%  \item combining preorders (in graded monads style)
%  \end{itemize}
%\item Relation to modal logic
%\item Longer term: Concurrency (hope reviewers were into that)
%\end{itemize}

\ifanon\else
%% Acknowledgments
\begin{acks}                            %% acks environment is optional
                                        %% contents suppressed with 'anonymous'
  %% Commands \grantsponsor{<sponsorID>}{<name>}{<url>} and
  %% \grantnum[<url>]{<sponsorID>}{<number>} should be used to
  %% acknowledge financial support and will be used by metadata
  %% extraction tools.
  
We are grateful to the anonymous reviewers and Nadia Polikarpova for
suggesting many ways in which to improve our work. We also thank David Baelde, Derek
Dreyer, Gordon Plotkin, Fran\c{c}ois Pottier, and the Project Everest
team for many useful discussions.  Danel Ahman's work was done, in
part, during an internship at Microsoft Research. The work of Danel
Ahman, C\u{a}t\u{a}lin Hri\c{t}cu, and Kenji Maillard is supported, in
part, by the
\grantsponsor{1}{European Research Council}{https://erc.europa.eu/}
under ERC Starting Grant SECOMP (\grantnum{1}{715753}).
\end{acks}
\fi

\bibliographystyle{abbrvnaturl}
\bibliography{fstar}

\end{document}